\documentclass[11pt,prd,aps,amsfonts,eqsecnum,superscriptaddress,nofootinbib,longbibliography,notitlepage]{revtex4-1}

\usepackage{lmodern}    
\usepackage{microtype}  
\usepackage[english]{babel}
\usepackage{setspace}

\usepackage{fancyhdr}

\usepackage{caption}
\usepackage{graphicx}
\usepackage{xcolor}
\usepackage{subcaption}
\usepackage{rotating}
\usepackage{amsmath, amssymb, graphics, amsthm, isomath}
\usepackage{physics}
\usepackage{verbatim}
\usepackage{tikz}
\usepackage{tikz-cd}
\usepackage{circuitikz}
\usetikzlibrary{patterns}
\usetikzlibrary{through,hobby}
\usetikzlibrary{decorations.markings}
\usetikzlibrary{fadings,shadings}
\usetikzlibrary{plotmarks}
\usetikzlibrary{positioning}
\usetikzlibrary{decorations,arrows}
\usetikzlibrary{decorations.pathreplacing}
\usetikzlibrary{chains, scopes, positioning, backgrounds, shapes, fit, shadows, calc, arrows.meta, decorations.pathreplacing}
\usepackage{blkarray}
\usepackage{svg}
\usepackage{mathrsfs}
\usepackage{algorithm}
\usepackage{algpseudocode}

\usepackage{xcolor}
\definecolor{mycitecolor}{rgb}{0.0, 0.45, 0.85}   

\usepackage[
  colorlinks=true,
  linkcolor=mycitecolor,
  citecolor=mycitecolor,
  urlcolor=mycitecolor,
  hyperindex=true,
  linktocpage=true
]{hyperref}

\usepackage[capitalise,compress]{cleveref}
\usepackage{tcolorbox}

\newtcolorbox{shadedtheorem}{
  colback=gray!15,    
  colframe=white,     
  boxrule=0pt,        
  arc=0pt,            
  left=5pt,           
  right=5pt,
  top=5pt,
  bottom=5pt
}

\newtheoremstyle{upright-hang}
  {6pt}   
  {6pt}   
  {\normalfont\hangindent=1.5em\hangafter=1\relax} 
  {0pt}   
  {\bfseries} 
  {.}     
  {0.5em} 
  {}      

\theoremstyle{upright-hang}
\newtheorem{thm}{Theorem}
\numberwithin{thm}{section}
\newtheorem{cor}[thm]{Corollary}

\newtheorem{lem}[thm]{Lemma}

\newtheorem{prop}[thm]{Proposition}

\newtheorem{defn}[thm]{Definition}

\renewcommand{\thesection}{\arabic{section}}
\renewcommand{\thesubsection}{\thesection.\arabic{subsection}}
\renewcommand{\thesubsubsection}{\thesubsection.\arabic{subsubsection}}

\makeatletter
\renewcommand{\p@subsection}{}
\renewcommand{\p@subsubsection}{}
\makeatother

\makeatletter
\fancypagestyle{plain}{%
  \fancyhf{}%
  \fancyfoot[C]{\thepage}%
}
\fancypagestyle{revtexfooter}{%
  \fancyhf{}%
  \fancyfoot[C]{\thepage}%
}
\makeatother
\AtBeginDocument{\setstretch{1.25}\pagestyle{revtexfooter}}
\makeatletter
\renewcommand\frontmatter@title@format{\LARGE\bfseries\centering\parskip\z@skip}

\def\frontmatter@affiliationfont{\normalfont\selectfont}
\newcommand{\AndySectionStar}[1]{%
  \par\addpenalty\@secpenalty
  \addvspace{3.5ex \@plus 1ex \@minus .2ex}%
  \noindent{\normalfont\Large\bfseries #1\par}%
  \nobreak\vspace{2.3ex \@plus .2ex}%
  \@afterheading}
\newcommand{\AndySubsectionStar}[1]{%
  \par\addpenalty\@secpenalty
  \addvspace{3.25ex \@plus 1ex \@minus .2ex}%
  \noindent{\normalfont\large\bfseries #1\par}%
  \nobreak\vspace{1.5ex \@plus .2ex}%
  \@afterheading}
\newcommand{\AndySubsubsectionStar}[1]{%
  \par\addpenalty\@secpenalty
  \addvspace{3.25ex \@plus 1ex \@minus .2ex}%
  \noindent{\normalfont\normalsize\bfseries #1\par}%
  \nobreak\vspace{1.5ex \@plus .2ex}%
  \@afterheading}
\renewcommand\@seccntformat[1]{\csname the#1\endcsname\quad}
\def\@hangfrom@section#1#2#3{\@hangfrom{#1#2}#3}
\def\@hangfroms@section#1#2{#1#2}
\renewcommand\section{\@ifstar{\AndySectionStar}{\@startsection {section}{1}{\z@}%
  {-3.5ex \@plus -1ex \@minus -.2ex}%
  {2.3ex \@plus .2ex}%
  {\normalfont\Large\bfseries}}}
\renewcommand\subsection{\@ifstar{\AndySubsectionStar}{\@startsection{subsection}{2}{-\parindent}%
  {-3.25ex\@plus -1ex \@minus -.2ex}%
  {1.5ex \@plus .2ex}%
  {\normalfont\large\bfseries}}}
\renewcommand\subsubsection{\@ifstar{\AndySubsubsectionStar}{\@startsection{subsubsection}{3}{-\parindent}%
  {-3.25ex\@plus -1ex \@minus -.2ex}%
  {1.5ex \@plus .2ex}%
  {\normalfont\normalsize\bfseries}}}
\renewcommand*\l@section[2]{%
  \ifnum \c@tocdepth >\z@
    \addpenalty\@secpenalty
    \addvspace{0.75em \@plus\p@}%
    \begingroup
      \parindent\z@
      \rightskip\@pnumwidth
      \parfillskip-\@pnumwidth
      \leavevmode\bfseries
      \setlength\@tempdima{2.3em}%
      \advance\leftskip\@tempdima
      \hskip-\leftskip
      #1\nobreak\hfil\nobreak\hb@xt@\@pnumwidth{\hss #2}\par
    \endgroup
  \fi}
\renewcommand*\l@subsection{\@dottedtocline{2}{1.5em}{3.2em}}
\renewcommand*\l@subsubsection{\@dottedtocline{3}{3.8em}{4.1em}}
\makeatother

\usepackage{mathtools}
\tikzstyle{densely dashed}= [dash pattern=on 4pt off 3pt]

\usepackage{dsfont}

\allowdisplaybreaks

\begin{document}

\title{Spontaneous symmetry breaking at any temperature in a local one-dimensional model}

\author{Andrew Lucas}
\email{andrew.j.lucas@colorado.edu}
\affiliation{Department of Physics and Center for Theory of Quantum Matter, University of Colorado, Boulder, CO 80309, USA}

\begin{abstract} We present a one-dimensional classical statistical physics model on an infinite lattice with nearest-neighbor interactions and local state space $\mathbb{Z}^2$, with a ``helical" symmetry group $\mathbb{Z}^2\rtimes \mathbb{Z}$ mixing flavor symmetry with spatial translation.   For any $0<\beta<\infty$, we construct an uncountable infinity of DLR (Gibbs) states parameterized by $\mathbb{R}^2$, with a free $\mathbb{Z}^2$ flavor symmetry group action; as such, we interpret the model as exhibiting spontaneous symmetry breaking at any temperature.  The most general DLR state we construct is interpreted as an infinite sequence of localized domain walls separating a pair of asymptotic ``symmetry sectors".   Although any normalizable Gibbs state necessarily breaks a countably infinite freely-acting symmetry group, a natural regularization of the problem to (arbitrarily) large but finite systems with periodic boundary conditions has measure concentration in disjoint and far-separated symmetry-related clusters in the (now unique) Gibbs state, which justifies our interpretation of spontaneous symmetry breaking on the infinite line.  A modification of the model with local state space $\mathbb{R}^2$ must break a continuous helical symmetry at any temperature. This is not a contradiction with the Mermin-Wagner Theorem, because this continuous helical symmetry is non-compact and does not commute with translation.
\end{abstract}

\maketitle

\tableofcontents

\section{Introduction}
There has been recent interest in understanding the possible phase diagrams of statistical mechanics models with (countably) infinite on-site state spaces.  For example, \cite{Han:2025eiw} demonstrated simple classical lattice models in two spatial dimensions which spontaneously break a discrete symmetry at arbitrarily high temperature; see \cite{Huang:2025gvi,Tsao:2026hle,Hsin:2026ckp,Andriolo:2026udg,mehta2026} for other recent work on similar models.   In turn, this work was inspired by a long history of searching for (conformal) quantum field theories which spontaneously break a symmetry at every positive temperature \cite{Weinberg:1974highT,Chai:2020thermal,Chai:2020alltemperatures,Chai:2022persistentdiscrete,Chai:2022persistentcontinuous,Komargodski:2024zmt,Hawashin:2025ikp}.

This paper is about using countably infinite state spaces to break another common piece of lore in statistical mechanics: the absence of discrete spontaneous symmetry breaking in one spatial dimension \cite{vanhove}.  There is a rich history of trying to break this lore, and find thermal phase transitions in one dimension \cite{Poland:1966helixcoil,Kittel:1969zipper,Cuesta:2004onedimensional}.  Usually in these constructions, the state space has hard constraints and is not a direct product.  In this paper, we are interested in constructing thermal Gibbs measures for models on the infinite chain without such hard constraints.  Nevertheless, we wish to discover spontaneous symmetry breaking.  Mathematically, this can be done using the DLR formalism \cite{Dobrushin:1968randomfield,LanfordRuelle:1969observables,barrysimon,ny2007introductiongeneralizedgibbsmeasures,Georgii:2011gibbs}, which may be most famous for its use in establishing low-temperature ferromagnetism in the 2d Ising model on the infinite plane.  To achieve this goal, our model must have a (countably) infinite on-site state space, to avoid the standard rigorous no-go theorem forbidding thermal spontaneous symmetry breaking in any 1d model with finite/compact on-site state space and bounded local Hamiltonian \cite{barrysimon, Georgii:2011gibbs}.\footnote{It is possible to have thermal long-range order in a 1d Ising model with power-law interactions \cite{thouless,dyson}.  Therefore, the assumption of strict locality is important.}

We will succeed and build a model with a countably infinite on-site state space, with nearest-neighbor interactions in a one-dimensional chain, and spontaneous symmetry breaking at every  finite and strictly positive temperature. The model we introduce, which we call the $\mathrm{SL}(2,\mathbb{Z})$ \textbf{helical code}, has an on-site state space $\mathbb{Z}^2$ along with translation-invariant nearest-neighbor interactions.  It is invariant under a ``helical" $\mathbb{Z}^2\rtimes \mathbb{Z}$ symmetry which mixes an on-site flavor $\mathbb{Z}^2$ translation symmetry with spatial translation $\mathbb{Z}$.   We will construct an uncountably infinite number of thermal probability measures parameterized by $\mathbb{R}^2$, interpreted as a set of ``imperfect" domain wall states between a left and right asymptotic charge sector.  A somewhat related phenomenon in a very different continuous state space model was found in \cite{benfatto}.

This model was directly inspired by recent work on classical and quantum low-density parity-check codes in the Lee metric \cite{leecode1}: it is a very natural modification of the ``helical repetition code" from this paper which makes sense on an infinite lattice.   A similar model is also found in \cite{HuWatanabe2023clock}.

Our model is also very similar in spirit to a recent literature on ``modulated" symmetries in quantum condensed matter \cite{sala-PRL,Pace:2024tgk}, in particular the ``quantum breakdown" model described in \cite{lian1,lian2,lian3}.  In contrast to these papers from condensed matter physics, however, the model we describe here is classical, and we study finite-temperature symmetry breaking rather than quantum ground states.  But we expect that the mathematical methods we develop here may be interesting in constructing interesting (but pathological) models of quantum phases too; we conjecture some of what is possible in quantum helical codes in the outlook.

Although the idea that a local 1d model can spontaneously break a symmetry may seem quite surprising, we stress that it is not a new idea.  For example, non-compact massive bosons have DLR states associated with simple fluctuations around exponentially diverging background solutions \cite{benfatto,Dobrushin1980Gaussian,Friedli:2017lattice}.  More directly related,  \cite{Niedermaier:2003cq} focuses on SSB of the non-amenable non-compact group $\mathrm{SO}(1,2)$, and is most similar to our work.   Still, there are some important advantages of the helical code model, relative to \cite{Niedermaier:2003cq}.  We are able to construct conventional DLR (Gibbs) states, and give a more transparent physical interpretation for the distinct phases. As we are breaking a discrete symmetry (except for some brief remarks in Section \ref{sec:continuous}), we also uncover a potentially surprising phenomenon where -- on an infinite chain -- uncountably many Gibbs states, which must not all be related by a discrete symmetry, are found.  Lastly, the non-amenability of $\mathrm{SO}(1,2)$ is seen to be unimportant to the possibility of SSB; instead, the exponential divergence between on-site configurations in different symmetry sectors will be critical.  So, in light of these remarks, we think it is worth revisiting the question of 1d thermal SSB.

\section{Informal summary of results on the discrete model}\label{sec:informal}

This section is written for an physics audience, without mathematical precision, and summarizes the model and the intuition behind our main results, while the rest of the paper is more formal, until Section \ref{sec:continuous}.

\subsection{Overview of the model}

We study a statistical mechanics problem where degrees of freedom live on an infinite line (sites parameterized by $\mathbb{Z}$).   The degree of freedom on each site is a two-component integer-valued vector $x_i \in \mathbb{Z}^2$.   Pick any positive-definite symmetric hyperbolic matrix $A\in \mathrm{SL}(2,\mathbb{Z})$ -- namely, $A$ is an invertible $2\times 2$ matrix with integer coefficients and determinant 1 and one real eigenvalue $\lambda>1$ (the other is $\lambda^{-1}<1$); $A^{-1}$ is also symmetric and hyperbolic.   Then define the Hamiltonian (informally) \begin{equation}
    H = \sum_{n\in \mathbb{Z}} \left\lVert x_n - A x_{n+1} \right\rVert_1, \label{eq:H-informal}
\end{equation}
where the 1-norm $\lVert (x,y)\rVert_1 = |x|+|y|$.   

It is easy to see that this Hamiltonian is invariant under two symmetries:  \textbf{translation} \begin{equation}
    T\cdot (\ldots, x_n ,\ldots ) \rightarrow (\ldots, x_{n+1},\ldots),
\end{equation}
and a \textbf{helical flavor symmetry}: for any $a\in \mathbb{Z}^2$, \begin{equation}
   U_a\cdot  (\ldots, x_n ,\ldots) \rightarrow (\ldots, x_n + A^{-n}a, \ldots )
\end{equation}
These two group actions do not commute; $T$ and $U$ generate the helical symmetry $\mathbb{Z}^2 \rtimes \mathbb{Z}$.

Our goal is to understand the Gibbs state of this model: informally the probability distribution \begin{equation}
    \mu_\beta(x) \propto \mathrm{e}^{-\beta H(x)}. \label{eq:gibbs-informal}
\end{equation} On the infinite line this object does not exist, because \eqref{eq:H-informal} does not really exist.  The formalism of DLR states \cite{Dobrushin:1968randomfield,LanfordRuelle:1969observables,barrysimon,ny2007introductiongeneralizedgibbsmeasures,Georgii:2011gibbs} will be used to define all of this more precisely in later sections.  

In ordinary statistical mechanics, one expects that at \emph{any} $0<\beta<\infty$ there is a \textbf{unique} DLR state in a local one-dimensional model.  This is the usual intuition that there are no non-trivial phases at non-zero temperature in one dimension.  This statement can be proven rigorously for statistical mechanics models with finite-dimensional on-site state spaces with bounded and extensive Hamiltonians.  In contrast, in two or more spatial dimensions, there are often multiple DLR states at low temperature.  For example in the Ising model, there is a separate DLR state for the mostly spin-up vs. mostly spin-down symmetry sectors of the spontaneously broken $\mathbb{Z}_2$ symmetry.  A physical interpretation of this is that we can construct the DLR states by pinning boundary conditions on a large $L\times L$ square to be either all up or all down.  Even near the center of the square, the local spins feel the influence of the boundary conditions in the long-range ordered phase.     One proves  that there are two limiting probability measures $\mu^\uparrow_\beta $ and $\mu^\downarrow_\beta $, each of which \textbf{locally} look like \eqref{eq:gibbs-informal}, but with a global ``memory" of the boundary conditions.

Let us give two (somewhat related) arguments why the countably infinite state space in our model might break this intuition.  Firstly, as we make precise in Theorem \ref{thm:sound}, if we do pin boundary conditions in a finite-size domain, the model is extremely ``rigid".  For example, if we pin boundaries $x_{\pm L} = 0$, then the energy cost of a configuration $(x_{-L+1},\ldots, x_{L+1})$ obeys \begin{equation}
    H \ge \eta \sum_{j=-L+1}^{L-1} \lVert x_j\rVert_1 \label{eq:intuitive-sound}
\end{equation}for some fixed $\eta>0$.  This is suggestive that the model is \emph{pinned} to a specific symmetry sector where $x_n\approx 0$.  This ought to be surprising at first glance!  There is also a symmetry sector where $x_n\approx A^{-n}a$ for any $a\in \mathbb{Z}^2$ -- why do we not  see fluctuating domains between these different sectors?  In a usual 1d system, the reason these fluctuating domains dominate (and lead to a single DLR state) is that the energy cost of a domain wall is finite: $E_{\text{dw}} \sim 1$, while the entropy of where to place the domain wall is $S_{\text{dw}}\sim \log L$.  The free energy cost is therefore \begin{equation}
    F_{\text{dw}} \sim E_{\text{dw}} - \beta^{-1}S_{\text{dw}} \lesssim -\log L,
\end{equation}
and it always pays to introduce domain walls.  In contrast, \eqref{eq:intuitive-sound} suggests that inserting a domain wall of one symmetry sector inside of another costs $E_{\text{dw}} \sim \exp(L)$, due to the exponentially diverging trajectories ($A$ is hyperbolic):  this energy cost can \textbf{vastly} outweigh the entropic gain.  
Clearly this is only possible in a local model if we rely on the infinite state space.   What is non-trivial is that the Hamiltonian looks ``mild": changing any single $x_j$ to a nearby site in $\mathbb{Z}^2$ has a bounded energy cost.  The fact that the helical symmetry penalizes nucleating a domain wall in an ambient phase very aggressively is what permits spontaneous symmetry breaking.

There is a second motivation for the existence of multiple Gibbs states, which is that it was established long ago in the mathematics literature that Markov random fields (a.k.a. local one-dimensional Hamiltonians) can have non-unique Gibbs states in 1d \cite{Spitzer:1975phase,Kesten:1976countable,Cox:1977phase,Zachary:1983countable}.  Again, these models rely on countably infinite local state spaces.  An intuitive justification for these results (which also helsp to explain our findings) is as follows:  \textbf{after conditioning} a Markov chain on both an initial and final condition, the path that the Markov chain followed can become very rigid.  In the context of our model, we will see extremely explicitly how this intuition is realized, in an example where this rigidity can be understood in the physicist's language of spontaneous symmetry breaking.  We will justify this in Section \ref{sec:finite-size} by showing that in finite-sized systems, the unique Gibbs state concentrates onto distinct clusters in state space related by symmetry, similar to the case of an Ising ferromagnet in a large (but finite) $L\times L$ square lattice, described above.

\subsection{Structure of the DLR states}\label{sec:informal-structure}

Combining these two intuitions, let us now sketch our main result:  there is a DLR state $\mu^c_\beta$ for every $c\in \mathbb{R}^2$.  $\mu^c_\beta$ is equivariant (transforms naturally) under the helical symmetry.  In a nutshell, each $\mu^c_\beta$ describes fluctuations around a local solution of the form \begin{equation}
 x_n \approx  A^{-n} c.
\end{equation}
To better understand the implications of this, suppose that $Au=\lambda u$ and $Av = \lambda^{-1}v$ are the two orthogonal eigenvectors of $A$;  writing $c = c_u u + c_vv$ we see that \begin{equation}
    x_n \approx c_u \lambda^{-n} u + c_v \lambda^n v. \label{eq:informal-xn-approx}
\end{equation}
As $n\rightarrow -\infty$, the solution diverges exponentially (if $c_u\ne 0$) and approaches one solution labeled by $c_u$; as $n \rightarrow +\infty $, the solution again diverges exponentially if $c_v \ne 0$.  In a nutshell, the typical solution describes a kind of domain wall between the $c_u$ charge sector on the left and the $c_v$ charge sector on the right.  Only very special pairs of $(c_u,c_v)$ are related to the simplest $(0,0)$  choice by the discrete helical symmetry group action; typical choices of $(c_u,c_v)$ are therefore \emph{unrelated} to each other by any symmetry.  This is somewhat reminiscent of the literature on spin glasses, where there are many low-energy sectors unrelated by any broken symmetry \cite{Mezard:1987spinglass}.  

However, one should \emph{not} interpret this phenomenon as ``glassy".  Instead, we can understand it as follows.   For simplicity in the remainder of the paragraph, we take $c_u=0$ and $c_v \ne 0$.  In this case, we have $x_n \approx0$ as $n\rightarrow -\infty$; in this case, the system looks like it is in an exact ``symmetry sector" -- an exact ground state of $H$ has $x_n=0$ for all $n$.   However, as we approach sufficiently large $n$ (depending on $c_v$ and $\lambda$), we see the solution begin to deviate and grow exponentially with increasing $n$.   For the sake of argument, suppose that \begin{equation}
    c_v = \frac{a_1}{\lambda^{m_1}} + \frac{a_2}{\lambda^{m_2}} + \frac{a_3}{\lambda^{m_3}} + \cdots
\end{equation}
where each $\lambda^{-1} \lesssim |a_i| \lesssim \lambda$, and each $m_i<m_{i+1}$ is an integer.  For illustrative purposes, suppose that there are large gaps between each $m_i$.   Then notice that for $n \ll m_1$, we will have $x_n\approx 0$, but once $m_1\lesssim n_1 \ll  m_2$, we will have a ``slip" to $x_n \approx a_1 \lambda^{n-m_1}\cdot v$.   The next such ``slip" will occur when $n\approx m_2$, and so on.  The infinite possible locations and magnitudes of such slips are all encoded in $c_v \in \mathbb{R}$.   

Of course, we can just as easily do the same analysis moving leftwards to $n\rightarrow -\infty$ -- that data is stored in $c_u$.  The combination of $(c_u,c_v) = c\in \mathbb{R}^2$ label all of the DLR states we have constructed; their existence is formally established by Theorem \ref{thm:exist}.

\eqref{eq:informal-xn-approx} is quantified by Corollary \ref{cor:deviation}.  As suggested above, the proof of this corollary ultimately stems from our estimate \eqref{eq:intuitive-sound}.  Even when $c$ is a generic real vector, the system simply ``knows" where to place all the slips, as in the discussion of the previous paragraph, even though the model only has nearest-neighbor interactions in 1d.  If the slips are in the wrong place, we incur an enormous energy penalty from \eqref{eq:intuitive-sound}.

It seems plausible to conjecture that these are all the extremal DLR states in this model.  But given the surprises with infinity in statistical mechanics, it is also plausible if this conjecture is incorrect.  The conjecture would be incorrect if one may choose sequences of pathological boundary conditions which do not diverge according to eigenvectors of $A$, but for which the resulting Gibbs states in the ``middle" of these increasingly large chain models nevertheless converge (to a distribution that is not a convex sum of the DLR states we already furnished).  We leave further analysis to future work.

\section{Formal summary of results on the discrete model}
\subsection{The model}
We now turn to a more formal analysis of these models with helical symmetry, beginning with setting up some definitions and conventions for the rest of the paper.  
\begin{defn}[Helical $\mathrm{SL}(2,\mathbb{Z})$ symmetry]
    Consider the state space $\Omega = (\mathbb{Z}^2)^{\mathbb{Z}}$, whose elements are labeled by vectors with components $x_n \in \mathbb{Z}^2$ ($n\in \mathbb{Z}$).    Pick a symmetric hyperbolic\footnote{$A$ has two real eigenvalues, one inside and one outside the unit circle in $\mathbb{C}$.} positive matrix $A\in \mathrm{SL}(2,\mathbb{Z})$.  A helical $\mathbb{Z}^2\rtimes \mathbb{Z}$ group action on $\Omega$ is generated by \begin{subequations}\label{eq:helical-group-action}\begin{align}
       U_a \cdot  x_n &:= x_n + A^{-n}a, \\
       T \cdot x_n &:= x_{n+1} 
    \end{align}\end{subequations}
    for all $n$, and any $a\in \mathbb{Z}^2$.  Note that $U_a U_b = U_{a+b}$, while 
    \begin{equation}
    T \cdot U_a = U_{A^{-1} a} \cdot T, \label{eq:helicalcommutation}
\end{equation}  A function has a helical symmetry if it is invariant under \eqref{eq:helical-group-action}.
\end{defn}

\noindent In the discussion that follows, it will be useful to introduce some notation related to the choice of hyperbolic matrix $A$, which we collect below: 
\begin{defn}[Eigenvectors of $A$] \label{def:eigenvectors}
Hyperbolic positive-definite symmetric $A \in \mathrm{SL}(2,\mathbb{Z})$ has real eigenvalues $\lambda$ and $\lambda^{-1}$ for some $\lambda > 1$.  We choose orthonormal eigenvectors such that: \begin{subequations} \label{eq:A-eigenvectors}
    \begin{align}
        A u := \lambda u, \\
        Av := \lambda^{-1}v,
    \end{align}
\end{subequations}and $ u^{\mathsf{T}} u  =v^{\mathsf{T}}v=1$.
    
\end{defn}

\begin{prop}
    $\lambda$ and $\lambda^{-1}$ are irrational; moreover, $u,v \notin \mathbb{Q}^2$, and $u\cdot v=0$.
\end{prop}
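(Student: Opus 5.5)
The plan is to work directly with the characteristic polynomial. Write $A=\begin{pmatrix} p & q\\ q & r\end{pmatrix}$ with $p,q,r\in\mathbb{Z}$ and $pr-q^2=1$, and set $t=\operatorname{tr}A=p+r\in\mathbb{Z}$. The eigenvalues are the roots of $z^2-tz+1=0$, namely $z=\tfrac12\bigl(t\pm\sqrt{t^2-4}\bigr)$. Positive-definiteness gives $t>0$, and hyperbolicity (real eigenvalues $\lambda\neq\lambda^{-1}$) gives $t^2>4$, so $t\ge 3$.

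\textbf{Irrationality of the eigenvalues.} The key observation is that $t^2-4$ is never a perfect square when $t\ge3$. Indeed, $(t-1)^2=t^2-2t+1<t^2-4<t^2$ whenever $2t>5$, so $t^2-4$ lies strictly between consecutive squares. Hence $\sqrt{t^2-4}$ is irrational, and therefore so are $\lambda$ and $\lambda^{-1}$. One could also argue via the rational root theorem: the only possible rational roots of the monic integer polynomial are $\pm1$, and these have $|t|=2$.

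\textbf{The eigenvectors are not rational.} Suppose $u$ were a nonzero multiple of some $w\in\mathbb{Q}^2$. Then $Aw=\lambda w$, and since $Aw\in\mathbb{Q}^2$, comparing any nonzero component of $w$ would give $\lambda\in\mathbb{Q}$, a contradiction. The same argument applies to $v$. In particular the normalized $u,v$ are not in $\mathbb{Q}^2$. (Normalization could in principle make an irrational direction rational only if the direction were rational, which we have just excluded.) A minor point to record is that the statement should be read as saying $u$ is not proportional to any rational vector, which is the stronger claim.

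\textbf{Orthogonality.} This is the standard spectral fact for a symmetric matrix. We have
\[
\lambda\, u\cdot v = (Au)\cdot v = u\cdot (Av) = \lambda^{-1}\, u\cdot v,
\]
and since $\lambda\ne\lambda^{-1}$, it follows that $u\cdot v=0$.

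None of these steps is a real obstacle. The only place needing care is the squeeze showing that $t^2-4$ is not a perfect square, which is where the hypothesis $t\ge3$ is used.
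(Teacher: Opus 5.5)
Your proposal is correct and follows essentially the same route as the paper: you get the eigenvalues from the trace, note that $\sqrt{t^2-4}$ is irrational for an integer trace with $|t|\ge 3$, argue that a rational eigenvector would force a rational eigenvalue, and obtain orthogonality from the symmetry of $A$. Your squeeze $(t-1)^2<t^2-4<t^2$ also supplies the justification, which the paper only asserts, that $t^2-4$ is not a perfect square.
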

\begin{proof}
    Suppose that \begin{equation}
        A = \left(\begin{array}{cc} a &\ b \\ b &\ d \end{array}\right)
    \end{equation}
    with $\det(A) = ad-b^2=1$.  Since $A$ is symmetric, its eigenvectors are orthogonal: $u\cdot v = 0$.   By assumption $A$ has real eigenvalues, and since the determinant is the product and the sum is the trace, we know $a+d = \lambda + \lambda^{-1}$ which implies that \begin{equation}
        \lambda = \frac{a+d \pm \sqrt{(a+d)^2-4}}{2}.
    \end{equation}
    The fact that $A$ is hyperbolic means that $|a+d|>2$:  if $a+d\in \lbrace -1,0,1\rbrace$ then $\lambda$ is complex, while if $a+d\in \lbrace \pm 2\rbrace$ then both eigenvalues are real but  lie on the unit circle in $\mathbb{C}$.   If $|a+d|$ is an integer larger than 2, then $(a+d)^2 - 4$ is an integer that is not a perfect square, meaning that $\sqrt{(a+d)^2-4}$ is irrational.  \eqref{eq:A-eigenvectors} is then inconsistent unless at least one component of $u$ and $v$ is an irrational number (as both $A$ and $u$ would be $\mathbb{Q}$-valued, $\lambda$ would itself need to be rational).  
\end{proof}

We are now ready to formally introduce the helical linear code as a family of unbounded Hamiltonian functions defined on arbitrary subsets of the 1d chain $\mathbb{Z}$:

\begin{defn}[Helical linear code]\label{def:helicalcode}
    Let $\Lambda = \lbrace j_-, j_-+1,\ldots, j_+\rbrace  \subset \mathbb{Z}$ be a finite subset.  Define the local functions \begin{equation}
        h(x_j, x_{j+1}) := \lVert x_j - A x_{j+1}\rVert_1,
    \end{equation}which are manifestly invariant under \eqref{eq:helical-group-action}.  For any $\Lambda$, define $H_{\Lambda,b_-,b_+}:\Omega \rightarrow \mathbb{R}$ by \begin{equation} \label{eq:H-Lambda}
        H_{\Lambda,b_-,b_+}(x) := \sum_{j=j_-}^{j_+-1}h(x_j,x_{j+1})
    \end{equation}
    with $x_{j_-} = b_-$ and $x_{j_+}=b_+$ held fixed.  
    The helical symmetry acts equivariantly on $H_{\Lambda,b_-, b_+}$: \begin{equation}
     H_{\Lambda, b_-, b_+}(x) = H_{\Lambda, b_- + A^{-j_-} a, b_+ + A^{-j_+}a}(U_a\cdot x).
    \end{equation} 
\end{defn}
\noindent For the manipulations that follow, it is useful to define an inner product space $\mathrm{L}^2(\mathbb{Z}^2)$ equipped with the standard inner product \begin{equation}
    \langle a|a^\prime\rangle = \delta_{a,a^\prime}.
\end{equation}We will use the bra-ket notation for vectors on this auxiliary space, and we define the non-normalizable vector\footnote{$|1\rangle$ is not an element of $\mathrm{L}^2(\mathbb{Z}^2)$, but we can still use it to streamline some notation below as long as we only evaluate inner products with vectors whose 1-norm is finite.} \begin{equation}
    |1\rangle := \sum_{a \in \mathbb{Z}^2} |a\rangle .
\end{equation}    We further define a transfer matrix on $\mathcal{V}$: \begin{defn}[Transfer matrix]\label{def:transfer-matrix}
        We define the \textbf{transfer matrix} $K_\beta$ entry-wise: \begin{equation}
         \langle x_{j+1}| K_\beta|x_j\rangle := \mathrm{e}^{-\beta h(x_j,x_{j+1})}.
    \end{equation}
\end{defn}

\subsection{Gibbs states in a finite region}
Having introduced the model above, the simplest thing to do is to study its behavior in a finite-sized chain: e.g. restrict $\Omega$ to $(\mathbb{Z}^2)^\Lambda$ for some finite $\Lambda \subset \mathbb{Z}$.  Indeed, as is standard \cite{ny2007introductiongeneralizedgibbsmeasures}, careful analysis of such finite-size measures will be necessary to take the genuine thermodynamic limit eventually.  Therefore, we begin by collecting some simple facts about the probability measures on finite-sized systems.  The first basic observation is that all such measures must break the helical symmetry.
\begin{prop}
    If $\mu$ is a probability measure on $\Omega$, then $\mu$ is not invariant under $U_a$ for any $a\ne 0$.
\end{prop}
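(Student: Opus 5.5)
The plan is to look at a single site and use the fact that a probability measure on a countable set cannot be invariant under a nontrivial translation. Fix $a\ne 0$ and suppose for contradiction that $\mu$ is a probability measure on $\Omega$ with $\mu\circ U_a^{-1}=\mu$. Choose any site, say $n=0$, and let $\nu$ be the marginal law of $x_0$ under $\mu$. This is a probability measure on the countable set $\mathbb{Z}^2$, i.e.\ a function $p:\mathbb{Z}^2\to[0,1]$ with $\sum_{y}p(y)=1$. Since $U_a$ acts on the coordinate $x_0$ by $x_0\mapsto x_0+A^{0}a=x_0+a$, invariance of $\mu$ passes to the marginal: $p(y+a)=p(y)$ for all $y\in\mathbb{Z}^2$. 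The only measurability fact needed is that cylinder events $\{x_0=y\}$ are measurable and are mapped to $\{x_0=y+a\}$ by $U_a$.

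Iterating gives $p(y+ka)=p(y)$ for every $k\in\mathbb{Z}$. Because $a\ne 0$, the points $y+ka$, $k\in\mathbb{Z}$, are pairwise distinct. Hence, for any $y$,
\begin{equation}
1\ \ge\ \sum_{k\in\mathbb{Z}} p(y+ka)\ =\ \sum_{k\in\mathbb{Z}} p(y),
\end{equation}
which forces $p(y)=0$. As $y$ was arbitrary, $\sum_y p(y)=0$, contradicting normalization. So $\mu$ is not $U_a$-invariant.

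There is no real obstacle. The only point to get right is choosing a site where the shift $A^{-n}a$ is visibly nonzero; at $n=0$ it is $a$ itself. Any site would do, since $A$ is invertible and so $A^{-n}a\ne 0$ for every $n$. The argument never uses the Hamiltonian, hyperbolicity, or countable additivity beyond a single marginal.
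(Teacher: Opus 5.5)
Your proposal is correct and follows essentially the same route as the paper: the paper also looks at the site-$0$ cylinder events $S_b=\{x_0=b\}$, notes that $U_a$ maps $S_b$ to $S_{b+a}$, and concludes that invariance makes $\mu(S_b)$ constant along the orbit $b+a\mathbb{Z}$, so every $\mu(S_b)$ must vanish, contradicting $\sum_b \mu(S_b)=1$. Phrasing this through the marginal $p$ of $x_0$ is only a cosmetic difference.
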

\begin{proof}
    Consider $S_b= \lbrace x \in \Omega : x_0=b\rbrace$.  Notice that $U_a\cdot S_b = S_{b+a}$. Since $\mu$ is a probability measure, we know that \begin{equation}
        \mu(\Omega) = 1  =\sum_{a\in \mathbb{Z}^2}\mu (S_a) \label{eq:muOmega}
    \end{equation}
    If $(U_a)_*\mu = \mu$, then all $\mu(S_{na+b})=\mu(S_b)$ are the same.  If (for any fixed $b$) $\mu(S_b)>0$, the sum \eqref{eq:muOmega} diverges due to the subset $a\mathbb{Z}+b\subset\mathbb{Z}^2$, so $\mu(S_b)=0$ is required for every $b$.  But then the sum in \eqref{eq:muOmega} would also apply $\mu(\Omega)=0$, contradicting $\mu(\Omega)=1$.
\end{proof}

The non-trivial question becomes whether it is possible to have a probability measure $\mu$ which ``looks like" a thermal state, on which the helical symmetry acts equivariantly.  The existence of such a measure is far from obvious -- for example, there is no meaningful ``uniform" probability measure over the integers, which can be understood physically as the statement that there is no steady-state distribution of a (lazy) random walk on $\mathbb{Z}$.  We will, in contrast, show the existence of well-defined measures $\mu$ for the helical linear code, on which $U_a$ acts equivariantly.  

We begin by considering the Gibbs measure on finite-size and bounded domains $\Lambda \subset \mathbb{Z}$:

\begin{prop}[Gibbs state on $\Lambda$]\label{prop:finite-gibbs}
The \textbf{Gibbs state} \begin{equation}\label{eq:finite-gibbs}
    \mu_{\Lambda,\beta,b_-, b_+}(x):= \frac{\mathrm{e}^{-\beta H_{\Lambda,b_-,b_+}(x)}}{Z_{\Lambda,\beta,b_-,b_+}} = \frac{\langle b_+|K_\beta|x_{j_+-1}\rangle \langle x_{j_+-1}|K_\beta |x_{j_+-2}\rangle \cdots \langle x_{j_-+1}|K_\beta |b_-\rangle}{\langle b_+| K_\beta^{j_+-j_-}|b_-\rangle } .
\end{equation}
is a well-defined probability measure on $(\mathbb{Z}^2)^\Lambda$,\footnote{Of course, the measure pins the degrees of freedom at the endpoints of $\Lambda$.} i.e. $Z_{\Lambda,\beta,b_-,b_+} := \langle b_+| K_\beta^{j_+-j_-}|b_-\rangle <\infty$.   Similarly, the following one-sided Gibbs measures also exist: \begin{subequations}
    \begin{align}
        \mu^+_{\Lambda,\beta,b_-}(x) &:= \frac{ \langle x_{j_+}|K_\beta |x_{j_+-1}\rangle \cdots \langle x_{j_-+1}|K_\beta |b_-\rangle}{\langle 1| K_\beta^{j_+-j_-}|b_-\rangle }, \\
        \mu^-_{\Lambda,\beta,b_+}(x) &:=\frac{\langle b_+|K_\beta|x_{j_+-1}\rangle \cdots \langle x_{j_-+1}|K_\beta |x_{j_-}\rangle}{\langle b_+| K_\beta^{j_+-j_-}|1\rangle }.
    \end{align}
\end{subequations}
\end{prop}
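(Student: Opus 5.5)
The plan is to show that every partition function is finite and positive. Positivity is immediate, since each entry $\mathrm{e}^{-\beta h}>0$. Once finiteness holds, the numerators in \eqref{eq:finite-gibbs} and its one-sided versions are nonnegative and sum to the denominator by the definition of matrix multiplication (and of $|1\rangle$). This gives normalized probability measures. So the whole statement reduces to bounding sums of the form $\sum_{x}\mathrm{e}^{-\beta\lVert x - Ay\rVert_1}$ and $\sum_y \mathrm{e}^{-\beta\lVert x-Ay\rVert_1}$ uniformly.

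The key single-step estimate rests on the fact that $A\in\mathrm{SL}(2,\mathbb{Z})$ is a bijection of $\mathbb{Z}^2$. For fixed $y$,
\begin{equation}
\sum_{x\in\mathbb{Z}^2}\mathrm{e}^{-\beta\lVert x-Ay\rVert_1} = \sum_{z\in\mathbb{Z}^2}\mathrm{e}^{-\beta\lVert z\rVert_1} = \left(\frac{1+\mathrm{e}^{-\beta}}{1-\mathrm{e}^{-\beta}}\right)^2 =: C_\beta<\infty .
\end{equation}
Similarly, for fixed $x$, substituting $w=Ay$, which ranges over all of $\mathbb{Z}^2$ bijectively, gives the same value $C_\beta$. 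Thus $\langle 1|K_\beta|b\rangle = C_\beta$ and $\langle b|K_\beta|1\rangle=C_\beta$ for all $b$. In other words, $K_\beta$ has all row and column sums equal to $C_\beta$.

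Now I iterate by summing out variables one at a time. For the one-sided measure $\mu^+$, summing $x_{j_+}, x_{j_+-1},\ldots$ in turn gives $\langle 1|K_\beta^{n}|b_-\rangle = C_\beta^{n}$, with $n=j_+-j_-$. By Tonelli this is legitimate, since all terms are nonnegative. The same argument from the other side gives $\langle b_+|K_\beta^n|1\rangle = C_\beta^n$ for $\mu^-$. For the two-sided $Z$, I bound
\begin{equation}
\langle b_+|K_\beta^n|b_-\rangle \le \langle 1|K_\beta^n|b_-\rangle = C_\beta^n,
\end{equation}
and this is strictly positive because some path has positive weight. When $n=1$ there are no free variables and $Z=\mathrm{e}^{-\beta h(b_-,b_+)}$ trivially.

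There is no real obstacle here. The only point needing care is the unimodularity of $A$, which makes $y\mapsto Ay$ a bijection so that the column sums are exactly $C_\beta$ rather than merely bounded. One could also avoid it: injectivity of $A$ suffices for an upper bound. I would also note the use of Tonelli, or monotone convergence, to justify reordering the infinite nonnegative sums, including those involving the non-normalizable $|1\rangle$.
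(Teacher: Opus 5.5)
Your proposal is correct and follows essentially the same route as the paper. Both use the bijectivity of $A\in\mathrm{SL}(2,\mathbb{Z})$ on $\mathbb{Z}^2$ to get $\langle 1|K_\beta|a\rangle=\langle a|K_\beta|1\rangle=\kappa_\beta$, iterate this to $\kappa_\beta^n$ for the one-sided normalizations, and bound the two-sided partition function by a one-sided one. Your explicit appeal to Tonelli makes rigorous what the paper only calls an identity holding ``in a distributional sense''.
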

\begin{proof}
    Provided in Section \ref{sec:finite-gibbs-exists}.
\end{proof}
The Gibbs state is a 1d Markov random field, meaning that the measure restricted to any connected subregion $\Lambda_0 = \lbrace k_-,\ldots, k_+\rbrace$ depends only on the state at the boundary:  
\begin{prop} \label{prop:mrf}
    $\mu_{\Lambda,\beta,b_-, b_+}$ is a Markov random field: for any $\Lambda_0 \subset \Lambda$, \begin{align} \label{eq:mu_beta}
        \mu_{\Lambda,\beta,b_-, b_+}&(x_{j\in \Lambda_0} | x_{k_-},x_{k_+}) \notag \\
        &= \dfrac{\displaystyle \langle b_+|K_\beta^{j_+ - k_+}|x_{k_+}\rangle \cdot \prod_{i=k_-}^{k_+-1} \langle x_{i+1}|K_\beta |x_i\rangle \cdot \langle x_{k_{-}}|K_\beta^{k_--j_-}|b_-\rangle}{\langle b_+|K_\beta^{j_+ - k_+}|x_{k_+}\rangle \langle x_{k_+}|K_\beta^{k_+-k_-}|x_{k_-}\rangle\langle x_{k_{-}}|K_\beta^{k_--j_-}|b_-\rangle} \notag \\
        &= \mu_{\Lambda_0,\beta,x_{k_-},x_{k_+}}(x_{j\in \Lambda_0}).
    \end{align}
    An analogous conclusion applies when $\Lambda_0$ is disconnected, albeit data at each boundary is required.
\end{prop}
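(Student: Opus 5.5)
The plan is to compute the conditional probability directly from the product form of the finite Gibbs measure in Proposition \ref{prop:finite-gibbs}, using Bayes' rule and the finiteness of all relevant transfer-matrix entries. Fix $\Lambda_0=\lbrace k_-,\ldots,k_+\rbrace\subset\Lambda$. The weight $\mathrm{e}^{-\beta H_{\Lambda,b_-,b_+}(x)}$ factorizes as a product of nearest-neighbor kernels $\langle x_{i+1}|K_\beta|x_i\rangle$. I will split this product into three blocks: $i<k_-$, $k_-\le i<k_+$, and $i\ge k_+$. Each factor involves only variables in a single block, and the blocks share only the variables $x_{k_-}$ and $x_{k_+}$.

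First, compute the joint marginal of $(x_j)_{j\in\Lambda_0}$ by summing over the variables outside $\Lambda_0$. The left block gives $\sum_{x_{j_-+1},\ldots,x_{k_--1}}\prod\langle x_{i+1}|K_\beta|x_i\rangle=\langle x_{k_-}|K_\beta^{k_--j_-}|b_-\rangle$, by the definition of matrix multiplication. The right block gives $\langle b_+|K_\beta^{j_+-k_+}|x_{k_+}\rangle$ in the same way. Interchanging these sums with the product is justified by Tonelli, since all terms are nonnegative. Finiteness of each resulting entry follows from the estimates already used to prove Proposition \ref{prop:finite-gibbs}: each such entry is a partition function $Z$ on a subinterval with pinned ends, and these were shown to be finite.

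Second, compute the marginal of $(x_{k_-},x_{k_+})$ by additionally summing over the interior of $\Lambda_0$. This yields the denominator
\[
\langle b_+|K_\beta^{j_+-k_+}|x_{k_+}\rangle\langle x_{k_+}|K_\beta^{k_+-k_-}|x_{k_-}\rangle\langle x_{k_-}|K_\beta^{k_--j_-}|b_-\rangle .
\]
This quantity is strictly positive, because every kernel entry $\mathrm{e}^{-\beta h}>0$. The conditional probability is therefore well defined for every boundary pair, not merely almost surely. Dividing the two marginals, the global factor $Z_{\Lambda,\beta,b_-,b_+}$ cancels, giving the first line of \eqref{eq:mu_beta}. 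The two outer factors then cancel between numerator and denominator, leaving exactly $\mu_{\Lambda_0,\beta,x_{k_-},x_{k_+}}$ as defined in \eqref{eq:finite-gibbs}.

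For disconnected $\Lambda_0$, the same argument applies with more blocks. Each maximal interval of $\Lambda_0$ and each gap contributes its own block. The conditional measure factorizes into a product of pinned Gibbs measures on the intervals, conditioned on the endpoint data of each interval.

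I expect no real obstacle here. The only point requiring care is that all intermediate sums over the infinite space $\mathbb{Z}^2$ are finite and positive. This is inherited from Proposition \ref{prop:finite-gibbs}, and the cancellations are then legitimate algebra on finite positive numbers.
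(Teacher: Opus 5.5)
Your proposal is correct and is the same argument the paper relies on. The paper calls the proposition obvious from the product form of \eqref{eq:finite-gibbs}, and its displayed ratio is exactly your marginalization of the outer blocks into $\langle x_{k_-}|K_\beta^{k_--j_-}|b_-\rangle$ and $\langle b_+|K_\beta^{j_+-k_+}|x_{k_+}\rangle$, followed by cancellation. Your Tonelli, finiteness and positivity remarks simply fill in details the paper leaves implicit, with one small caveat: if $k_-=j_-$ or $k_+=j_+$, the denominator is positive only at the pinned values $x_{k_\pm}=b_\pm$.
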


\noindent  This property is obvious from the definition of the Gibbs state, but it is very important nonetheless.  We now consider a limiting process where we attempt to take the subregion $\Lambda$ to be larger and larger, while restricting observables to a fixed sized region $\Lambda_0 \subset \Lambda$.  The more important question is: if $x_{k_-}$ and $x_{k_+}$ are not frozen, does an unconditional analogue of \eqref{eq:mu_beta} still exist when we look at finite subsets $\Lambda_0$, and take $\Lambda$ to be large?  The probability that we see a given set of boundary conditions is \begin{align} 
    \mu_{\Lambda,\beta,b_-, b_+}&(x_{k_-},x_{k_+}) = \frac{\langle b_+|K_\beta^{j_+-k_+}|x_{k_+}\rangle\langle x_{k_+}|K_\beta^{k_+-k_-}|x_{k_-}\rangle \langle x_{k_-}|K_\beta^{k_--j_-}|b_-\rangle}{\langle b_+|K_\beta^{j_+-j_-}|b_-\rangle}. \label{eq:mu_endpoints}
\end{align}
Fixing $k_\pm$, the goal is now to construct a sequence of growing $\Lambda$, and $b_\pm$ (which will in general diverge) such that  $\mu_{\Lambda,\beta,b_-, b_+}(x_{k_-},x_{k_+})$ nevertheless converges to a well-posed limit as $|\Lambda|\rightarrow \infty$. If that limit exists, then clearly by Proposition \ref{prop:mrf}, the sequence of $\mu_{\Lambda,\beta, b_-,b_+}(x_{j\in \Lambda_0})$ would converge as well.  

\subsection{$\sigma$-algebras and DLR measures}
In order to take the infinite volume limit described just now, it may be helpful to review a more formal probabilist's perspective on the problem.  The discussion that follows is essentially unnecessary to understand the rest of the paper, except for Theorem \ref{thm:extremal}, which establishes that in a certain mathematical sense, the many DLR (Gibbs) states that we will construct in Theorem \ref{thm:exist} should be understood as distinct phases of matter.  In the discussion that follows it is useful to introduce the set of all finite subsets of $\mathbb{Z}$ as \begin{equation}
        \mathcal{S} := \lbrace \Lambda \subset \mathbb{Z} : |\Lambda|<\infty\rbrace.
    \end{equation}  

Formally, our goal is to construct a probability measure $\mu$ on $(\Omega,\mathcal{F})$.  To define the desired $\sigma$-algebra $\mathcal{F}$, it is useful to first construct the finite-size $\sigma$-algebra \begin{equation}
    \mathcal{F}_\Lambda := \sigma( \lbrace x_j : j\in \Lambda \rbrace).
\end{equation}
Conditioning on $\mathcal{F}_\Lambda$ means that  the variables $x_j$ are known for each $j\in \Lambda$: thus $\mu(A|\mathcal{F}_\Lambda): (\mathbb{Z}^2)^\Lambda\rightarrow [0,1]$ denotes a function on the known $\lbrace x_j:j\in \Lambda\rbrace$.   $f: \Omega \rightarrow \mathbb{R}$ is $\mathcal{F}_\Lambda$-measurable if and only if $f(\lbrace x_j : j\in \Lambda\rbrace)$ only depends on degrees of freedom in $\Lambda$.   With the definition of $\mathcal{F}_\Lambda$ in hand, we may now define  \begin{equation}
    \mathcal{F} = \sigma \left( \bigcup_{\Lambda \in \mathcal{S}} \mathcal{F}_\Lambda \right)
\end{equation}
as the smallest $\sigma$-algebra containing all $\sigma$-algebras $\mathcal{F}_\Lambda$.  With this notation in mind, we can introduce the formal DLR equation, which says in a nutshell that the infinite-volume measure $\mu_\beta$ should almost surely reduce to the local Gibbs state \eqref{eq:finite-gibbs} in any region $\Lambda$, given $\mathcal{F}_{\Lambda^{\mathrm{c}}}$: 

\begin{defn}[DLR measure]
    A \textbf{DLR measure} $\mu_\beta$ at inverse temperature $\beta$ is a probability measure on $(\Omega,\mathcal{F})$ such that for all $A\in \mathcal{F}$, $\mu_\beta$-almost surely, for any $\Lambda \in \mathcal{S}$, \begin{equation} \label{eq:DLRdef}
        \mu_\beta(A|\mathcal{F}_{\Lambda^{\mathrm{c}}})(\omega_{\Lambda^{\mathrm{c}}} ) = \mu_{\Lambda ,\beta}(A | \omega_{\Lambda^{\mathrm{c}}}).
    \end{equation}
    where $\omega_{\Lambda^{\mathrm{c}}} := \lbrace x_j : j\notin \Lambda\rbrace$, and we define, letting $H_\Lambda$ denote the finite-region Hamiltonian \eqref{eq:H-Lambda} with unrestricted boundary conditions,  \begin{equation}
        \mu_{\Lambda,\beta}(A|\omega_{\Lambda^{\mathrm{c}}})  \propto  \sum_{x\in (\mathbb{Z}^2)^\Lambda} \mathbb{I}[(x,\omega_{\Lambda^{\mathrm{c}}}) \in A] \mathrm{e}^{-\beta H_\Lambda(x,\omega_{\Lambda^{\mathrm{c}}})}.
    \end{equation}
\end{defn}

\noindent The set of DLR measures is convex, and one is typically interested in extrema of this set.  For example, in the ferromagnetic phase of the 2d Ising model, there are two such extremal DLR states $\mu^\uparrow$ and $\mu^\downarrow$ corresponding to the dominant spin orientation.\footnote{Any mixture $p\mu^\uparrow + (1-p)\mu^\downarrow$ is also a DLR state.  Therefore, we will often talk about classifying DLR states, but we really mean classifying the \emph{extrema} of the convex set of DLR states.}  Intuitively this means that they correspond to a well-behaved choice of boundary condition.  To make this more formal, we introduce:
\begin{defn}[Tail $\sigma$-algebra]
    The \textbf{tail $\sigma$-algebra} is \begin{equation}
        \mathcal{F}_\infty := \bigcap_{\Lambda \in \mathcal{S}} \mathcal{F}_{\Lambda^{\mathrm{c}}}.
    \end{equation}
    $\mathcal{F}_\infty$ is a precise notion of the ``boundary conditions at $\infty$".  
\end{defn}
\noindent We then state the following standard results \cite{ny2007introductiongeneralizedgibbsmeasures} without proof.  One might worry that the countably infinite on-site state space introduces further subtlety, but this is not true: it is sufficient that $\mathbb{Z}^2$, and hence $\Omega$, is a Polish space.
\begin{thm}[Extremal DLR state] \label{thm:extremal-property}
    DLR state $\mu_\beta$ is extremal if and only if for every $B\in \mathcal{F}_\infty$, $\mu_\beta(B) \in \lbrace 0 , 1\rbrace$.  Two extremal DLR states $\mu_\beta$ and $\mu^\prime_\beta$ are distinct if and only if there exists an event $B^\prime \in \mathcal{F}_\infty$ such that $\mu_\beta(B^\prime)=0$ and $\mu_\beta^\prime(B^\prime)=1$. 
\end{thm}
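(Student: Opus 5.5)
The plan follows the standard route in \cite{ny2007introductiongeneralizedgibbsmeasures,Georgii:2011gibbs}. Throughout, $\Omega=(\mathbb{Z}^2)^{\mathbb{Z}}$ is a Polish space, and the local specification $\lbrace\mu_{\Lambda,\beta}(\cdot|\omega_{\Lambda^{\mathrm c}})\rbrace_{\Lambda\in\mathcal{S}}$ is a well-defined proper, consistent family of probability kernels. To check this, note that for fixed boundary data $b_\pm$ the normalization is $Z_{\Lambda,\beta,b_-,b_+}<\infty$ by Proposition \ref{prop:finite-gibbs}. Consistency, $\mu_\Lambda\mu_{\Delta}=\mu_\Lambda$ for $\Delta\subset\Lambda$, is exactly the Markov property of Proposition \ref{prop:mrf}. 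Denote by $\mathcal{G}$ the convex set of DLR measures.

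First I will prove the first claim, that extremality is equivalent to triviality on $\mathcal{F}_\infty$.

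($\Leftarrow$) Suppose $\mu$ is trivial on $\mathcal{F}_\infty$ and $\mu=t\nu_1+(1-t)\nu_2$ with $\nu_i\in\mathcal{G}$ and $0<t<1$. Then $\nu_1\ll\mu$, so the density $f=d\nu_1/d\mu$ exists. The key lemma is that $f$ may be chosen $\mathcal{F}_\infty$-measurable. To see this, use that $\nu_1$ and $\mu$ share the kernels $\mu_\Lambda$. For $A\in\mathcal{F}$,
\begin{equation}
\nu_1(A)=\nu_1(\mu_\Lambda(A|\cdot))=\mu\bigl(f\,\mu_\Lambda(A|\cdot)\bigr)=\mu\bigl(\mathbb{E}_\mu[f|\mathcal{F}_{\Lambda^{\mathrm c}}]\,\mathbb{1}_A\bigr).
\end{equation}
The last step uses $\mu_\Lambda(A|\cdot)=\mu(A|\mathcal{F}_{\Lambda^{\mathrm c}})$ $\mu$-a.s. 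Hence $f=\mathbb{E}_\mu[f|\mathcal{F}_{\Lambda^{\mathrm c}}]$ a.s. for every $\Lambda\in\mathcal{S}$. Taking an increasing exhausting sequence $\Lambda_n$, backward martingale convergence gives that $f$ is $\mathcal{F}_\infty$-measurable. Triviality of $\mu$ on $\mathcal{F}_\infty$ then forces $f$ to be constant, so $f=1$ and $\nu_1=\mu$. Hence $\mu$ is extremal.

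($\Rightarrow$) Suppose $B\in\mathcal{F}_\infty$ with $0<\mu(B)<1$. Set $\nu=\mu(\cdot|B)$. I will check that $\nu\in\mathcal{G}$. Since $\mathbb{1}_B$ is $\mathcal{F}_{\Lambda^{\mathrm c}}$-measurable for every $\Lambda$, conditioning on $B$ commutes with the kernels:
\begin{equation}
\nu(\mu_\Lambda(A|\cdot))=\mu(\mathbb{1}_B\,\mu(A|\mathcal{F}_{\Lambda^{\mathrm c}}))/\mu(B)=\nu(A).
\end{equation}
The same holds for $\mu(\cdot|B^{\mathrm c})$, so $\mu=\mu(B)\nu+\mu(B^{\mathrm c})\mu(\cdot|B^{\mathrm c})$ is a nontrivial decomposition, contradicting extremality.

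For the second claim, let $\mu\neq\mu'$ be extremal. The key step is to show that they are either equal or mutually singular on $\mathcal{F}_\infty$. Consider $\bar\mu=\tfrac12(\mu+\mu')\in\mathcal{G}$. By the lemma above, $d\mu/d\bar\mu$ is $\mathcal{F}_\infty$-measurable. If $\mu$ and $\mu'$ agreed on $\mathcal{F}_\infty$, then since each is $\{0,1\}$-valued there, $\bar\mu$ would also be trivial on $\mathcal{F}_\infty$. The density would then be constant, giving $\mu=\bar\mu=\mu'$, a contradiction. So some $B\in\mathcal{F}_\infty$ has $\mu(B)\neq\mu'(B)$. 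By triviality these values are $0$ and $1$, and replacing $B$ by its complement if needed yields $B'$ with $\mu(B')=0$ and $\mu'(B')=1$. Conversely, if such a $B'$ exists, the two measures obviously differ.

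The main obstacle is the measurability lemma, which requires the conditional-expectation identity $\mu_\Lambda(A|\cdot)=\mu(A|\mathcal{F}_{\Lambda^{\mathrm c}})$ to hold simultaneously for countably many $A$ and $\Lambda$. Here countability of $\mathcal{S}$ and a countable generating algebra of cylinder sets (available since $\mathbb{Z}^2$ is countable, so $\Omega$ is standard Borel) suffice. Unbounded energies cause no issue, because only the finiteness of the normalizations from Proposition \ref{prop:finite-gibbs} is used.
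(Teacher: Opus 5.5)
Your proposal is correct, and it is the standard textbook proof that the paper relies on: the paper states this theorem without proof and cites it as a standard result. The argument is that a density between Gibbs measures is tail-measurable, so tail triviality forces it to be constant, and conversely conditioning on a nontrivial tail event gives a nontrivial convex decomposition. The ``main obstacle'' you flag is not actually one. Each identity $\nu_1(A)=\mu\bigl(\mathbb{E}_\mu[f|\mathcal{F}_{\Lambda^{\mathrm{c}}}]\,\mathbb{I}_A\bigr)$ uses only one fixed $A$ and $\Lambda$, and you only combine countably many almost-sure equalities along $\Lambda_n$, so this part needs neither simultaneity over $A$, nor a countable generating algebra, nor Polishness.
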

\begin{cor} \label{cor:extremal-property}
    If and only if $\mu_\beta$ is extremal, for every $\Lambda \in \mathcal{S}$ and for every bounded $\mathcal{F}_\Lambda$-measurable function $f$, $\mu_\beta$-almost surely, \begin{equation} \label{eq:extremal-property}
        \mu_\beta(f  | \mathcal{F}_\infty) = \mu_\beta(f).
    \end{equation} 
\end{cor}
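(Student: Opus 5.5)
The plan is to derive Corollary \ref{cor:extremal-property} from Theorem \ref{thm:extremal-property}, which is taken as given. The key fact is that $\mu_\beta(f|\mathcal{F}_\infty)$ is an $\mathcal{F}_\infty$-measurable random variable. So \eqref{eq:extremal-property} is really a statement about whether $\mu_\beta$ restricted to the tail $\sigma$-algebra is trivial. Both directions reduce to the $0$--$1$ characterization of extremality.

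For the forward direction, assume $\mu_\beta$ is extremal. By Theorem \ref{thm:extremal-property}, every $B\in\mathcal{F}_\infty$ then has $\mu_\beta(B)\in\lbrace 0,1\rbrace$.
\begin{itemize}
\item An $\mathcal{F}_\infty$-measurable real random variable $g$ is almost surely constant under a $0$--$1$ measure. To see this, note that the distribution function $t\mapsto \mu_\beta(g\le t)$ takes only the values $0$ and $1$ and is monotone. It therefore jumps at a single point $t_*$, and $g=t_*$ almost surely.
\item Apply this to $g=\mu_\beta(f|\mathcal{F}_\infty)$, which is integrable because $f$ is bounded. This gives $g=c$ almost surely for some constant $c$.
\item Taking expectations and using the tower property gives $c=\mu_\beta(f)$, which is \eqref{eq:extremal-property}.
\end{itemize}
This direction does not use that $f$ is local; it holds for any bounded $\mathcal{F}$-measurable $f$.

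For the converse, assume \eqref{eq:extremal-property} holds for every $\Lambda\in\mathcal{S}$ and every bounded $\mathcal{F}_\Lambda$-measurable $f$. Fix $B\in\mathcal{F}_\infty$. The goal is to show $\mu_\beta(B)\in\lbrace 0,1\rbrace$, after which Theorem \ref{thm:extremal-property} gives extremality.
\begin{itemize}
\item Let $\mathcal{H}$ be the class of bounded $\mathcal{F}$-measurable $f$ for which \eqref{eq:extremal-property} holds almost surely.
\item $\mathcal{H}$ is a vector space, because conditional expectation is linear.
\item $\mathcal{H}$ is closed under bounded monotone limits, by conditional monotone convergence.
\item $\mathcal{H}$ contains the indicators of the cylinder events $\bigcup_{\Lambda\in\mathcal{S}}\mathcal{F}_\Lambda$. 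This collection is closed under finite intersections, since $\mathcal{F}_{\Lambda_1}\cup\mathcal{F}_{\Lambda_2}\subset\mathcal{F}_{\Lambda_1\cup\Lambda_2}$, and it generates $\mathcal{F}$.
\item By the monotone class theorem (Dynkin's $\pi$--$\lambda$ theorem), $\mathcal{H}$ contains every bounded $\mathcal{F}$-measurable function, in particular $\mathbb{I}_B$.
\end{itemize}
Since $B\in\mathcal{F}_\infty$, we have $\mu_\beta(\mathbb{I}_B|\mathcal{F}_\infty)=\mathbb{I}_B$ almost surely. Combined with $\mathbb{I}_B\in\mathcal{H}$, this gives $\mathbb{I}_B=\mu_\beta(B)$ almost surely. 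An indicator that is almost surely constant forces $\mu_\beta(B)\in\lbrace0,1\rbrace$, so $\mu_\beta$ is extremal.

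The main obstacle is the monotone-class extension from local observables to all tail events in the converse direction. Everything else is immediate from Theorem \ref{thm:extremal-property}. The one point to check is that the almost-sure exceptional sets cause no trouble. The hypothesis is stated for every $\Lambda$ and every $f$, each with its own null set, so we never need uncountably many identities to hold simultaneously. Also, the monotone class argument only manipulates one function, or a countable sequence, at a time. Polishness of $\Omega$, which the paper mentions, is needed for Theorem \ref{thm:extremal-property} itself (via regular conditional probabilities) but not for this corollary.
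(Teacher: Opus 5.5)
Your proposal is correct and follows the route the paper indicates. The paper states this corollary without proof, deferring to the standard literature, and only sketches that the left-hand side is $\mathcal{F}_\infty$-measurable while the right-hand side is constant, so the identity is equivalent to the $0$--$1$ law of Theorem~\ref{thm:extremal-property}. Your argument makes both halves of that sketch rigorous: a.s.\ constancy of tail-measurable variables under a $0$--$1$ law gives the forward direction. For the converse, your monotone-class extension from local $f$ to $\mathbb{I}_B$ with $B\in\mathcal{F}_\infty$ makes precise the paper's informal appeal to a local ``order parameter'' that detects non-trivial tail events.
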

\noindent The intuition behind Corollary \ref{cor:extremal-property} is as follows.  The left hand side of \eqref{eq:extremal-property} is a function of boundary data, and generically there exists some function $f$ (an ``order parameter") that can tell the difference between two distinct phases (i.e. extremal DLR states).  But the right hand side of \eqref{eq:extremal-property} is a constant.  The only way that this equation can hold $\mu_\beta$-almost surely is for $\mu_\beta(B)\in \lbrace 0,1\rbrace$ for any $B\in\mathcal{F}_\infty$, as asserted in Theorem \ref{thm:extremal-property}.

\subsection{Existence of infinitely many extremal DLR states}\label{sec:mucbeta}
We now return to the helical code model, and construct DLR states for this model.  A useful aid in this regard is: 
\begin{defn}[Rounding map]
    Define a \textbf{rounding map} $\mathcal{R} : \mathbb{R}^2 \rightarrow \mathbb{Z}^2$ such that \begin{equation}
        \sup_{y\in \mathbb{R}^2} \lVert \mathcal{R}(y) - y \rVert_1 \le \mathcal{R}_{\text{tol}} \label{eq:Rtol}
    \end{equation}
    for any O(1) number $\mathcal{R}_{\text{tol}} \ge 1$.
\end{defn}
\noindent A useful intuitive way to think about constructing the DLR states is to consider a special ``helical sequence" of boundary conditions, using the rounding map. 
\begin{defn}[Helical sequence]\label{def:helical-sequence}
    Pick $c\in \mathbb{R}^2$.  A \textbf{helical sequence} corresponds to a sequence of $(\Lambda_N,b_-^{(N)}, b_+^{(N)})$ defined as follows.  Pick any $N_0$ such that $\sup_{j \in \Lambda_0} |j| < N_0$.  Then, for each $N\in \lbrace N_0,N_0+1,\ldots\rbrace$, \begin{equation}
        \Lambda_N := \lbrace -N, -N+1,\ldots, N-1,N\rbrace
    \end{equation}(clearly $\Lambda_0\subset \Lambda_N$) and boundary conditions\begin{subequations}
        \begin{align}
            b_-^{(c,N)} &:= \mathcal{R}\left(A^{N} c \right), \\
            b_+^{(c,N)} &:= \mathcal{R}\left(A^{-N} c \right).
        \end{align}
    \end{subequations}
\end{defn}
\begin{thm}\label{thm:exist}
    For every $c\in\mathbb{R}^2$, there exists a distribution $\mu^{c}_\beta$ on $\Omega$ for which the limit \begin{equation} \label{eq:thm_exist_main}
        \lim_{N\rightarrow\infty } \left\lVert  \mu_{\Lambda_N,\beta, b_-^{(c,N)}, b_+^{(c,N)}}(x_{j\in \Lambda_0}) - \mu^{c}_\beta (x_{j\in\Lambda_0})\right\rVert_{\mathrm{TV}} = 0.
    \end{equation}
    exists.  In particular, $\mu^{c}_\beta$ is the limit of expectation values in a helical sequence and is independent of $\mathcal{R}$.  $\mu^c_\beta$ is equivariant under the helical symmetry group: \begin{subequations}\label{eq:equivariant}\begin{align}
    T_*  \mu^c_\beta&= \mu^{A^{-1}c}_\beta,\\ 
    (U_a)_* \mu^c_\beta &= \mu^{c+a}_\beta.
\end{align}
\end{subequations}
$\mu^c_\beta$ obeys the DLR equation \eqref{eq:DLRdef}:  for any $p\le q$, and $\mu^c_\beta$-almost every $(x_{p-1},x_{q+1})$, \begin{equation} \label{eq:DLR}
        \mu^c_\beta(x_p,\ldots, x_q | x_{p-1},x_{q+1}) = \mu_{\lbrace p-1,p,\ldots, q+1\rbrace,\beta,x_{p-1},x_{q+1}}(x_p,\ldots, x_q).
    \end{equation}
\end{thm}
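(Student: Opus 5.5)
The plan is to recast the finite-volume measures $\mu_{\Lambda_N,\beta,b_\pm^{(c,N)}}$ as fluctuations around the ``classical'' helical background $y_n := A^{-n}c$, and then to show that these fluctuations have a tight, $N$-independent law near $\Lambda_0$. First, since $U_a$ acts by a lattice shift and the Hamiltonian is equivariant (Definition \ref{def:helicalcode}), we reduce to $c$ in a fundamental domain $[0,1)^2$ when convenient. Write $x_n = \mathcal{R}(A^{-n}c) + \xi_n$ for $n\in\Lambda_N$, so that $\xi_{\pm N}=0$. Then
\begin{equation*}
h(x_n,x_{n+1}) = \lVert \xi_n - A\xi_{n+1} + r_n\rVert_1,\qquad r_n := \mathcal{R}(A^{-n}c) - A\mathcal{R}(A^{-n-1}c),
\end{equation*}
where $\lVert r_n\rVert_1\le (1+\lVert A\rVert_1)\mathcal{R}_{\text{tol}}$ uniformly in $n,N$. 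The law of $\xi$ is therefore that of a Dirichlet helical chain with bounded, site-dependent ``defects'' $r_n$. The key a priori input is the soundness estimate of Theorem \ref{thm:sound} (the rigorous form of \eqref{eq:intuitive-sound}), applied to $\xi$ on the Dirichlet interval. By the triangle inequality it gives $H\ge \eta\sum_j\lVert\xi_j\rVert_1 - C N$, but this is too weak globally. I would therefore use it locally: for any configuration $\xi$ and any window around a site $k$, a large $\lVert \xi_k\rVert_1$ forces an energy excess, relative to the competitor obtained by replacing $\xi$ by $0$ on a neighborhood of $k$, that grows linearly in $\lVert\xi_k\rVert_1$ minus $O(1)$. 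A Peierls-type counting argument then yields $\mu_N(\lVert\xi_k\rVert_1 > t)\le C e^{-\beta\eta' t}$ uniformly in $N$ and $k$, with entropy bounded because the number of $\xi\in\mathbb{Z}^2$ with $\lVert\xi\rVert_1=t$ is $O(t)$.

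Second, I would prove convergence, not just tightness. Tightness plus the Markov property (Proposition \ref{prop:mrf}) already give subsequential limits satisfying the DLR equation \eqref{eq:DLR}. For uniqueness of the limit I would use a coupling/mixing argument on the transfer matrices. Condition on $x_{-M}$ and $x_M$ for $N\gg M\gg \sup_{\Lambda_0}|j|$. The law on $\Lambda_0$ is $\mu_{\{-M..M\},\beta,x_{-M},x_M}$, and by tightness $x_{\pm M}-\mathcal{R}(A^{\mp M}c)$ is $O(1)$ with exponentially small tails. What remains is that the marginal on $\Lambda_0$ of the Dirichlet measure depends only weakly on the boundary perturbation when $M$ is large. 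For two boundary conditions differing by $O(1)$ offsets $\delta_\pm$, I would couple the chains by showing that the boundary offset has an influence that decays geometrically toward the center. Propagating inward contracts the $v$-component offset by a factor $\lambda^{-1}$ from the right end (via $x_n \approx A x_{n+1}$) and the $u$-component offset by a factor $\lambda^{-1}$ from the left end. Concretely, I would bound the TV distance between the two marginals by the probability that the coupled chains fail to coalesce somewhere in $\{-M,\dots,-M/2\}$ and in $\{M/2,\dots,M\}$. Each site offers a positive-probability chance to coalesce, with probability bounded below uniformly thanks to the $O(1)$ tails, so the failure probability decays like $e^{-\gamma M}$. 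The resulting limit is Cauchy in $N$ in TV on $\Lambda_0$, which gives \eqref{eq:thm_exist_main}. The same coalescence bound shows that changing $\mathcal{R}$, which perturbs the boundary by $O(1)$, does not change the limit. I expect this coalescence step to be the main obstacle, since the state space is unbounded and the usual Dobrushin uniqueness does not apply directly. I would try first to reduce it to the one-site positivity bound $\langle a|K_\beta|b\rangle\ge e^{-\beta\lVert a- Ab\rVert_1}$ on the tight $O(1)$ region.

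Third, equivariance and DLR. For $U_a$, the finite-volume measure with boundary $\mathcal{R}(A^{\pm N}c)$ pushes forward under $U_a$ to the one with boundary $\mathcal{R}(A^{\pm N}c)+A^{\pm N}a = \mathcal{R}'(A^{\pm N}(c+a))$, where $\mathcal{R}'(y):=\mathcal{R}(y-A^{\pm N}a)+A^{\pm N}a$ is again a rounding map with the same tolerance. Independence of $\mathcal{R}$, which holds uniformly since only $\mathcal{R}_{\text{tol}}$ enters, then gives $(U_a)_*\mu^c_\beta=\mu^{c+a}_\beta$. For $T$, a shift maps $\Lambda_N$ to $\{-N-1,\dots,N-1\}$ with boundaries $\mathcal{R}(A^{N}c)=\mathcal{R}(A^{N+1}A^{-1}c)$ and $\mathcal{R}(A^{-N}c)=\mathcal{R}(A^{-(N-1)}A^{-1}c)$. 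This is an asymmetric helical sequence for $A^{-1}c$, and the coalescence argument, which never used symmetry of the window, identifies its limit with $\mu^{A^{-1}c}_\beta$. Finally, \eqref{eq:DLR} follows by passing to the limit in Proposition \ref{prop:mrf}. The conditional law on $\{p,\dots,q\}$ given $(x_{p-1},x_{q+1})$ is exactly the finite Gibbs state for every $N$, and the joint law of $(x_{p-1},\dots,x_{q+1})$ converges in TV. Since the state space is discrete, conditionals converge at every atom of positive limiting mass, which is $\mu^c_\beta$-almost every boundary pair.
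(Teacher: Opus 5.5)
Your architecture matches the paper's. It runs soundness, then suppression of long excursions from the rounded background, then loss of memory of one boundary by coupling the chain read from that side with the other boundary held fixed, then Cauchy in total variation. The coupling has to be done one end at a time, as in Corollaries \ref{cor:left_boundary} and \ref{cor:right_boundary}, since only then do the two copies share a transition kernel. Your treatment of $\mathcal{R}$-independence, the $N$-dependent map $\mathcal{R}'$ for $U_a$, the asymmetric window for $T$, and the DLR equation is fine.

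The gap is the step you flag and do not prove: the coalescence estimate. Read from the left with $b_+$ fixed at $j_+$, the chain has transitions $\langle a|K_\beta|b\rangle\,h_{k+1}(a)/h_k(b)$, where $h_k(b):=\langle b_+|K_\beta^{j_+-k}|b\rangle$. A lower bound on the single weight $\langle a|K_\beta|b\rangle$ over an $O(1)$ region therefore gives nothing until you bound $h_k(b)/h_{k+1}(a)$ uniformly in $j_+-k$. That bound is not automatic. The identity $h_k(b)=\sum_{a'}\langle a'|K_\beta|b\rangle h_{k+1}(a')$ gets contributions from all $a'\in\mathbb{Z}^2$. Bounding termwise with the one-step Harnack estimate $h_{k+1}(a')\le \mathrm{e}^{\beta\lVert a'-a\rVert_1}h_{k+1}(a)$ gives a divergent series, because for $w:=a'-A^{-1}b$ near the $v$ direction one has $\lVert Aw\rVert_1\approx\lambda^{-1}\lVert w\rVert_1$.

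The paper gets this minorization as Proposition \ref{prop:0element-bound}, from a first-return decomposition and Lemma \ref{lem:excursion}. Along your lines you can close it as follows. Apply your Peierls bound not to $\mu_N$ but to the conditional law given $x_k=b$, which by Proposition \ref{prop:mrf} is a Dirichlet Gibbs state on $\{k,\ldots,j_+\}$. This needs that bound uniformly over intervals and over boundary data within a fixed tolerance of the background. Then $x_{k+1}$ lies in a bounded set with conditional probability at least $1/2$, and one-step Harnack on that bounded set gives a uniform $\epsilon>0$ lower bound on the transition into $\mathcal{R}(\xi_{k+1})$.

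Second, your claim that ``each site offers a positive-probability chance to coalesce, so the failure probability decays like $\mathrm{e}^{-\gamma M}$'' does not follow from single-site tails. A chance to coalesce exists only at sites where \emph{both} copies are near the background, and marginal tails do not control how long the coupled copies stay away together. The paper's rate comes from a drift condition: the excursion counter $n$, the Lyapunov function $V=r^n$ obeying \eqref{eq:Vexpbound}, and the weighted contraction of Lemma \ref{lem:W_contraction}.

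The theorem only needs convergence, so a soft argument would do, but you have to give one. For example, let $\delta_K$ be the single-site tail beyond radius $K$ and $\epsilon_K$ the minorization constant on that tube. The expected number of sites in a window of length $W$ where either copy is outside the tube is at most $2\delta_K W$. Non-coalescence therefore has probability at most $4\delta_K+(1-\epsilon_K)^{W/2}$; let $W\to\infty$, then $K\to\infty$.

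Two smaller points:
\begin{itemize}
\item The heuristic that offsets contract by $\lambda^{-1}$ per step is not the mechanism. Integer offsets cannot shrink gradually, and the components that $A^{\pm1}$ expands must be absorbed, so what actually happens is probabilistic coalescence.
\item In your Peierls step the window must be the maximal run outside the radius-$M$ tube, with ideal endpoints, and you must sum over all configurations on it. What beats the per-site entropy of that run is choosing $M$ so large that $\sigma<\zeta$ in the notation of Lemma \ref{lem:excursion}, not the $O(t)$ count at a single site.
\end{itemize}
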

\begin{proof}
    Contained in Section \ref{sec:thmproof}.
\end{proof}

\begin{cor}\label{cor:continuity}
    For any fixed finite region $\Lambda_0=\lbrace p,\ldots, q\rbrace$, the marginal distribution $\mu^c_\beta(x_{j\in \Lambda_0})$ is continuous in $c$ in total variation.
\end{cor}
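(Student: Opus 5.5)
We prove continuity in $c$ by reducing it to a stability property of the finite-volume approximants. Fix $\Lambda_0=\lbrace p,\ldots,q\rbrace$ and two points $c,c'\in\mathbb{R}^2$. By Theorem \ref{thm:exist}, for every $\epsilon>0$ there is an $N$ large enough that both $\mu_{\Lambda_N,\beta,b_-^{(c,N)},b_+^{(c,N)}}$ and $\mu_{\Lambda_N,\beta,b_-^{(c',N)},b_+^{(c',N)}}$ are within $\epsilon$ in total variation of $\mu^c_\beta$ and $\mu^{c'}_\beta$ on $\Lambda_0$. The triangle inequality then reduces the claim to a uniform-in-$N$ estimate comparing the two helical sequences.

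The key observation is that the limit is independent of the rounding map $\mathcal{R}$, which lets us absorb a small change of $c$ into a change of rounding. Suppose $\lVert A^{\pm N}(c'-c)\rVert_1$ stays bounded along a subsequence of $N$. Then $\mathcal{R}'(y):=\mathcal{R}(y+A^{\pm N}(c'-c))$ is a legitimate rounding map with a slightly larger $\mathcal{R}_{\text{tol}}$. However, $A^{N}(c'-c)$ grows like $\lambda^N$ whenever $c'-c$ has a $u$-component, so this direct trick fails at the boundary itself. The fix is to use the DLR property \eqref{eq:DLR} and equivariance \eqref{eq:equivariant}. Choose an intermediate region $\Lambda_M=\lbrace -M,\ldots,M\rbrace\supset\Lambda_0$ with $M$ fixed, and write the $\Lambda_0$-marginal of $\mu^c_\beta$ as
\begin{equation}
\mu^c_\beta(x_{\Lambda_0})=\sum_{y_-,y_+}\mu^c_\beta(x_{-M}=y_-,x_M=y_+)\,\mu_{\Lambda_M,\beta,y_-,y_+}(x_{\Lambda_0}).
\end{equation}
The inner kernel does not depend on $c$, so
\begin{equation}
\lVert\mu^c_\beta-\mu^{c'}_\beta\rVert_{\mathrm{TV},\Lambda_0}\le\lVert\mu^c_\beta-\mu^{c'}_\beta\rVert_{\mathrm{TV},\lbrace -M,M\rbrace}.
\end{equation}
This alone does not shrink the distance. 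Instead we exploit that $T^M$ maps $\Lambda_M$-type data to a configuration with parameter $A^{\mp M}c$; alternatively, we rewrite $\mu^c_\beta$ directly as the helical-sequence limit with boundary points $A^{\pm N}c$ and compare with $c'$ at a \emph{fixed} large $N$.

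The cleanest route is probably a direct coupling at fixed $N$, using the rigidity estimate of Theorem \ref{thm:sound} (and the quantitative Corollary \ref{cor:deviation}). Conditional on the boundary data, the configuration concentrates near $x_n\approx A^{-n}c$, with fluctuations that are exponentially tight and uniform in $N$. For $|c'-c|\le\delta$ we let $n_\delta\sim\log(1/\delta)/\log\lambda$. For $|n|\le n_\delta$ the classical trajectories $A^{-n}c$ and $A^{-n}c'$ differ by $O(\lambda^{n_\delta}\delta)=O(1)$. The plan is then as follows.
\begin{enumerate}
\item Condition on $(x_{-n},x_{n})$ for a suitable $n\le n_\delta$ with $n\to\infty$ as $\delta\to0$, chosen for instance as $n_\delta/2$ so that $\lambda^{n}\delta\to0$.
\item Show, via the Markov property (Proposition \ref{prop:mrf}) and Corollary \ref{cor:deviation}, that under $\mu^c_\beta$ and $\mu^{c'}_\beta$ the pair $(x_{-n},x_n)$ is, with probability $1-o(1)$, within $O(1)$ of $(A^{n}c,A^{-n}c)$, respectively of $(A^nc',A^{-n}c')$, and that these targets coincide up to $o(1)$.
\item Prove a boundary-insensitivity (mixing) estimate: for boundary pairs differing by $O(1)$ at distance $n$ from $\Lambda_0$, the finite Gibbs marginals on $\Lambda_0$ differ in TV by $o(1)$ as $n\to\infty$.
\end{enumerate}

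The main obstacle is step 3, the decay of boundary influence. It should follow from the same argument that makes the helical limit in Theorem \ref{thm:exist} independent of $\mathcal{R}$: an $O(1)$ change of the pinned boundary is an allowed change of rounding map, and the proof of Theorem \ref{thm:exist} presumably yields a convergence rate uniform over rounding maps with fixed $\mathcal{R}_{\text{tol}}$. I would first extract exactly that uniformity from Section \ref{sec:thmproof}. Combining the three steps gives $\lVert\mu^c_\beta-\mu^{c'}_\beta\rVert_{\mathrm{TV},\Lambda_0}\to0$ as $c'\to c$.
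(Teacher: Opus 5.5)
Your three-step plan can be completed, but the route you discarded in your second paragraph (and glimpse again at the end of your third) is exactly the paper's proof. Your objection to it is an order-of-limits issue.

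What is needed is not a uniform-in-$N$ comparison of the two helical sequences. It is uniformity of the rate in \eqref{eq:enlarging-window} with respect to $c$ and the rounding map, and that holds because the rate involves only $\mathcal{A}_*,\rho_*,L$. Given $\epsilon$, the paper fixes one $N$ for which $\mu^{c'}_\beta$ is $\epsilon$-close on $\Lambda_0$ to its $\Lambda_N$ approximant, for every $c'$ simultaneously. Only then does it let $c'\to c$. At fixed $N$ one has $\lVert A^{\pm N}(c'-c)\rVert_1\to 0$. So your $\mathcal{R}'$ trick, or simply choosing endpoints $b_\pm$ in the ideal regions of both $c$ and $c'$, makes the two finite-volume measures identical, and the triangle inequality gives $2\epsilon$.

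Your alternative works intrinsically with the infinite-volume measures. It uses \eqref{eq:DLR} at scale $n\approx n_\delta/2$, Corollary \ref{cor:deviation}, and boundary insensitivity for boundary data near a common ideal solution. That last ingredient does follow from \eqref{eq:enlarging-window}, e.g.\ by comparing both boundary conditions to one larger window. The payoff is that the argument applies to any DLR states with uniform localization bounds, not just helical limits.

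The cost is more input and a double limit. At a fixed tolerance $K$, Corollary \ref{cor:deviation} bounds the probability that $(x_{-n},x_n)$ lies farther than $K$ from the rounded ideal values only by some $\delta_K$. This bound is uniform in $n$ and $c$, but it tends to zero only as $K\to\infty$, never as $c'\to c$. Meanwhile the mixing constants deteriorate as $K$ grows. So your ``probability $1-o(1)$, within $O(1)$'' in step 2 must be read as $\limsup_{c'\to c}\lVert\mu^c_\beta-\mu^{c'}_\beta\rVert_{\mathrm{TV}}\le 2\delta_K$ on $\Lambda_0$ for each fixed $K$, followed by $K\to\infty$, as in the proof of Theorem \ref{thm:extremal}.

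The paper's version avoids both Corollary \ref{cor:deviation} and the double limit. Taking $N\sim\log(1/\lVert c-c'\rVert_1)/\log\lambda$, it even gives an explicit H\"older-type modulus of continuity.
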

\begin{proof}
     Provided in Section \ref{sec:corproof-2}.
\end{proof}

An intuitive understanding of the meaning of the DLR states $\mu^c_\beta$ is provided in Section \ref{sec:informal-structure}: at any fixed site $j$, we had argued that $x_j$ would stay reasonably close to the rounded ideal solution $\mathcal{R}(A^{-j}c)$.  Having established the DLR state $\mu^c_\beta$, we can now quantitatively verify that this is the correct interpretation:
\begin{cor} \label{cor:deviation}
    There exist constants $0<\alpha, J < \infty$ such that for every $j\in\mathbb{Z}$ and $x\in\mathbb{Z}^2$, \begin{equation} \label{eq:exponential-decay}
        \mu^c_\beta \left( x_j = x \right) < J \mathrm{e}^{-\alpha  \left\lVert x - \mathcal{R}(A^{-j}c)  \right\rVert_1 }.
    \end{equation}
\end{cor}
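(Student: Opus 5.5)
We reduce to $c$ in a compact fundamental domain using the equivariance \eqref{eq:equivariant}, and then push the finite-volume estimate of Theorem \ref{thm:sound} (the bound \eqref{eq:intuitive-sound}) through the limit \eqref{eq:thm_exist_main}.

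\emph{Reduction.} Since $T_*\mu^c_\beta = \mu^{A^{-1}c}_\beta$, the marginal of $\mu^c_\beta$ at site $j$ equals the marginal of $\mu^{A^{-j}c}_\beta$ at site $0$. So it suffices to prove the bound at $j=0$, uniformly in $c$. Next, $(U_a)_*\mu^c_\beta=\mu^{c+a}_\beta$ shifts $x_0$ by $a$. Also $\mathcal{R}(c+a)$ differs from $\mathcal{R}(c)+a$ by at most $2\mathcal{R}_{\text{tol}}$ in 1-norm, which costs only a factor $\mathrm{e}^{2\alpha\mathcal{R}_{\text{tol}}}$ absorbed into $J$. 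Hence we may assume $c\in[0,1)^2$, and the claim becomes: $\mu^c_\beta(x_0=x)\le J\mathrm{e}^{-\alpha\lVert x\rVert_1}$ uniformly over this compact set of $c$.

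\emph{Finite-volume estimate.} By \eqref{eq:thm_exist_main}, $\mu^c_\beta(x_0=x)=\lim_N \mu_{\Lambda_N,\beta,b_-^{(c,N)},b_+^{(c,N)}}(x_0=x)$. It therefore suffices to show
\[
\mu_{\Lambda_N,\beta,b_\pm}(x_0=x)\le J\mathrm{e}^{-\alpha\lVert x\rVert_1}
\]
uniformly in $N$ large. Write $y_n$ for the rounded helical interpolant $\mathcal{R}(A^{-n}c)$ on $\Lambda_N$. This is a reference configuration matching the boundary conditions, and its energy is $O(N)$ with a constant depending only on $A$ and $\mathcal{R}_{\text{tol}}$, since each term $\lVert y_n - Ay_{n+1}\rVert_1\le (1+\lVert A\rVert)\mathcal{R}_{\text{tol}}$. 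For any configuration $x$, the difference $z=x-y$ vanishes at $\pm N$. By the triangle inequality, $H(x)\ge H_0(z)-H(y)$ and $H(x)\le H_0(z)+H(y)$, where $H_0$ is the Hamiltonian with zero boundary conditions. Theorem \ref{thm:sound} then gives $H(x)\ge \eta\sum_j\lVert z_j\rVert_1 - CN$.

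Comparing numerator and denominator this way directly loses a factor $\mathrm{e}^{2\beta CN}$, which is fatal. So the main obstacle is to avoid an $O(N)$ energy comparison. My plan is to localize. Use the Markov property (Proposition \ref{prop:mrf}) to condition on $x_{\pm m}$ at a fixed distance $m$. Split into two cases:
\begin{itemize}
\item[(a)] the boundary values $x_{\pm m}$ are within $O(m)$ of $y_{\pm m}$;
\item[(b)] they are not.
\end{itemize}

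In case (a), the conditional law on $\{-m,\dots,m\}$ is the Gibbs state with pinned ends. Use $z=x-y$ again with a fixed $m$, apply Theorem \ref{thm:sound} to $z$, and bound the partition function below by the single configuration $y$. This yields
\[
\mu(x_0=x\mid x_{\pm m})\le \mathrm{e}^{\beta C' m}\,\mathrm{e}^{-\beta\eta\lVert x-y_0\rVert_1}\cdot(\text{polynomial entropy factor}).
\]
The entropy factor comes from summing the remaining sites, and the Gaussian-like factor $\mathrm{e}^{-\beta\eta\lVert z_j\rVert_1}$ makes that sum converge geometrically. Since $m$ is fixed, the prefactor goes into $J$.

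Case (b) must be shown to have uniformly small probability. I would try to obtain this from the same soundness bound applied on the whole of $\Lambda_N$, but organized as a sum over blocks rather than a single comparison. Theorem \ref{thm:sound} gives an extensive penalty $\eta\sum_j\lVert z_j\rVert_1$, while the correction from the interpolant only costs $O(1)$ per bond where $z$ is nonzero. So I expect to rewrite the comparison as $H(x)-H(y)\ge \eta'\sum_j\lVert z_j\rVert_1 - C''\#\{j: z_j\neq 0\}$. Then the sum over configurations factorizes into a convergent transfer-matrix sum, bounded uniformly in $N$.

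If that rewriting fails, the fallback is to combine the same argument with Corollary \ref{cor:continuity} and compactness in $c$, which at least yields uniform constants. Choosing $\alpha<\beta\eta$ small enough to beat the entropy factors then completes the proof.
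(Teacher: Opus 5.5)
There is a genuine gap: the proposal never produces a bound that is uniform in $N$, and that uniformity is the entire content of the corollary.

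First, a case split at a fixed distance $m$ cannot give decay in $\lVert x\rVert_1$ on its own. In case (b) you need $\mu(x_0=x,\ \text{(b)})\le J\mathrm{e}^{-\alpha\lVert x\rVert_1}$, not merely that (b) is unlikely. Letting $m$ grow with $\lVert x\rVert_1$ would instead require a tail bound at the sites $\pm m$, which is the statement you are proving.

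Second, and more seriously, your estimate for (b) reintroduces exactly the obstruction you identified. Take your inequality $H(x)-H(y)\ge\eta'\sum_j\lVert z_j\rVert_1-C''\#\{j:z_j\neq 0\}$ and bound the denominator below by the single configuration $y$. The factorized sum is then $\prod_j(1+\mathrm{e}^{\beta C''}\sum_{z\neq0}\mathrm{e}^{-\beta\eta'\lVert z\rVert_1})=(1+\delta)^{2N-1}$ with $\delta>0$, which is exponentially large in $N$. No comparison against one reference path can work, because the true partition function carries extensive entropy from the $O(1)$ fluctuations around the ideal path.

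Your $-CN$ is also self-inflicted. Theorem \ref{thm:sound} measures displacement from the exact zero-energy solution $\xi_j=A^{-j}c$, with $C$ independent of the interval length, so it directly gives $H(x)\ge\eta\sum_j\lVert x_j-\xi_j\rVert_1-C$. Even so, a global comparison only yields $(S/\zeta)^{2N}$, where $S:=\sup_\xi\sum_{x\in\mathbb{Z}^2}\mathrm{e}^{-\beta\eta\lVert x-\xi\rVert_1}\ge\kappa_{\beta\eta}>1>\zeta$. The fallback via Corollary \ref{cor:continuity} and compactness in $c$ only addresses uniformity in $c$, which is not the problem.

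The missing idea is to localize to the excursion containing site $j$, so that everything outside it cancels exactly between numerator and partition function. This is how the paper argues.
\begin{enumerate}
\item Work on $\Lambda_N$ with $b_\pm\in\mathcal{I}_{\pm N}$. The case $x\in\mathcal{I}_j$ is trivial.
\item If $x_j=x\in\mathcal{N}_j$, let $p<j$ be the last site with $x_p=c_-\in\mathcal{I}_p$ and $q>j$ the first with $x_q=c_+\in\mathcal{I}_q$.
\item For fixed $(p,q)$, the weight of such paths is $\sum_{c_\pm}\langle b_+|K_\beta^{N-q}|c_+\rangle\langle c_+|F^{(j)}_{q,p}(x)|c_-\rangle\langle c_-|K_\beta^{p+N}|b_-\rangle$. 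The partition function is at least the same sum with $F^{(j)}_{q,p}(x)$ replaced by $G_{q,p}$. Hence the ratio is at most $\max_{c_\pm}\langle c_+|F^{(j)}_{q,p}(x)|c_-\rangle/\langle c_+|G_{q,p}|c_-\rangle$.
\item Theorem \ref{thm:sound} on $\{p,\dots,q\}$, with ideal endpoints and an $O(1)$ constant, bounds the numerator by $\mathrm{e}^{\beta C}\sigma^{q-p-2}\mathrm{e}^{-\beta\eta\lVert x-\xi_j\rVert_1}$, as in \eqref{eq:max-of-F}. The rounded ideal path bounds the denominator below by $Q\zeta^{q-p-1}$.
\item Choose $M$ with $\theta=\sigma/\zeta<1$ and sum $\sum_{p<j<q}\theta^{q-p-2}=(1-\theta)^{-2}$. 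This gives the bound with $\alpha=\beta\eta$, uniformly in $N$, $j$ and $c$. Then \eqref{eq:thm_exist_main} passes it to $\mu^c_\beta$.
\end{enumerate}
The mechanism is that deviations of size at most $M$ are never penalized, because they appear on both sides. Each non-ideal site costs $\sigma$, which tends to $0$ as $M\to\infty$, against a per-site weight of at least $\zeta$ (independent of $M$) for staying ideal.

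Your equivariance reduction is harmless but unnecessary, since these constants already do not depend on $c$.
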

\begin{proof}
    Provided in Section \ref{sec:corproof}.
\end{proof}
\begin{cor}
    If $c=c_u u +c_vv$ with $u$ and $v$ the eigenvectors of $A$ (Definition \ref{def:eigenvectors}), then $\mu_\beta^c$-almost surely \begin{subequations}
        \begin{align}
            \lim_{n\rightarrow -\infty} \lambda^n u\cdot x_n &= c_u, \label{eq:cu-limit} \\
            \lim_{n\rightarrow +\infty} \lambda^{-n} v\cdot x_n &= c_v. \label{eq:cv-limit}
        \end{align}
    \end{subequations}
\end{cor}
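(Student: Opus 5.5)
The plan is to derive both limits from Corollary \ref{cor:deviation} by a Borel--Cantelli argument. Set $y_n := x_n - \mathcal{R}(A^{-n}c)$. Corollary \ref{cor:deviation} gives, uniformly in $n$,
\begin{equation}
\mu^c_\beta(\lVert y_n\rVert_1 \ge r) \le J\sum_{\lVert z\rVert_1\ge r} \mathrm{e}^{-\alpha\lVert z\rVert_1} \le J' (1+r)\mathrm{e}^{-\alpha r},
\end{equation}
since there are $O(r)$ lattice points with $\lVert z\rVert_1 = r$. Choose the threshold $r_n := (2/\alpha)\log(1+|n|)+C$ for a suitable constant $C$. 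Then $\sum_{n\in\mathbb{Z}} \mu^c_\beta(\lVert y_n\rVert_1\ge r_n) <\infty$. By the Borel--Cantelli lemma, $\mu^c_\beta$-almost surely $\lVert y_n\rVert_1 < r_n$ for all but finitely many $n$. In particular $\lVert y_n\rVert_1 = O(\log |n|)$ as $n\to\pm\infty$. Any subexponential growth rate would suffice here; the logarithm is just a convenient choice.

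Next, decompose the deterministic part. Because $A$ is symmetric, $A^{-n}c = c_u\lambda^{-n}u + c_v\lambda^{n}v$, and orthonormality gives $u\cdot A^{-n}c = c_u\lambda^{-n}$ and $v\cdot A^{-n}c = c_v\lambda^n$. Write
\begin{equation}
x_n = A^{-n}c + \bigl(\mathcal{R}(A^{-n}c)-A^{-n}c\bigr) + y_n .
\end{equation}
The middle term has 1-norm at most $\mathcal{R}_{\text{tol}}$ by \eqref{eq:Rtol}. Since $|u\cdot w|\le \lVert w\rVert_2\le\lVert w\rVert_1$, this yields
\begin{equation}
|\lambda^n u\cdot x_n - c_u| \le \lambda^{n}\bigl(\mathcal{R}_{\text{tol}} + \lVert y_n\rVert_1\bigr),
\end{equation}
and analogously $|\lambda^{-n} v\cdot x_n - c_v|\le \lambda^{-n}(\mathcal{R}_{\text{tol}}+\lVert y_n\rVert_1)$.

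Finally, take limits on the almost-sure event from the first step. As $n\to-\infty$, $\lambda^n\to0$ exponentially while $\lVert y_n\rVert_1=O(\log|n|)$, so the right side tends to zero. This proves \eqref{eq:cu-limit}, and \eqref{eq:cv-limit} follows identically as $n\to+\infty$.

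The only point requiring care is the uniformity of $J,\alpha$ in $j$ from Corollary \ref{cor:deviation}, which is asserted there. Given that, the sum over $n$ converges and the argument involves no real obstacle. The one remaining check is that the tail bound $J'(1+r)\mathrm{e}^{-\alpha r}$ is summable against the chosen $r_n$: with $r_n=(2/\alpha)\log(1+|n|)+C$ the terms are $O\bigl(\log(1+|n|)\,(1+|n|)^{-2}\bigr)$, so it is.
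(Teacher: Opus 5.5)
Your proof is correct and takes essentially the same route as the paper: Borel--Cantelli applied to the uniform tail bound of Corollary \ref{cor:deviation}, combined with $A^{-n}c = c_u\lambda^{-n}u + c_v\lambda^{n}v$. The paper uses $\epsilon$-dependent thresholds of order $\epsilon\lambda^{n}$ and leaves implicit both the intersection over countably many $\epsilon$ and the $\mathcal{R}_{\text{tol}}$ rounding term; your single logarithmic threshold gives one full-measure event directly and makes the rounding term explicit, but this is a cosmetic difference.
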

\begin{proof}
    We prove \eqref{eq:cv-limit}, as \eqref{eq:cu-limit} follows from an identical argument.  Define the events \begin{equation}
        E_n := \left\lbrace \lambda^{-n} \left|v \cdot \left(x_n - \mathcal{R}\left(A^{-n} c\right) \right) \right| > \epsilon \right\rbrace .
    \end{equation}   Corollary \ref{cor:deviation} implies that, for sufficiently large $n$, \begin{equation}
        \mu_\beta^c \left( E_n  \right) < J \sum_{z : \lVert z \rVert_1 \ge \lfloor \epsilon \lambda ^n\rfloor/2} \mathrm{e}^{-\alpha \lVert z\rVert_1} < J \mathrm{e}^{-\epsilon \lambda ^n/4}.
    \end{equation}
    Since $\lambda >1$, the Borel-Cantelli Lemma shows that $\mu^c_\beta$-almost surely, only a finite number of events $E_n$ are realized for any $\epsilon>0$.  Hence the limit \eqref{eq:cv-limit} exists and equals $c_v$.
\end{proof}
\noindent This corollary clearly also implies that each $\mu^c_\beta$ is a distinct DLR state, since the at least one of the two limits \eqref{eq:cu-limit} and \eqref{eq:cv-limit} differ for any $c\ne c^\prime$.   This rigorously justifies the claim that we have constructed uncountably many DLR states.  The last theorem  establishes that these DLR states are not merely superpositions of a countable set of DLR states; instead, each of them represent genuinely distinct extremal DLR states (i.e. thermodynamic phases).

\begin{thm}\label{thm:extremal}
    For every $c\in \mathbb{R}^2$, the DLR state $\mu^c_\beta$ is extremal.
\end{thm}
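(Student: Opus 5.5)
The plan is to verify the tail-triviality criterion of Theorem \ref{thm:extremal-property}, in the equivalent form of Corollary \ref{cor:extremal-property}. By a standard monotone-class argument it suffices to show the following. For every finite $\Lambda_0=\lbrace p,\ldots,q\rbrace$ and every bounded $\mathcal{F}_{\Lambda_0}$-measurable $f$, the conditional expectation $\mu^c_\beta(f\mid\mathcal{F}_{\lbrace -M,\ldots,M\rbrace^{\mathrm{c}}})$ converges to the constant $\mu^c_\beta(f)$ in $\mathrm{L}^1(\mu^c_\beta)$ as $M\to\infty$. The martingale convergence theorem then identifies the limit with $\mu^c_\beta(f\mid\mathcal{F}_\infty)$.

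By the DLR property \eqref{eq:DLR} and the Markov property (Proposition \ref{prop:mrf}), this conditional expectation equals $\mu_{\Lambda_M,\beta,x_{-M},x_M}(f)$. So it depends only on the pair of random boundary values $(x_{-M},x_M)$, which are drawn from $\mu^c_\beta$. The key reduction is therefore a \emph{boundary-insensitivity} statement for finite-volume Gibbs states. Suppose $b_-$ is within distance $R$ of $\mathcal{R}(A^{M}c)$ and $b_+$ is within distance $R$ of $\mathcal{R}(A^{-M}c)$. Then
\begin{equation*}
\left\lVert \mu_{\Lambda_M,\beta,b_-,b_+}(x_{j\in\Lambda_0})-\mu^c_\beta(x_{j\in\Lambda_0})\right\rVert_{\mathrm{TV}}\le \delta(M,R),
\end{equation*}
where $\delta(M,R)\to0$ as $M\to\infty$ whenever $R=R(M)$ grows slowly enough (for instance $R\ll\lambda^{M}$ up to constants). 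Given this, the concentration estimate of Corollary \ref{cor:deviation} shows that $\mu^c_\beta\bigl(\lVert x_{\pm M}-\mathcal{R}(A^{\mp M}c)\rVert_1>R(M)\bigr)\le J' \mathrm{e}^{-\alpha R(M)}$. This bad event contributes at most $2\lVert f\rVert_\infty J'\mathrm{e}^{-\alpha R(M)}$ to the $\mathrm{L}^1$ error, and the good event contributes at most $2\lVert f\rVert_\infty\delta(M,R(M))$. Both vanish, giving $\mu^c_\beta(f\mid\mathcal{F}_\infty)=\mu^c_\beta(f)$ almost surely, and hence extremality.

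To prove boundary-insensitivity, I would reinterpret perturbed boundary data as a different helical sequence. Write $b_-=A^{M}c'$ and $b_+=A^{-M}c''$ exactly over the reals. Here $c'=c+A^{-M}(b_--A^Mc)$ differs from $c$ by $O(R\lambda^{-M})$ in the $u$-direction, but possibly by $O(R\lambda^{M})$ in the $v$-direction, and symmetrically for $c''$. The eigen-decomposition is what rescues this. The $v$-component of $b_-$ affects the left end only through $A^{-j}$, which contracts $v$ as $j$ increases from $-M$. One therefore expects the influence of the left boundary's error on $\Lambda_0$ to be of size $R\lambda^{-(M-|p|)}$, and likewise on the right. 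Concretely, I would reuse the machinery from the proof of Theorem \ref{thm:exist} (Section \ref{sec:thmproof}) and the continuity argument of Corollary \ref{cor:continuity}. Whatever mechanism there shows the limit is independent of the rounding map $\mathcal{R}$ is precisely robustness to $O(\mathcal{R}_{\mathrm{tol}})$ boundary errors. I would redo that argument with $\mathcal{R}_{\mathrm{tol}}$ replaced by $R$, tracking the $R$-dependence, and aim for a bound of the form $\delta(M,R)\lesssim C\,\mathrm{poly}(R)\,\lambda^{-\kappa M}$ for some $\kappa>0$.

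The main obstacle is this quantitative step. The proof of Theorem \ref{thm:exist} might only give convergence for boundary errors of size $O(1)$, not uniformity in $R$. If so, I would go back to the rigidity estimate of Theorem \ref{thm:sound}: after subtracting the real ideal solution, the energy controls $\eta\sum_j\lVert x_j-\text{(ideal)}\rVert_1$. Using this, I would couple two Gibbs measures with boundary conditions differing by $R$ via the Markov property at an intermediate site $\pm M/2$. The marginals at $\pm M/2$ of the two measures concentrate within $O(R\lambda^{-M/2})+O(1)$ of the same rounded ideal point. A coupling that makes $x_{\pm M/2}$ agree with probability $1-o(1)$ then finishes the argument by Proposition \ref{prop:mrf}. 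Obtaining that agreement probability, rather than mere closeness, requires a local overlap estimate for the marginal at one site; I expect a uniform lower bound on transition probabilities of $K_\beta$ between nearby states to supply it.
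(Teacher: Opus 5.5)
Your proposal is correct and is essentially the paper's argument: reduce the tail conditional expectation via the Markov/DLR property to $\mu_{\Lambda_N,\beta,x_{-N},x_N}(f)$, split according to whether the random boundary values lie near the ideal points, bound the bad event with Corollary \ref{cor:deviation}, and finish with backward martingale convergence and Corollary \ref{cor:extremal-property}. The quantitative step you flag as the main obstacle is unnecessary, and so is your fallback coupling. The paper takes your $R$ to be the ideal-region tolerance $M$ of Definition \ref{def:ideal}, which may be any sufficiently large value, and uses that the convergence in Theorem \ref{thm:exist} and \eqref{eq:enlarging-window} is uniform over $b_\pm\in\mathcal{I}_{\pm N}$. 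It first sends the window size $N\to\infty$ at fixed $M$, and only then sends $M\to\infty$, so no rate in $R$ is needed; a diagonal choice of $R(N)$ with no explicit growth condition would work equally well.
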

\begin{proof}
    Proved in Section \ref{sec:extremal-proof}.
\end{proof}

\section{Existence of a Gibbs state in a finite region}
In order to prove Theorem \ref{thm:exist}, we need to develop a number of technical tools to handle the countably infinite on-site state space.  In this section, we establish a number of useful energetic bounds towards this end. In the sections that follow, we often omit the explicit $c$-dependence that arises in many definitions and formulae to avoid clutter, writing $c$ dependence only when we wish to emphasize it (or compare two different $c$).

\subsection{Proof of Proposition \ref{prop:finite-gibbs}}\label{sec:finite-gibbs-exists}
Firstly, we must establish the existence of the Gibbs state.  This is very straightforward, but we delayed the proof until now, as it ``thematically" foreshadows many further properties we'll develop.  

The key observation is that since $A\in \mathrm{SL}(2,\mathbb{Z})$ is a bijective map on $\mathbb{Z}^2$, we have the following identity for any $a\in \mathbb{Z}^2$: \begin{align}
    \langle 1|K_\beta|a\rangle &= \sum_{x\in\mathbb{Z}^2}\mathrm{e}^{-\beta \lVert a-Ax\rVert_1} = \sum_{x\in\mathbb{Z}^2}\mathrm{e}^{-\beta \lVert a-x\rVert_1} = \sum_{(y_1,y_2)\in \mathbb{Z}^2} \mathrm{e}^{-\beta |y_1|-\beta |y_2|} = \left(\frac{\mathrm{e}^\beta+1}{\mathrm{e}^\beta - 1 }\right)^2 \notag \\
    &=:\kappa_\beta. \label{eq:kappa_beta}
\end{align}
As such, in a ``distributional sense" we have $\langle 1|K_\beta = \kappa_\beta \langle 1|$.  Since $A$ is bijective, we also have $K_\beta |1\rangle = \kappa_\beta |1\rangle$ by an analogous argument.  Hence,   \begin{equation}
    \langle 1| K_\beta^n|a\rangle = \langle a|K_\beta^n|1\rangle = \kappa_\beta^n <\infty,
\end{equation}
along with $\langle b_+|K_\beta^n|b_-\rangle \le \langle b_+|K_\beta^n|1\rangle <\infty$, which verifies the existence of all Gibbs measures.

\subsection{Linear confinement and soundness}
\begin{defn}[Adapted 1-norm]
    Given the eigenvectors $u$ and $v$ from Definition \ref{def:eigenvectors}, define the \textbf{adapted 1-norm}  for any $x\in \mathbb{R}^2$: 
    \begin{equation}
        \lVert x\rVert_{1,A} := | x\cdot u| + |x\cdot v|.
    \end{equation}
\end{defn}
\begin{prop}[Equivalence of norms]\label{prop:equivalence-norms}
    The following inequalities hold:
    \begin{equation} \label{eq:equivalence-norms}
        \frac{1}{\sqrt{2}} \lVert x\rVert_1 \le  \lVert x\rVert_{1,A} \le \sqrt{2} \lVert x\rVert_1.
    \end{equation}
\end{prop}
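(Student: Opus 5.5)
The plan is to use the fact that $\{u,v\}$ is an orthonormal basis of $\mathbb{R}^2$ (Definition \ref{def:eigenvectors} and the preceding proposition), so that $(x\cdot u, x\cdot v)$ are just the coordinates of $x$ in a rotated orthonormal frame. Writing $R$ for the orthogonal matrix with rows $u^{\mathsf{T}}$ and $v^{\mathsf{T}}$, we have $\lVert x\rVert_{1,A} = \lVert Rx\rVert_1$ and $\lVert Rx\rVert_2 = \lVert x\rVert_2$. Both inequalities in \eqref{eq:equivalence-norms} should then follow from the standard comparison between the 1-norm and the 2-norm in two dimensions, which I will invoke twice.

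For the upper bound, I will use Cauchy--Schwarz on a two-component vector, $\lVert y\rVert_1 \le \sqrt{2}\,\lVert y\rVert_2$, applied to $y = Rx$. Combined with $\lVert y\rVert_2 \le \lVert y\rVert_1$, applied to $y=x$, this gives
\begin{equation*}
\lVert x\rVert_{1,A} = \lVert Rx\rVert_1 \le \sqrt{2}\,\lVert Rx\rVert_2 = \sqrt{2}\,\lVert x\rVert_2 \le \sqrt{2}\,\lVert x\rVert_1 .
\end{equation*}

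For the lower bound, the same two inequalities are applied in the opposite roles:
\begin{equation*}
\lVert x\rVert_1 \le \sqrt{2}\,\lVert x\rVert_2 = \sqrt{2}\,\lVert Rx\rVert_2 \le \sqrt{2}\,\lVert Rx\rVert_1 = \sqrt{2}\,\lVert x\rVert_{1,A}.
\end{equation*}
Dividing by $\sqrt{2}$ gives the left inequality of \eqref{eq:equivalence-norms}.

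There is no real obstacle here. The only point that needs care is the orthonormality of $u$ and $v$, which is what makes $R$ orthogonal and hence preserves the 2-norm. This holds because $A$ is symmetric with distinct eigenvalues $\lambda\neq\lambda^{-1}$, as already recorded in the earlier proposition, together with the normalization in Definition \ref{def:eigenvectors}.
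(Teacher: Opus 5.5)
Your proposal is correct and follows the paper's proof exactly: orthonormality of $u,v$ makes $x\mapsto(x\cdot u,\,x\cdot v)$ preserve the 2-norm. Chaining this with $\lVert y\rVert_2\le\lVert y\rVert_1\le\sqrt{2}\,\lVert y\rVert_2$ in both directions gives the two bounds. Your version just writes the rotation $R$ and both chains out more explicitly than the paper does.
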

\begin{proof}
    Since $A$ is symmetric, $u$ and $v$ are an orthonormal basis of $\mathbb{R}^2$.  As such the standard bounds
    \begin{equation}
        \lVert x\rVert_2 \le \lVert x\rVert_1\le \sqrt{2} \lVert x\rVert_2
    \end{equation}hold for $\lVert x\rVert_{1,A}$ too.  So we have $\lVert x\rVert_{1,A} \le \sqrt{2}\lVert x\rVert_2 \le \sqrt{2}\lVert x\rVert_1$; we can switch $\lVert x\rVert_{1,A}$ and $\lVert x\rVert_1$ as well.  This gives \eqref{eq:equivalence-norms}. 
\end{proof}

\begin{defn}[Ideal solution/region]\label{def:ideal}
    For any $c\in \mathbb{R}^2$, define the \textbf{ideal solution} as the set of vectors $\lbrace \xi_j\rbrace \in (\mathbb{R}^2)^{\mathbb{Z}}$ defined by \begin{equation}
        \xi_j := A^{-j}c.
    \end{equation}
    By inspection, we have $h(\xi_j,\xi_{j+1}) = 0$.  We further define \textbf{displaced coordinates}\begin{equation}
        d_i := x_i - \xi_i =: s_i u + t_i v,
    \end{equation}
    with $d_i \in -\xi_i + \mathbb{Z}^2$, and $(s_i, t_i)\in \mathbb{R}^2$.  Lastly, fix an integer $M>\mathcal{R}_{\text{tol}}$, and define the \textbf{ideal region} $\mathcal{I}_j \subset \mathbb{Z}^2$, and its complement \textbf{non-ideal region} $\mathcal{N}_j := \mathbb{Z}^2\setminus \mathcal{I}_j$, via \begin{equation}
        \mathcal{I}_j := \lbrace x \in \mathbb{Z}^2 : \lVert x - \xi_j \rVert_1 \le M \rbrace. \label{eq:Ij_def}
    \end{equation}
\end{defn}

\begin{thm}[Linear soundness]\label{thm:sound}
    Let $\Lambda = \lbrace j_-, \ldots, j_+\rbrace \subset \mathbb{Z}$ be a finite connected interval.  Pick endpoints $b_-\in \mathcal{I}_{j_-}$ and $b_+\in \mathcal{I}_{j_+}$. For $x\in (\mathbb{Z}^2)^{\Lambda}$, define the energy \begin{equation}
        E(x) := H_{\Lambda,b_-,b_+}(x)
    \end{equation}
    and the linear displacement \begin{equation}\label{eq:Dx-def}
        D(x) := \sum_{i=j_-}^{j_+}\lVert x_i - \xi_i\rVert_1.
    \end{equation}
    Then there exists constant $C$ depending only on $\lambda$ and $M$, and constants $\eta$ and $\eta_+$ depending only on $\lambda$,  such that the following \textbf{linear soundness} inequalities hold: \begin{equation} \label{eq:soundness}
        \eta D(x) - C\le E(x) \le \eta_+ D(x).
    \end{equation}
\end{thm}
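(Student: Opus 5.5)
We work throughout in the displaced coordinates $d_i = x_i - \xi_i = s_i u + t_i v$ of Definition \ref{def:ideal}. Since $h(\xi_i,\xi_{i+1})=0$ and everything is linear, we have $x_i - A x_{i+1} = d_i - A d_{i+1}$, and hence
\begin{equation*}
E(x) = \sum_{i=j_-}^{j_+-1} \lVert d_i - A d_{i+1}\rVert_1 .
\end{equation*}
By Proposition \ref{prop:equivalence-norms} it suffices, up to factors of $\sqrt{2}$, to prove both bounds with $\lVert\cdot\rVert_{1,A}$ replacing $\lVert\cdot\rVert_1$ in $E$ and in $D$. In the adapted norm the problem decouples into two scalar chains:
\begin{equation*}
\lVert d_i - A d_{i+1}\rVert_{1,A} = |s_i - \lambda s_{i+1}| + |t_i - \lambda^{-1} t_{i+1}|,
\end{equation*}
and $\lVert d_i\rVert_{1,A} = |s_i|+|t_i|$. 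So the whole theorem reduces to one-dimensional inequalities for the $s$-chain and the $t$-chain separately.

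The upper bound is immediate from the triangle inequality. We have $|s_i-\lambda s_{i+1}|\le |s_i|+\lambda|s_{i+1}|$, and similarly for $t$. Summing over $i$, each site is counted at most twice, so $E \le \sqrt2\,(1+\lambda)\sqrt2\, D$. We may therefore take $\eta_+ = 2(1+\lambda)$, which depends only on $\lambda$.

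For the lower bound, consider first the $s$-chain. Define the residuals $r_i := s_i - \lambda s_{i+1}$. Solving backwards from the right endpoint gives
\begin{equation*}
s_{i} = \lambda^{-(j_+-i)} s_{j_+} + \sum_{k=i}^{j_+-1} \lambda^{-(k-i+1)}\, r_k \,\lambda^{\,?}
\end{equation*}
More precisely, $s_{i+1} = \lambda^{-1}(s_i - r_i)$ runs forward contractively, so we instead write $s_i = r_i + \lambda s_{i+1}$ and invert in the contracting direction: $s_{i+1} = \lambda^{-1}s_i - \lambda^{-1}r_i$. Iterating from the left boundary gives
\begin{equation*}
s_m = \lambda^{-(m-j_-)} s_{j_-} - \sum_{k=j_-}^{m-1}\lambda^{-(m-k)} r_k .
\end{equation*}
Summing $|s_m|$ over $m$ and using the geometric series,
\begin{equation*}
\sum_m |s_m| \le \frac{\lambda}{\lambda-1}\Big(|s_{j_-}| + \sum_k |r_k|\Big).
\end{equation*}
For the $t$-chain, $t_{i} = \lambda^{-1}t_{i+1} + (\text{residual})$ contracts when iterated leftwards from the right endpoint $j_+$. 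This gives
\begin{equation*}
\sum_m |t_m| \le \frac{\lambda}{\lambda-1}\Big(|t_{j_+}| + \sum_k |r'_k|\Big).
\end{equation*}
The boundary terms are controlled by the pinning: $b_\pm \in \mathcal{I}_{j_\pm}$ implies $|s_{j_-}|, |t_{j_+}| \le \sqrt2 M$. Combining the two chains yields $D \le 2\frac{\lambda}{\lambda-1}(E + 2\sqrt2 M)$ after the norm conversions. This gives $\eta = \frac{\lambda-1}{4\lambda}$ (or similar) and $C = O(M)$, as required: $\eta$ depends only on $\lambda$, and $C$ depends only on $\lambda$ and $M$.

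The main obstacle, and the conceptual heart of the argument, is choosing the correct direction of propagation for each eigencomponent. The expanding $u$-component must be controlled from the left boundary, and the contracting $v$-component from the right boundary. Propagating either one the other way produces exponentially growing factors $\lambda^{|\Lambda|}$, which would destroy linear soundness. The only other point needing care is bookkeeping: the endpoints belong to $D$ but are fixed. This is harmless, since their contribution is bounded by $2M$ and can be absorbed into $C$.
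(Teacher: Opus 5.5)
Your proof is correct and is essentially the paper's: the same reduction to displaced coordinates, the same decoupling into $s$- and $t$-chains via the adapted norm, the same triangle-inequality upper bound with $\eta_+=2(\lambda+1)$, and a lower bound that controls the $s$-chain from $j_-$ and the $t$-chain from $j_+$. The only difference is technique in that last step. The paper sums and telescopes the reverse triangle inequalities $|s_i-\lambda s_{i+1}|\ge\lambda|s_{i+1}|-|s_i|$ and $|t_i-\lambda^{-1}t_{i+1}|\ge|t_i|-\lambda^{-1}|t_{i+1}|$, which gives essentially the inequality you get by summing your variation-of-constants formula (with sharp geometric sums, both give $\eta=\frac{\lambda-1}{2\lambda}$). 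You should also delete the garbled first display ending in $\lambda^{\,?}$, which is wrong as written.
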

\begin{proof}
    Notice that $d_{j_-} = b_--\xi_{j_-}$ is bounded: \begin{equation}
        \lVert d_{j_-}\rVert_{1,A } \le \sqrt{2}M;
    \end{equation}a similar estimate holds for $d_{j_+}$. We first study the energy using the adapted energy: \begin{align}
        E_A(x) &:= \sum_{i=j_-}^{j_+-1} \lVert x_i - Ax_{i+1}\rVert_{1,A} =\sum_{i=j_-}^{j_+-1} \lVert d_i - Ad_{i+1}\rVert_{1,A} \notag \\
        &=\sum_{i=j_-}^{j_+-1} \left( | s_i - \lambda s_{i+1}| + |t_i - \lambda^{-1}t_{i+1}| \right) . \label{eq:EA_def}
    \end{align}
    Apply the triangle inequality and other simple inequalities, along with \eqref{eq:Ij_def}, to obtain: \begin{subequations} \label{eq:thmsoundinequalities}
        \begin{align}
            E_A(x) &\ge (\lambda-1)\sum_{i= j_- +1 }^{j_+-1 }|s_i| - \lambda |s_{j_+}|-|s_{j_-}| + \frac{\lambda-1}{\lambda}\sum_{i=j_-+1}^{j_+-1}|t_i| - \frac{1}{\lambda}|t_{j_+}| - |t_{j_-}| \notag \\
            &\ge \frac{\lambda-1}{\lambda} D_A(x) - 2\left[\lambda - 1+ \sqrt{2}\left(\lambda + \frac{1}{\lambda}\right)\right] M \label{eq:thmsoundinequalities-1} \\
            E_A(x) &\le (\lambda+1)\sum_{i= j_- +1}^{j_+-1}|s_i| + \lambda |s_{j_+}|+|s_{j_-}| + \frac{\lambda+1}{\lambda}\sum_{i=j_-+1}^{j_+-1}|t_i| + \frac{1}{\lambda}|t_{j_+}| + |t_{j_-}| \notag \\
            &\le (\lambda+1) D_A(x)   \label{eq:thmsoundinequalities-2}
        \end{align}
    \end{subequations}
    where $D_A(x)$ is defined analogously to \eqref{eq:Dx-def} but with the adapted norm. We can apply Proposition \ref{prop:equivalence-norms} to \eqref{eq:EA_def} to find \begin{equation}
      \sqrt{2} E_A(x) \ge   E(x) \ge \frac{1}{\sqrt{2}} E_A(x).
    \end{equation}Applying Proposition \ref{prop:equivalence-norms} again to compare $D(x)$ with $D_A(x)$ analogously, and defining \begin{subequations}
        \begin{align}
            \eta &:= \frac{\lambda -1}{2\lambda}, \\
            \eta_+ &:= 2(\lambda+1), \\
            C &:= 2\sqrt{2}\left[\lambda - 1+ \sqrt{2}\left(\lambda + \frac{1}{\lambda}\right)\right] M.
        \end{align}
    \end{subequations}
     we find the lower bound in \eqref{eq:soundness} from \eqref{eq:thmsoundinequalities-1}, and the upper bound from \eqref{eq:thmsoundinequalities-2}.
\end{proof}
\subsection{No large excursions from the ideal solution}\label{sec:excursion}
Theorem \ref{thm:sound} very strongly penalizes large deviations from the ideal solution in Definition \ref{def:ideal}.  More concretely, we will show that configurations that stray farther than $M$ from the ideal solution are heavily suppressed in the Gibbs state $\mu_{\Lambda,\beta,\mathcal{R}(\xi_{j_-}),\mathcal{R}(\xi_{j_+})}$. The idea will be to prove that most walks in the partition function $\langle b_+|K_\beta^{j_+-j_-}|b_-\rangle$ stay close to $\mathcal{I}_j$.  To count the contribution of walks that avoid $\mathcal{I}_j$ (at each $j$) in the partition function, we define:
\begin{defn}[Ideal/non-ideal transfer kernel]
    Given integers $j$ and $n>1$, $b_-\in \mathcal{I}_{j}$ and $b_+\in \mathcal{I}_{j+n}$, define the finite-dimensional matrices $F_{j+n,j}, G_{j+n,j}:\mathbb{R}^{\mathcal{I}_j} \rightarrow \mathbb{R}^{\mathcal{I}_{j+n}}$ by \begin{subequations}\begin{align}
    \langle b_+| G_{j+n, j}|b_-\rangle &= \sum_{x_{n-1}\in \mathcal{I}_{j+n-1}}\cdots \sum_{x_1\in\mathcal{I}_{j+1}} \langle b_+|K_\beta | x_{n-1} \rangle  \langle x_{n-1}|K_\beta |x_{n-2}\rangle \cdots \langle x_1|K_\beta |b_-\rangle, \\
        \langle b_+| F_{j+n, j}|b_-\rangle &= \sum_{x_{n-1}\in \mathcal{N}_{j+n-1}}\cdots \sum_{x_1\in\mathcal{N}_{j+1}} \langle b_+|K_\beta | x_{n-1} \rangle  \langle x_{n-1}|K_\beta |x_{n-2}\rangle \cdots \langle x_1|K_\beta |b_-\rangle.
    \end{align}\end{subequations}
    $G_{j+n,j}$ is the ideal transfer kernel; $F_{j+n,j}$ is the non-ideal kernel.  Define $G_{j,j}$ to be the identity matrix, while $\langle b_+| G_{j+1,j} |b_-\rangle = \langle b_+| K_\beta |b_-\rangle$.
\end{defn}
\noindent It is straightforward to establish the following useful representation of the finite-volume partition function by classifying all possible paths between $b_-$ and $b_+$ by whether they traverse $\mathcal{N}_j$ or $\mathcal{I}_j$ at each step: 
\begin{prop}[Transfer kernel representation of the partition function] \label{prop:transferkernel}
    The set of possible places that a path can traverse the non-ideal sets between $j_-$ and $j_+$ is characterized by a sequence of pairs of integers $\lbrace (k_m, \ell_m)\rbrace \in \mathcal{V}$, where \begin{equation}
        \mathcal{V} := \lbrace 0\rbrace  \cup \lbrace (k_i,\ell_i) \text{ for } 1\le i \le m : m\ge1, \; k_i+1<\ell_i, \;  \ell_i \le k_{i+1}, \; j_- \le k_1 ,  \; \ell_m \le j_+\rbrace,
    \end{equation}and $m$ is allowed to be as large as possible, subject to all of the above constraints. We then have   \begin{align}
        \langle b_+|K_\beta^{j_+-j_-}|b_-\rangle &= \langle b_+| G_{j_+,j_-}|b_-\rangle \notag \\
        &\;\;\; + \langle b_+|\sum_{\lbrace (k_m, \ell_m)\rbrace \in \mathcal{V}\setminus\lbrace 0\rbrace} G_{j_+,\ell_m}F_{\ell_m,k_m}\cdots G_{k_2,\ell_1}F_{\ell_1,k_1}G_{k_1,j_-} |b_-\rangle,
    \end{align}
    where $0 \in \mathcal{V}$ corresponds to the first term above.
\end{prop}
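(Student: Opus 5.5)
The plan is to expand the matrix power into a sum over paths and then group the paths by the sets of sites where they sit in the non-ideal regions. Writing
\begin{equation}
\langle b_+|K_\beta^{j_+-j_-}|b_-\rangle=\sum_{x_{j_-+1},\ldots,x_{j_+-1}}\prod_{i=j_-}^{j_+-1}\langle x_{i+1}|K_\beta|x_i\rangle ,
\end{equation}
with $x_{j_\pm}=b_\pm$, I will regard each configuration $(x_{j_-},\ldots,x_{j_+})$ as a path. Since $\mathcal{I}_j\sqcup\mathcal{N}_j=\mathbb{Z}^2$, each path determines a unique subset
\begin{equation}
S(x):=\lbrace i\in\lbrace j_-+1,\ldots,j_+-1\rbrace : x_i\in\mathcal{N}_i\rbrace .
\end{equation}
The endpoints never belong to $S$, because $b_-\in\mathcal{I}_{j_-}$ and $b_+\in\mathcal{I}_{j_+}$ by hypothesis. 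All terms are nonnegative and the total sum is finite by Proposition \ref{prop:finite-gibbs}. Therefore the full sum equals the sum over subsets $S$ of the partial sums restricted to paths with $S(x)=S$, with no convergence issues when the terms are rearranged.

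Next I will decompose a given $S$ into its maximal runs of consecutive integers $\lbrace k_i+1,\ldots,\ell_i-1\rbrace$, for $i=1,\ldots,m$, ordered from left to right. The site $k_i$ just before a run and the site $\ell_i$ just after it then lie in ideal regions. Because the runs are nonempty and maximal, this gives exactly the constraints $k_i+1<\ell_i$, $\ell_i\le k_{i+1}$ (two runs are separated by at least one ideal site, with equality when they share a boundary ideal site), $j_-\le k_1$ and $\ell_m\le j_+$. The empty set $S=\emptyset$ corresponds to the label $0$. I will check carefully that $S\mapsto\lbrace(k_i,\ell_i)\rbrace$ is a bijection onto $\mathcal{V}$. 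This bookkeeping is the main place where an off-by-one error could slip in, and I will confirm it by reconstructing $S$ as $\bigcup_i\lbrace k_i+1,\ldots,\ell_i-1\rbrace$.

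For fixed $S$, the restricted sum factorizes along the chain. I will first sum over the intermediate sites of each non-ideal run, with $x_{k_i}$ and $x_{\ell_i}$ held fixed; this produces exactly $\langle x_{\ell_i}|F_{\ell_i,k_i}|x_{k_i}\rangle$. Between consecutive runs, the stretch from $\ell_{i}$ to $k_{i+1}$ has all of its interior sites ideal, so summing those interior sites gives $\langle x_{k_{i+1}}|G_{k_{i+1},\ell_i}|x_{\ell_i}\rangle$. Here the convention $G_{j,j}=\mathrm{Id}$ covers the case $\ell_i=k_{i+1}$, and $G_{j+1,j}=K_\beta$ covers a single step. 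Finally, summing the shared boundary sites $x_{k_i},x_{\ell_i}\in\mathcal{I}$ is precisely matrix multiplication of these finite matrices, since $F$ and $G$ act between the spaces $\mathbb{R}^{\mathcal{I}}$. The stretches $G_{k_1,j_-}$ and $G_{j_+,\ell_m}$ at the two ends are handled in the same way.

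Summing over $\mathcal{V}$ then yields the stated formula. The only subtlety I expect is making sure that the ideal-only stretches counted by $G$ forbid non-ideal interior sites, so that no path is counted twice. This is exactly what makes the runs maximal, and hence what makes the map to $\mathcal{V}$ injective.
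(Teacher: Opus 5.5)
Your proposal is correct and takes the same route as the paper, which asserts the identity in one sentence by classifying paths according to whether they sit in $\mathcal{I}_j$ or $\mathcal{N}_j$ at each site. Your bijection between subsets of non-ideal interior sites and $\mathcal{V}$ via maximal runs, followed by factorization into $F$ and $G$ blocks with the conventions $G_{j,j}=\mathrm{Id}$ and $G_{j+1,j}=K_\beta$ restricted to ideal endpoints, is exactly the bookkeeping the paper leaves implicit.
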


\begin{lem}[Exponential suppression of large excursions]
\label{lem:excursion}
Let $\lbrace j,\ldots, j+n\rbrace \subset \Lambda$.  For sufficiently large $M$, there exist constants $B>0$ and $0<\theta<1$ such that\begin{equation}
    \mathbb{P}_{j+n,j}^{\text{excursion}} := \mu_{\beta,\Lambda,b_-,b_+}(x_{i}\in \mathcal{N}_i \text{ if } j<i<j+n, x_{j+n}\in \mathcal{I}_{j+n}|x_j = c_- \in \mathcal{I}_j) < B \theta^{n-1}. \label{eq:excursion}
\end{equation}
\end{lem}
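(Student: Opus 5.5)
The plan is to reduce the excursion probability, via the Markov property (Proposition \ref{prop:mrf}), to a ratio of transfer-kernel matrix elements. I then bound the numerator with the lower half of linear soundness (Theorem \ref{thm:sound}) and the denominator with an explicit near-ideal path. Write $m:=j_+-j-n$. Conditioning on $x_j=c_-$, the excursion probability is
\begin{equation*}
\mathbb{P}^{\text{excursion}}_{j+n,j}=\frac{\sum_{y\in\mathcal{I}_{j+n}}\langle b_+|K_\beta^{m}|y\rangle\langle y|F_{j+n,j}|c_-\rangle}{\langle b_+|K_\beta^{m+n}|c_-\rangle}.
\end{equation*}

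\textbf{Step 1 (comparing right-hand factors).} For any $y,y'\in\mathcal{I}_{j+n}$, the kernels $\langle b_+|K_\beta^m|y\rangle$ and $\langle b_+|K_\beta^m|y'\rangle$ differ only through the first bond. The triangle inequality gives $|h(y,z)-h(y',z)|\le\lVert y-y'\rVert_1\le 2M$. Hence
\begin{equation*}
\langle b_+|K_\beta^m|y\rangle\le \mathrm{e}^{2\beta M}\langle b_+|K_\beta^m|y'\rangle .
\end{equation*}
Fix $y'=\mathcal{R}(\xi_{j+n})\in\mathcal{I}_{j+n}$; this membership uses $M>\mathcal{R}_{\text{tol}}$.

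\textbf{Step 2 (lower bound on the denominator).} Keep only the term in $\langle b_+|K_\beta^{m+n}|c_-\rangle$ where $x_{j+n}=y'$ and the intermediate sites follow the rounded ideal path $x_i=\mathcal{R}(\xi_i)$ for $j<i<j+n$. Each bond of this path costs
\begin{equation*}
\lVert \mathcal{R}(\xi_i)-A\mathcal{R}(\xi_{i+1})\rVert_1\le \mathcal{R}_{\text{tol}}(1+\lVert A\rVert_{1\to1})=:K_0 ,
\end{equation*}
because $\xi_i=A\xi_{i+1}$. The first bond leaves $c_-\in\mathcal{I}_j$ and costs at most $M+\lVert A\rVert_{1\to 1}\mathcal{R}_{\text{tol}}$. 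Crucially, $K_0$ is independent of $M$. This gives
\begin{equation*}
\langle b_+|K_\beta^{m+n}|c_-\rangle\ge \mathrm{e}^{-\beta(M+K_0 n+K_0)}\langle b_+|K_\beta^m|y'\rangle .
\end{equation*}
Combined with Step 1, the factor $\langle b_+|K^m_\beta|y'\rangle$ cancels. We are left with
\begin{equation*}
\mathbb{P}^{\text{excursion}}_{j+n,j}\le \mathrm{e}^{\beta(3M+K_0)}\,\mathrm{e}^{\beta K_0 n}\sum_{y\in\mathcal{I}_{j+n}}\langle y|F_{j+n,j}|c_-\rangle .
\end{equation*}

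\textbf{Step 3 (soundness on the excursion).} Apply Theorem \ref{thm:sound} on the interval $\{j,\ldots,j+n\}$ with endpoints $c_-\in\mathcal{I}_j$ and $y\in\mathcal{I}_{j+n}$. Every path counted by $F$ has energy at least $\eta D(x)-C$. Moreover $D(x)\ge\sum_{j<i<j+n}\lVert x_i-\xi_i\rVert_1$, and every such term exceeds $M$. The sum over the interior sites then factorizes:
\begin{equation*}
\sum_{y}\langle y|F_{j+n,j}|c_-\rangle\le |\mathcal{I}|\,\mathrm{e}^{\beta C}\,S_M^{\,n-1},\qquad S_M:=\sup_{i}\sum_{x\in\mathcal{N}_i}\mathrm{e}^{-\beta\eta\lVert x-\xi_i\rVert_1}.
\end{equation*}
Here $|\mathcal{I}|=O(M^2)$, and $S_M\lesssim (M+1)\mathrm{e}^{-\beta\eta M}$ uniformly in $i$, since the lattice points at $\ell^1$-distance in $[r,r+1)$ from $\xi_i$ number $O(r+1)$. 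We then set
\begin{equation*}
\theta:=\mathrm{e}^{\beta K_0}S_M,\qquad B:=|\mathcal{I}|\,\mathrm{e}^{\beta(C+3M+2K_0)} .
\end{equation*}

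The main obstacle is ensuring $\theta<1$. This works only because the per-step cost $K_0$ of the reference path in the denominator does not grow with $M$, whereas the soundness constant $\eta$ is $M$-independent and each excursion step gains $\mathrm{e}^{-\beta\eta M}$. So $S_M$ decays exponentially in $M$, and $\theta<1$ for $M$ sufficiently large, as the lemma allows. One point needs care: the soundness constant $C$ depends on $M$, but it enters only the prefactor $B$, never $\theta$. I would also double-check that Theorem \ref{thm:sound} is legitimately applied on the sub-interval with arbitrary $\mathcal{I}$-endpoints, which it is, as stated.
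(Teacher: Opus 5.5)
Your proof is correct and is essentially the paper's argument: soundness bounds the non-ideal kernel by $\mathrm{e}^{\beta C}\sigma^{n-1}$ (your $S_M$ is the paper's $\sigma$), the rounded ideal path gives the $M$-independent per-step factor $\zeta=\mathrm{e}^{-\beta K_0}$, and you arrive at the same $\theta=\sigma/\zeta$. The only difference is bookkeeping: the paper lower-bounds the denominator by $\sum_{c_+\in\mathcal{I}_{j+n}}\langle b_+|K_\beta^{m}|c_+\rangle\langle c_+|G_{j+n,j}|c_-\rangle$ and bounds the ratio by $\max_{c_+}\langle c_+|F_{j+n,j}|c_-\rangle/\langle c_+|G_{j+n,j}|c_-\rangle$. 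This avoids your Step 1 comparison, along with its extra $\mathrm{e}^{2\beta M}|\mathcal{I}|$ in $B$ and its need for $m\ge 1$ --- in the edge case $j+n=j_+$, just take $y'=b_+$.
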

\begin{proof}
    Let us first show that the fraction of weight in the Gibbs state, contained in the paths which begin at $c_-\in \mathcal{I}_j$ at $j$ and next hit the ideal region at site $c_+ \in \mathcal{I}_{j+n}$ after $n>1$ steps to the right is exponentially small for sufficiently large $n$, for suitable choices of $M$.   Define \begin{equation}
        \sigma :=\sup_{\xi \in \mathbb{R}^2/\mathbb{Z}^2} \sum_{x \in \mathbb{Z}^2} \mathbb{I}\left(\lVert x - \xi \rVert_1 >M\right) \mathrm{e}^{-\beta  \eta \lVert x - \xi \rVert_1},
    \end{equation}
    and notice the elementary inequality \begin{equation} \label{eq:sigma_inequality}
        \sigma < \sum_{x \in \mathbb{Z}^2}\mathbb{I}(\lVert x\rVert_1 \ge M-1) \mathrm{e}^{-\beta \eta  \lVert x\rVert_1} = 4\sum_{j=M-1}^\infty j\mathrm{e}^{-\beta \eta  j} = 4 \frac{(M-1) - (M-2)\mathrm{e}^{-\beta \eta }}{(1-\mathrm{e}^{-\beta \eta})^2}\mathrm{e}^{-\beta \eta  (M-1)}.
    \end{equation} 
    Importantly we can take $\sigma$ as small as we wish by making $M$ large enough.  
    
    Now, by Theorem \ref{thm:sound}, we know that \begin{align}
        \max_{c_+\in \mathcal{I}_{j+n}, \; c_-\in \mathcal{I}_j} \langle c_+| F_{j+n,j}|c_-\rangle &\le \mathrm{e}^{\beta C}\sum_{x_{j+1} \in \mathcal{N}_{j+1}, \ldots, x_{j+n-1}\in\mathcal{N}_{j+n-1}} \prod_{i=1}^{n-1} \mathrm{e}^{-\beta \eta \lVert x_{j+i}- \xi_{j+i}\rVert_1} \notag \\
        &\le \mathrm{e}^{\beta C}\sigma^{n-1}. \label{eq:max-of-F}
    \end{align}
    In contrast, we can get a very crude lower bound on the weight of paths that stay in the ideal region by counting only the weight of a single path which follows the rounded map as much as possible: \begin{align}
        \min_{c_+\in \mathcal{I}_{j+n}, \; c_-\in \mathcal{I}_j} \langle c_+| G_{j+n,j}|c_-\rangle &> \langle c_+|K_\beta | \mathcal{R}(\xi_{j+n-1}) \rangle \langle \mathcal{R}(\xi_{j+n-1})|K_\beta|\mathcal{R}(\xi_{j+n-2})\rangle \cdots \langle \mathcal{R}(\xi_{j+1})|K_\beta|c_-\rangle \notag \\
        &> \mathrm{e}^{-\beta (1+\lVert A\rVert_{1\rightarrow 1}) \mathcal{R}_{\text{tol}} (n-1)} \mathrm{e}^{-2\beta (1+\lVert A\rVert_{1\rightarrow 1}) (M+\mathcal{R}_{\text{tol}}) } .
    \end{align}
    where the factors in the second line come from bounding the matrix elements along the rounded ideal solution, along with the worst possible boundary terms, respectively.  Defining \begin{subequations} \label{eq:zeta-Q-def}
        \begin{align}
            \zeta &:= \mathrm{e}^{-\beta (1+\lVert A\rVert_{1\rightarrow 1})\mathcal{R}_{\text{tol}}}, \\
            Q &:= \mathrm{e}^{-2\beta (1+\lVert A\rVert_{1\rightarrow 1})(M+\mathcal{R}_{\text{tol}})},
        \end{align}
    \end{subequations}
    we deduce that \begin{align} \label{eq:P-excursion}
       \mathbb{P}_{j+n,j}^{\text{excursion}} &= \dfrac{\displaystyle \sum_{c_+\in \mathcal{I}_{j+n}}\langle b_+|K_\beta^{j_+-j-n}|c_+\rangle \langle c_+|F_{j+n,j}|c_-\rangle  }{\langle b_+|K_\beta^{j_+-j}|c_-\rangle } <  \dfrac{\displaystyle \sum_{c_+\in \mathcal{I}_{j+n}}\langle b_+|K_\beta^{j_+-j-n}|c_+\rangle \langle c_+|F_{j+n,j}|c_-\rangle  }{\displaystyle \sum_{c_+\in \mathcal{I}_{j+n}}\langle b_+|K_\beta^{j_+-j-n}|c_+\rangle \langle c_+|G_{j+n,j}|c_-\rangle } \notag \\
       &< \max_{c_+\in \mathcal{I}_{j+n}, \; c_-\in \mathcal{I}_j} \frac{\langle c_+| F_{j+n,j}|c_-\rangle}{\langle c_+| G_{j+n,j}|c_-\rangle} < \frac{\mathrm{e}^{\beta C}}{Q} \left(\frac{\sigma}{\zeta}\right)^{n-1}.
    \end{align}
    Choose $M$ so that $\sigma<\zeta$; then define $\theta := \sigma/\zeta$.  To obtain \eqref{eq:excursion}, we note that we may take $B := \mathrm{e}^{\beta C}/Q$.
\end{proof}

\section{Constructing the DLR state} \label{sec:dlr}
We are now ready to prove the main theorem.  The key strategy is to use Lemma \ref{lem:excursion} to show that the Gibbs state does not stray too far from the ideal solution.  To make this precise, we will introduce an inhomogeneous Markov chain, first reading the Gibbs state Markov field from left to right, and use path coupling arguments \cite{BubleyDyer1997PathCoupling} to prove that it ``mixes" and forgets its left initial conditions when conditioned on a fixed right endpoint.   Of course, we can just as easily read from right to left, a fact we will eventually exploit in Section \ref{sec:thmproof}.

\subsection{Constructing an auxiliary inhomogeneous Markov chain}
 Analogous to Definition \ref{def:ideal}, we introduce: \begin{defn}[Rounded displaced coordinates] Define \begin{equation}
    \widetilde{d}_i := x_i - \mathcal{R}(\xi_i), \label{eq:rounded_displaced_coordinates}
\end{equation}
along with the set \begin{equation}
    \widetilde{\mathcal{I}} := \lbrace \widetilde{d}\in \mathbb{Z}^2 : \widetilde{d} + \mathcal{R}(\xi_j) \in \mathcal{I}_j \text{ for some } j\rbrace.
\end{equation}
\end{defn}
\noindent Notice that $\widetilde{\mathcal{I}}$ is a bounded set, e.g. if $x\in \widetilde{\mathcal{I}}$ then a crude bound gives \begin{equation}
 \lVert x\rVert_1 < M+ \mathcal{R}_{\text{tol}}+1   . \label{eq:xtildeIcrude}
\end{equation}

\begin{defn}[Auxiliary inhomogeneous Markov chain]\label{def:auxiliary}
    Index ``time" by $j\in \mathbb{Z}$, and introduce the following inhomogeneous Markov chain on the state space $\widetilde{\mathcal{I}}\times \mathbb{Z}_{\ge 0}$:
    \begin{subequations}\begin{align}
        \langle x^\prime,n^\prime  |P_j|x,n>0\rangle  &= \delta_{x,x^\prime} \delta_{n-1,n^\prime}, \\
        \langle x^\prime,0 |P_j|x,0\rangle &= \frac{\langle b_+| K_\beta^{j_+-j-1}|x^\prime + \mathcal{R}(\xi_{j+1})\rangle \langle x^\prime +\mathcal{R}(\xi_{j+1})|G_{j+1,j}|x + \mathcal{R}(\xi_j)\rangle  }{\langle b_+| K_\beta^{j_+-j}|x + \mathcal{R}(\xi_j)\rangle } , \label{eq:k=0Pj} \\
        \langle x^\prime,n^\prime -1 |P_j|x,0\rangle &= \frac{\langle b_+| K_\beta^{j_+-j-n^\prime}|x^\prime + \mathcal{R}(\xi_{j+n^\prime})\rangle \langle x^\prime +\mathcal{R}(\xi_{j+n^\prime})|F_{j+n^\prime,j}|x + \mathcal{R}(\xi_j)\rangle  }{\langle b_+| K_\beta^{j_+-j}|x + \mathcal{R}(\xi_j)\rangle }, \label{eq:nprime>1}
    \end{align}\end{subequations}where $j_+-j\ge n^\prime >1$ in \eqref{eq:nprime>1}, and $P_j(X,X^\prime)$ is the probability that at time $j$ we move from state $X$ to $X^\prime$.   To avoid overloading the notation, we suppress the dependence of $P$ on $b_+$.
\end{defn}
\noindent Proposition \ref{prop:transferkernel} ensures that $P_j$ defines a genuine inhomogeneous Markov chain.   We should emphasize the physical interpretation of the chain:  the space $\widetilde{\mathcal{I}}$ is indexing all possible values of $\widetilde{d}_i$ coordinate at a given ``time" (location in the chain).  For ease of notation we will usually annotate it as $x$ moving forward.  The reader should keep in mind that at ``time" $j$, only $x\in \mathcal{I}_j - \mathcal{R}(\xi_j)$ are allowed as coordinates for this auxiliary Markov chain.

\begin{prop} \label{prop:0element-bound}
    For every $x,x^\prime \in \widetilde{\mathcal{I}}$, if $x + \mathcal{R}(\xi_j)\in \mathcal{I}_j$ and  $x^\prime + \mathcal{R}(\xi_{j+1})\in \mathcal{I}_{j+1}$, \begin{equation}
        \langle x^\prime,0 |P_j|x,0\rangle > \epsilon_0 \label{eq:epsilon0}
    \end{equation}
    for an $\epsilon_0>0$ independent of the size of $\Lambda$.
\end{prop}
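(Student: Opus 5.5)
We need a uniform lower bound on
\[
\frac{\langle b_+|K_\beta^{j_+-j-1}|y'\rangle \langle y'|K_\beta|y\rangle}{\langle b_+|K_\beta^{j_+-j}|y\rangle},
\qquad y=x+\mathcal{R}(\xi_j)\in\mathcal{I}_j,\quad y'=x'+\mathcal{R}(\xi_{j+1})\in\mathcal{I}_{j+1}.
\]
Throughout I assume $b_+\in\mathcal{I}_{j_+}$, which holds for the helical sequences of interest since $\mathcal{R}_{\text{tol}}<M$. The plan is to bound the middle factor from below, then compare the two ``future'' partition functions by a ratio argument that uses only bounded geometry in the ideal region.

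\textbf{Step 1 (middle factor).} Since $\lVert y-\xi_j\rVert_1\le M$, $\lVert y'-\xi_{j+1}\rVert_1\le M$ and $\xi_j=A\xi_{j+1}$, we get
\[
\lVert y-Ay'\rVert_1\le \lVert y-\xi_j\rVert_1+\lVert A\rVert_{1\to1}\lVert \xi_{j+1}-y'\rVert_1\le(1+\lVert A\rVert_{1\to1})M .
\]
Hence $\langle y'|K_\beta|y\rangle\ge \mathrm{e}^{-\beta(1+\lVert A\rVert_{1\to1})M}$.

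\textbf{Step 2 (comparison of numerator and denominator).} Write the denominator by inserting a complete set of states at site $j+1$:
\[
\langle b_+|K_\beta^{j_+-j}|y\rangle=\sum_{z\in\mathbb{Z}^2}\langle b_+|K_\beta^{j_+-j-1}|z\rangle\langle z|K_\beta|y\rangle .
\]
It then suffices to show that $W(z):=\langle b_+|K_\beta^{j_+-j-1}|z\rangle$ satisfies
\[
\sum_z W(z)\langle z|K_\beta|y\rangle\le C' \,W(y')
\]
for all $y'\in\mathcal{I}_{j+1}$, with $C'$ independent of $\Lambda$. I will do this in three parts.

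(a) Comparable weights on the ideal region. For $z,y'\in\mathcal{I}_{j+1}$ I claim $W(z)\ge c_1 W(y')$. To see this, decompose $W(y')$ at site $j+2$ and replace $y'$ by $z$. Both points lie within $2M$ of each other, so each single-step matrix element changes by at most a factor $\mathrm{e}^{\beta\lVert A\rVert_{1\to1}2M}$, because $\big|\lVert w-Az\rVert_1-\lVert w-Ay'\rVert_1\big|\le\lVert A\rVert_{1\to1}\lVert z-y'\rVert_1$. This gives $W(z)\ge \mathrm{e}^{-2\beta\lVert A\rVert_{1\to1}M}W(y')$, and symmetrically. When $j+1=j_+$ the weights are delta functions, so that case is handled separately using $y'=b_+$, though then only one $y'$ is allowed anyway.

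(b) Non-ideal $z$. Split the sum over $z$ into $z\in\mathcal{I}_{j+1}$ and $z\in\mathcal{N}_{j+1}$. The ideal part is at most $|\mathcal{I}_{j+1}|\,\max_{z\in\mathcal{I}_{j+1}}W(z)$, which by (a) is $\le \mathrm{const}\cdot W(y')$. For the non-ideal part, I would use the Lipschitz estimate to get $W(z)\le \mathrm{e}^{\beta\lVert A\rVert_{1\to1}\lVert z-y'\rVert_1}W(y')$. This competes with $\langle z|K_\beta|y\rangle=\mathrm{e}^{-\beta\lVert y-Az\rVert_1}$, where $\lVert y-Az\rVert_1\gtrsim\lVert A^{-1}\rVert^{-1}\lVert z-y'\rVert_1-O(M)$, and the growth rate may win. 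That is the main obstacle.

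(c) Handling the obstacle. Instead of the crude Lipschitz estimate, I bound the excursion contribution via Lemma \ref{lem:excursion}, equivalently Proposition \ref{prop:transferkernel}. The weight of paths from $y$ whose first step leaves the ideal region is $\sum_{n\ge2}\sum_{c_+}W_{j+n}(c_+)\langle c_+|F_{j+n,j}|y\rangle$. Proceeding exactly as in \eqref{eq:P-excursion}, with soundness \eqref{eq:max-of-F} against the rounded-path lower bound for $G$, this is at most $\sum_{n\ge2}B\theta^{n-1}$ times $\sum_{c_+}W_{j+n}(c_+)\langle c_+|G_{j+n,j}|y\rangle$. That last sum is in turn bounded by $\mathrm{const}^{\,n}$ times the ideal-path weight through $y'$. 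To keep this geometric growth from competing with $\theta^{n-1}$, I compare $G_{j+n,j}$ entries through $y'$ and through other ideal points using (a), which gives a factor $|\mathcal{I}|\,\mathrm{e}^{2\beta\lVert A\rVert M}$ rather than one growing in $n$. Choosing $M$ large makes $\theta$ small enough that the series converges.

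Combining these, $\langle b_+|K_\beta^{j_+-j}|y\rangle\le C'W(y')$. Together with Step 1 this gives
\[
\epsilon_0=\mathrm{e}^{-\beta(1+\lVert A\rVert_{1\to1})M}/C',
\]
which is independent of $\Lambda$.
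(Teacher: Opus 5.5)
Your proposal is correct and is essentially the paper's proof. Both arguments rest on three ingredients: the first-return decomposition of $\langle b_+|K_\beta^{j_+-j}|y\rangle$, a Harnack-type comparison of future weights across the bounded ideal region, and the soundness-versus-rounded-path estimate for excursions. The only difference is cosmetic: you import the excursion estimate as the entrywise ratio bound $\langle c_+|F_{j+n,j}|y\rangle\le B\theta^{n-1}\langle c_+|G_{j+n,j}|y\rangle$ from the proof of Lemma \ref{lem:excursion}, whereas the paper lower-bounds $\langle b_+|K_\beta^{j_+-j-1}|y'\rangle$ directly by a rounded path ending at $\mathcal{R}(\xi_{j+n})$.

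Two small remarks. First, the matrix element you perturb in (a) is $\langle w|K_\beta|z\rangle=\mathrm{e}^{-\beta\lVert z-Aw\rVert_1}$, which is $1$-Lipschitz in $z$ (this is \eqref{eq:harnack}); your $\lVert A\rVert_{1\to1}$ constant refers to the wrong matrix element but still gives a true bound. Second, no growth in $n$ ever appears in (c), because $\sum_{c_+\in\mathcal{I}_{j+n}}\langle b_+|K_\beta^{j_+-j-n}|c_+\rangle\langle c_+|G_{j+n,j}|y\rangle\le\sum_{z\in\mathcal{I}_{j+1}}\langle b_+|K_\beta^{j_+-j-1}|z\rangle\langle z|K_\beta|y\rangle$ is simply a sub-sum of paths, so $\theta<1$ already suffices.
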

\begin{proof}
    We find simple (and far from tight) upper and lower bounds on the (relative) fluctuations of the matrix elements in \eqref{eq:k=0Pj}.  First, we use the triangle inequality: for any $x,y,z\in \mathbb{Z}^2$, \begin{equation} \label{eq:harnack}
        \mathrm{e}^{-\beta \lVert y-z\rVert_1 }<\frac{\langle x|K_\beta |y\rangle}{\langle x|K_\beta |z\rangle }= \mathrm{e}^{-\beta \left( \lVert Ax-y\rVert_1 - \lVert Ax -z \rVert_1 \right)} < \mathrm{e}^{\beta \lVert y-z\rVert_1 }.
    \end{equation}
    Next, consider the following ``first return" identity, analogous to Proposition \ref{prop:transferkernel}: \begin{equation}
        \langle b_+|K^{j_+-j}_\beta|x\rangle = \sum_{y\in \mathcal{I}_{j+1}}\langle b_+|K_\beta^{j_+-j-1}|y\rangle \langle y|K_\beta |x\rangle + \sum_{n= 2}^{j_+-j}\sum_{y\in \mathcal{I}_{j+n}}\langle b_+|K_\beta^{j_+-j-n}|y\rangle\langle y|F_{j+n,j}|x\rangle.
    \end{equation}
    Using manipulations similar to \eqref{eq:P-excursion}, for any $n>1$, $y\in \mathcal{I}_{j+1}$ and $y\in \mathcal{I}_{j+1}$,  \begin{align}
         \langle b_+|K_\beta^{j_+-j-1}|y\rangle &> \langle b_+|K_\beta^{j_+-j-n}|\mathcal{R}(\xi_{j+n})\rangle \prod_{i=2}^{n-1}\langle \mathcal{R}(\xi_{j+i+1})|K_\beta|\mathcal{R}(\xi_{j+i})\rangle \cdot \langle \mathcal{R}(\xi_{j+2})|K_\beta|x\rangle \notag \\
         &> \langle b_+|K_\beta^{j_+-j-n}|\mathcal{R}(\xi_{j+n})\rangle \cdot Q \zeta^{n-2}
    \end{align}
where $\zeta$ and $Q$ were defined in \eqref{eq:zeta-Q-def}.  Now using \eqref{eq:max-of-F} along with \eqref{eq:harnack}, for any $y\in\mathcal{I}_{j+1}$, \begin{equation}
    \langle b_+|K^{j_+-j}_\beta|x\rangle <   |\widetilde{\mathcal{I}}|  \langle b_+|K_\beta^{j_+-j-1}|y\rangle \left[\frac{1}{Q} + \frac{\mathrm{e}^{\beta C}}{Q}\sum_{n=2}^\infty \left(\frac{\sigma}{\zeta}\right)^{n-2}\right] =: \mathcal{M} \langle b_+|K_\beta^{j_+-j-1}|y\rangle
\end{equation}
for some constant $\mathcal{M}$ independent of $j_+$.  Combining this inequality with \eqref{eq:k=0Pj}, we get the crude bound \begin{equation}
    \epsilon_0 := \frac{Q}{\mathcal{M}} >0
\end{equation}
which completes the proof.
\end{proof}

Define the auxiliary function $V:\widetilde{\mathcal{I}}\times \mathbb{Z}_{\ge 0} \rightarrow \mathbb{R}_+$ as (via the bra-ket notation) \begin{equation}
    \langle V| := \sum_{x\in \widetilde{\mathcal{I}}}\sum_{n=0}^\infty \langle x,n| r^n
\end{equation}for some coefficient \begin{equation}
    1<r<\theta^{-1}, \label{eq:rthetabound}
\end{equation}
which we will pick later.  
\begin{prop}\label{prop:bounded_V}
    Given any measure $ |\mu\rangle $ on $\widetilde{\mathcal{I}}\times \mathbb{Z}_{\ge 0}$, if $\langle V|\mu\rangle<\infty$, then $\langle V|P_j|\mu\rangle <\infty$.
\end{prop}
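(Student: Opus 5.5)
By linearity, and because all entries of $P_j$ and $V$ are nonnegative, it suffices to bound $\langle V|P_j|x,n\rangle$ for each basis state. The bound should have the form $\langle V|P_j|x,n\rangle \le a\, r^{n} + b$ with constants $a,b$ independent of $x$, $n$, $j$ and $\Lambda$. Then $\langle V|P_j|\mu\rangle \le a\langle V|\mu\rangle + b\,\mu(\widetilde{\mathcal{I}}\times\mathbb{Z}_{\ge0})$. The total mass satisfies $\mu(\widetilde{\mathcal{I}}\times\mathbb{Z}_{\ge0}) \le \langle V|\mu\rangle$ because $r>1$, so the right-hand side is finite. (If $\mu$ is a probability measure this step is trivial.) The proof therefore splits into the two cases $n>0$ and $n=0$ of Definition \ref{def:auxiliary}.

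\textbf{Case $n>0$.} The chain moves deterministically $|x,n\rangle\mapsto|x,n-1\rangle$. Hence $\langle V|P_j|x,n\rangle = r^{n-1}\le r^n$, and this case needs nothing beyond the definitions.

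\textbf{Case $n=0$.} Here
\begin{equation*}
\langle V|P_j|x,0\rangle = \sum_{x'}\langle x',0|P_j|x,0\rangle + \sum_{n'\ge 2}\sum_{x'} r^{n'-1}\langle x',n'-1|P_j|x,0\rangle .
\end{equation*}
The first sum is at most $1$, since $P_j$ is stochastic by Proposition \ref{prop:transferkernel}. For the second sum I will use the excursion bound directly. The quantity $\sum_{x'}\langle x',n'-1|P_j|x,0\rangle$ is exactly the conditional probability $\mathbb{P}^{\text{excursion}}_{j+n',j}$ of Lemma \ref{lem:excursion}, with $c_- = x+\mathcal{R}(\xi_j)$. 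To see this, compare \eqref{eq:nprime>1} with the first expression in \eqref{eq:P-excursion}. Lemma \ref{lem:excursion} then gives that this probability is less than $B\theta^{n'-1}$, so the second sum is bounded by
\begin{equation*}
B\sum_{n'\ge2} (r\theta)^{n'-1} = \frac{B\, r\theta}{1-r\theta} <\infty,
\end{equation*}
and the series converges precisely because of the choice \eqref{eq:rthetabound}. Thus $\langle V|P_j|x,0\rangle \le 1 + Br\theta/(1-r\theta)=:b$, uniformly in $x$, $j$ and $j_+$.

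Combining the two cases gives $\langle V|P_j|\mu\rangle \le r^{-1}\langle V|\mu\rangle + b\,\mu(\text{total})<\infty$. I expect the main point needing care to be the identification of the $n'$-step transition mass with the excursion probability of Lemma \ref{lem:excursion}. This requires that the lemma's bound is uniform in the starting point $c_-\in\mathcal{I}_j$ and in the boundary data $b_+$, which its proof provides through the max over $c_\pm$ in \eqref{eq:P-excursion}. It also requires that the sum over $x'$ ranges over exactly $\mathcal{I}_{j+n'}-\mathcal{R}(\xi_{j+n'})$. The interchange of the sums over $n'$ and $x'$ is justified by nonnegativity (Tonelli).
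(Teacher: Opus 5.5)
Your proposal is correct and follows the paper's proof essentially step for step. The paper splits $|\mu\rangle = |\mu_0\rangle + |\mu_>\rangle$ and uses the deterministic drop $n\to n-1$ to get $\langle V|P_j|\mu_>\rangle = r^{-1}\langle V|\mu_>\rangle$. It then bounds the $n=0$ part by $J = 1 + Br\theta/(1-r\theta)$ via Lemma~\ref{lem:excursion}, which gives $\langle V|P_j|\mu\rangle \le r^{-1}\langle V|\mu\rangle + J$; your only addition is the observation that the total mass is at most $\langle V|\mu\rangle$ (since $V\ge 1$), which covers unnormalized $\mu$ where the paper simply assumes normalization.
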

\begin{proof}
    Decompose normalized measure $|\mu\rangle = |\mu_0\rangle + |\mu_>\rangle$, where $|\mu_0\rangle$ contains terms with $n=0$ and $|\mu_>\rangle$ has $n>0$ respectively.  Notice that \begin{subequations}\begin{align}
        \langle V|P_j|\mu_>\rangle &=  r^{-1} \langle V|\mu_>\rangle, \label{eq:V>mu} \\
        \langle V|P_j|\mu_0\rangle &= \sum_{n=0}^\infty r^n \sum_{x\in\widetilde{\mathcal{I}}} \langle x ,n|P_j|\mu_0\rangle < 1+\sum_{n=1}^\infty r^n B \theta^{n} = 1+\frac{Br\theta }{1-r\theta} =: J. \label{eq:V0mu}
    \end{align}\end{subequations}
    where \eqref{eq:V>mu} comes from the fact that $P_j$ deterministically drops $n \rightarrow n-1$ if $n>0$, and \eqref{eq:V0mu} comes from \eqref{eq:excursion}.    Therefore for any $j$, \begin{equation}
        \langle V|P_j|\mu\rangle \le \frac{\langle V|\mu\rangle}{r}+J < \infty, \label{eq:Vexpbound}
    \end{equation}
    which completes the proof.
\end{proof}

\subsection{Path coupling}
\begin{defn}[Path coupling with flag]\label{def:pathcoupling}
    Let $\pi(X,Y,f)$ be a measure on $(\widetilde{\mathcal{I}}\times \mathbb{Z}_{\ge 0})^2 \times \lbrace 0,*\rbrace$, with $X,Y\in \widetilde{\mathcal{I}}\times \mathbb{Z}_{\ge0}$ and $f\in \lbrace 0, *\rbrace$, with \begin{subequations}
        \begin{align}
            \sum_{Y,f} \pi(X,Y,f) &= \mu(X), \\
            \sum_{X,f} \pi(X,Y,f) &= \nu(Y),
        \end{align}
    \end{subequations}
    where $f\in \lbrace 0 , *\rbrace$ is a flag variable that asserts that the chains have coupled, and we demand that if $X\ne Y$, \begin{equation}
        \pi(X,Y,*) = 0.
    \end{equation}  We say that $\pi(X,Y,f) \in \text{couple}(\mu,\nu)$ is a valid \textbf{coupling}.  Given a Markov chain on a single copy, we lift the transition rules to the doubled space such that \begin{subequations}\begin{align}
       \sum_{Y^\prime,f^\prime} \langle X^\prime Y^\prime f^\prime  |\mathcal{P}_j|XYf\rangle &=  \langle X^\prime |P_j|X\rangle,    \\
        \sum_{X^\prime, f^\prime} \langle X^\prime Y^\prime f^\prime  |\mathcal{P}_j|XYf\rangle & =\langle Y^\prime |P_j|Y\rangle.
    \end{align}\end{subequations}
    We say that a coupling is \textbf{suitable} if it has the following property:  for some $\epsilon>0$, \begin{equation}
       \sum_{x^{\prime\prime}\in\widetilde{\mathcal{I}}}\langle (x^{\prime\prime},0)(x^{\prime\prime},0),*| \mathcal{P}_j|(x,0)(x^\prime,0),0\rangle  > \epsilon 
    \end{equation}for any $x,x^\prime$, and for all $X,X^\prime,X^{\prime\prime}\in \widetilde{\mathcal{I}}\times \mathbb{Z}_{\ge 0}$, \begin{equation}
        \langle X^\prime X^{\prime\prime}f|\mathcal{P}_j|XX*\rangle = \delta_{X^\prime,X^{\prime\prime}} \delta_{f,*}\langle X^\prime|P_j|X\rangle , \label{eq:coupletogether}
    \end{equation}
    The last equation ensures that once the chains have coupled, they never uncouple.  Hence this is a standard path coupling of two Markov chains \cite{BubleyDyer1997PathCoupling}.
\end{defn}
\begin{prop}
    A suitable coupling exists with \begin{equation}
        \epsilon \ge \epsilon_0^2.
    \end{equation}
\end{prop}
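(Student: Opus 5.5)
We build the coupling explicitly, handling uncoupled pairs by cases on their second coordinates $(n,n')$ and using Proposition \ref{prop:0element-bound} as the only quantitative input. We take the uncoupled and coupled pieces separately.

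\textbf{Coupled states.} On states with flag $*$ (so $X=Y$), define $\mathcal{P}_j$ by \eqref{eq:coupletogether}. Both chains then make the same $P_j$-move and keep the flag $*$. This trivially has the correct marginals.

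\textbf{Uncoupled states with $n>0$ or $n'>0$.} Take $X=(x,n)$, $Y=(x',n')$ with flag $0$, and suppose at least one of $n,n'$ is positive. We use the independent (product) coupling $\langle X'|P_j|X\rangle\langle Y'|P_j|Y\rangle$ and keep the flag $0$; the only exception is that if $X'=Y'$ we may set the flag to $*$. Product couplings have the right marginals automatically. The suitability condition only constrains pairs of the form $(x,0),(x',0)$, so nothing more is needed here.

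\textbf{Uncoupled states with $n=n'=0$: the substantive case.} Write $p(\cdot)=\langle\cdot|P_j|(x,0)\rangle$ and $q(\cdot)=\langle\cdot|P_j|(x',0)\rangle$. Use the independent coupling $p(X')q(Y')$, and move the diagonal mass on $\{X'=Y'=(x'',0)\}$ to the flag $*$. The marginals are unchanged, because the flag is an extra coordinate that we sum over. The probability of coupling in one step is then
\begin{equation}
\sum_{x''} p\big((x'',0)\big)\,q\big((x'',0)\big).
\end{equation}
Fix any single $x''$ with $x''+\mathcal{R}(\xi_{j+1})\in\mathcal{I}_{j+1}$. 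Such an $x''$ exists, for example $x''=0$, since $M>\mathcal{R}_{\text{tol}}$. By Proposition \ref{prop:0element-bound}, both $p((x'',0))>\epsilon_0$ and $q((x'',0))>\epsilon_0$. Hence the sum above exceeds $\epsilon_0^2$, giving $\epsilon\ge\epsilon_0^2$.

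\textbf{Remaining check.} The one point needing care is whether $x$ and $x'$ both satisfy the hypotheses of Proposition \ref{prop:0element-bound} at time $j$. They do, because at time $j$ the chain only occupies $x\in\mathcal{I}_j-\mathcal{R}(\xi_j)$. A sharper bound is available from a maximal coupling, which gives $\sum_{x''}\min(p,q)\ge\epsilon_0$, but the product coupling already yields the stated $\epsilon_0^2$ more simply.
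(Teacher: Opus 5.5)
Your construction is correct and is essentially the paper's own. The chains evolve independently while the flag is $0$, the flag switches to $*$ when both land on the same ideal state $(x'',0)$, they evolve identically afterwards, and Proposition \ref{prop:0element-bound} applied to the common target $x''=0$ bounds the one-step coalescence probability below by $\epsilon_0^2$. One small difference: the paper flips the flag only, and always, on ideal states (this is \eqref{eq:flip-flag}, which is used later in Lemma \ref{lem:W_contraction} and in lifting back via $\mu^{\text{repair}}$), so you should make that flip mandatory in every case and drop your optional coalescence at $X'=Y'$ with $n>0$.
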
  
\begin{proof}
    Consider a coupling where we choose the two coupled chains to evolve independently if $f=0$, while they evolve together if $f=*$, so \eqref{eq:coupletogether} holds.  More concretely, we choose transition rules such that: \begin{enumerate}
        \item When $f=0$, the two chains evolve independently; when $f=*$, the two chains evolve identically as in \eqref{eq:coupletogether}.
        \item When $f=0$, if the two chains (independently) are sent to the same state $(x,0)$ in the ideal region, the flag $f$ is subsequently set to $*$.  Otherwise, maintain $f=0$.
    \end{enumerate} 
    The important thing to keep in mind is that: \begin{equation}\label{eq:flip-flag}
        \langle (x,0)(x,0)0|\mathcal{P}_j  =0,
    \end{equation}
    and we always coalesce the two chains in the ideal region if possible.  With this in mind, we can use Proposition \ref{prop:0element-bound} to see that  $\langle (0,0)(0,0)*| P_j\otimes P_j|(x,0)(x^\prime,0)0\rangle \ge \epsilon_0^2$.  
\end{proof} 

The key step in the proof of the theorem is to show that using a suitable coupling chain $\mathcal{P}_j$,  the total variation distance between any two (bounded $V$) measures exponentially shrinks with $j$.  To establish this, we follow a technique described in  \cite{Hairer:2011harris}. \begin{defn}[Optimal coupling cost] \label{def:optimalcost}
    For $\gamma>0$, define the \textbf{optimal coupling cost} \begin{equation}
        W(\mu,\nu) := \inf_{\pi \in \text{couple}(\mu,\nu)} \langle W|\pi\rangle
    \end{equation}
    where \begin{equation}
        \langle W| := \sum_{X,X^\prime} \langle XX^\prime 0|  \left[ 1  + \gamma V(X) +\gamma  V(X^\prime) \right]. \label{eq:W_vec}
    \end{equation}
\end{defn} 
\begin{prop}\label{prop:TV}
    $W$ bounds the total variation distance: \begin{equation}
        W(\mu,\nu) \ge \lVert \mu - \nu \rVert_{\text{TV}} := \frac{1}{2}\sum_{X} |\mu(X)-\nu(X)|. \label{eq:W_TV}
    \end{equation}
\end{prop}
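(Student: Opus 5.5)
The plan is the standard coupling inequality. Fix any coupling $\pi\in\text{couple}(\mu,\nu)$. Since $V\ge 0$ (indeed $V\ge 1$ pointwise) and $\gamma>0$, the weight vector in \eqref{eq:W_vec} satisfies $\langle W|\pi\rangle \ge \sum_{X,X^\prime}\pi(X,X^\prime,0)$. Thus it suffices to show that for every valid coupling,
\begin{equation}
\lVert \mu-\nu\rVert_{\text{TV}} \le \sum_{X,X^\prime}\pi(X,X^\prime,0),
\end{equation}
and then take the infimum over $\pi$.

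For the key step I will use the flag constraint in Definition \ref{def:pathcoupling}, namely $\pi(X,Y,*)=0$ whenever $X\ne Y$. Together with the marginal conditions, this gives, for each state $X$,
\begin{equation}
\mu(X)-\nu(X) = \sum_{Y}\pi(X,Y,0) + \pi(X,X,*) - \sum_{Y}\pi(Y,X,0) - \pi(X,X,*) = \sum_{Y}\pi(X,Y,0)-\sum_{Y}\pi(Y,X,0).
\end{equation}
The $*$ contributions cancel because they only live on the diagonal. Taking absolute values and summing over $X$ with the triangle inequality yields $\sum_X|\mu(X)-\nu(X)|\le 2\sum_{X,Y}\pi(X,Y,0)$, which is exactly twice the desired bound.

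There is a small subtlety: one might want the sharper statement that only off-diagonal flag-$0$ mass counts. I could repeat the computation with diagonal flag-$0$ terms cancelling as well, which gives $\lVert\mu-\nu\rVert_{\text{TV}}\le \sum_{X\ne Y}\pi(X,Y,0)$. The cruder bound already suffices, since $\langle W|\pi\rangle$ weights every flag-$0$ entry by at least $1$. I expect no real obstacle here. The only point to check is that all sums converge absolutely, which holds because $\pi$ is a finite (probability) measure, so the rearrangements are legitimate even on the countable state space $\widetilde{\mathcal{I}}\times\mathbb{Z}_{\ge0}$.
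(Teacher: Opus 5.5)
Your proof is correct and is essentially the paper's argument: both drop the $\gamma V$ terms to get $\langle W|\pi\rangle\ge\pi(f=0)$, then apply the standard coupling inequality using that the flag-$*$ mass sits on the diagonal. The difference is only cosmetic: the paper bounds $\pi(f=*)\le\sum_X\min(\mu(X),\nu(X))$ and uses $1-\sum_X\min(\mu(X),\nu(X))=\lVert\mu-\nu\rVert_{\mathrm{TV}}$, whereas you write $\mu(X)-\nu(X)$ as a difference of flag-$0$ marginals and apply the triangle inequality, which is equivalent and does not need normalization.
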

\begin{proof}
    $W$ is an increasing function of $\gamma$, so it is lower bounded by its value at $\gamma=0$.  Hence $W(\pi) \ge  \pi(f=0)$.  Clearly, \begin{equation}
       1-\pi(f=0) =  \pi(f=*) \le \sum_{X}\min(\mu(X),\nu(X)).
    \end{equation} A textbook manipulation then shows \eqref{eq:W_TV}. 
\end{proof}

The goal is now to show that for arbitrary suitable coupling and initial conditions, $W(\pi)$ decreases exponentially at an $N$-independent rate.  To show this, we will work with the $L$-step Markov transition matrix \begin{equation}
    \mathcal{Q}_j := \mathcal{P}_{j+L-1}\mathcal{P}_{j+L-2}\cdots \mathcal{P}_j,
\end{equation}
where we define $Q_j := P_{j+L-1}\cdots P_j$ analogously.  The integer $L$ is chosen such that the following inequality holds: \begin{equation}
    r^L>  \frac{2 J}{(1-r^{-1})^2}. \label{eq:define_L}
\end{equation}
\begin{lem}\label{lem:W_contraction}
    For sufficiently small $\gamma$, there exists $0<\rho<1$ such that if $\langle V|\mu\rangle$ and $\langle V|\nu\rangle$ are finite,  \begin{equation}
        W(Q_j |\mu\rangle, Q_j |\nu\rangle) \le \rho \cdot W(|\mu\rangle, |\nu\rangle ).  \label{eq:W_contraction}
    \end{equation}
\end{lem}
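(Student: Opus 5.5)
This is a Harris-type contraction in the style of \cite{Hairer:2011harris}, built from two ingredients. The first is a Lyapunov drift bound iterated $L$ times. The second is a minorization (coupling) step that applies whenever both chains sit at low $V$ in the ideal region.

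Since $W$ is an infimum over couplings, it suffices to work with a fixed near-optimal coupling $\pi\in\text{couple}(\mu,\nu)$. We push $\pi$ forward by the $L$-step suitable coupling $\mathcal{Q}_j$ and show
\[
\langle W|\mathcal{Q}_j|\pi\rangle\le\rho\langle W|\pi\rangle .
\]
Taking the infimum over $\pi$ then gives \eqref{eq:W_contraction}. Mass with $f=*$ contributes zero to $W$ and stays coupled forever by \eqref{eq:coupletogether}, so by linearity it is enough to prove the bound pointwise for a delta measure at $(X,Y,0)$. Concretely, we will show
\[
\langle W|\mathcal{Q}_j|XY0\rangle\le\rho\,[1+\gamma V(X)+\gamma V(Y)].
\]

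\textbf{Step 1: iterated drift.} Iterating \eqref{eq:Vexpbound} $L$ times gives
\[
\langle V|Q_j|X\rangle\le r^{-L}V(X)+J\sum_{k<L}r^{-k}\le r^{-L}V(X)+\frac{J}{1-r^{-1}} .
\]
Since $\langle W|\mathcal{Q}_j|XY0\rangle\le 1+\gamma\langle V|Q_j|X\rangle+\gamma\langle V|Q_j|Y\rangle$, we obtain
\[
\langle W|\mathcal{Q}_j|XY0\rangle\le 1+\gamma r^{-L}\bigl(V(X)+V(Y)\bigr)+\frac{2\gamma J}{1-r^{-1}} .
\]
We then split into two cases according to the size of $V(X)+V(Y)$.

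\textbf{Case (a): $V(X)+V(Y)\ge R$}, with $R$ a large threshold to be chosen. Here the constant term $1+2\gamma J/(1-r^{-1})$ is absorbed by a fraction of $\gamma(V(X)+V(Y))$. We need
\[
1+\frac{2\gamma J}{1-r^{-1}}+\gamma r^{-L}(V(X)+V(Y))\le\rho\,[1+\gamma(V(X)+V(Y))],
\]
which holds for some $\rho<1$ once $\gamma R$ is large compared to $1$. The choice of $L$ in \eqref{eq:define_L}, giving $r^{-L}\ll 1$ relative to $J$, is what leaves room to do this. The threshold $R$ then has to be set as a function of $\gamma$, roughly $R\gtrsim 1/\gamma$.

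\textbf{Case (b): $V(X)+V(Y)<R$.} Both chains have $n$-counter at most $\log_r R$. If $L>\log_r R+1$, each chain reaches an $n=0$ state within $L-1$ steps, with no randomness in that time. The remaining worry is that a chain at $n=0$ can immediately re-excurse, so the two chains need not be simultaneously at $n=0$ at a common time. I would handle this by showing that from any pair of states with bounded counters, within $L$ steps the pair couples with probability at least $\delta>0$ uniform in $j$ and $\Lambda$. To get this, lower-bound the probability that each chain, once at $n=0$, stays in the ideal region, using Proposition~\ref{prop:0element-bound}: each $n=0\to n=0$ step has probability at least $\epsilon_0$ times the number of target states. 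Once both chains are at $n=0$ at the same time, the suitable coupling merges them with probability at least $\epsilon_0^2$. This gives $\delta\ge\epsilon_0^{2L}$ or so.

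Hence
\[
\langle W|\mathcal{Q}_j|XY0\rangle\le(1-\delta)+\gamma\bigl[r^{-L}R+\text{const}\bigr],
\]
using that the $V$-terms survive only on the uncoupled event. The right-hand side is at most $\rho\cdot 1$ for $\gamma$ small.

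\textbf{Main obstacle: the circularity in $R,L,\gamma$.} Case (a) wants $R\gtrsim1/\gamma$, while case (b) wants $L\gtrsim\log_r R$ and $\gamma\,r^{-L}R$ small. These are compatible because $\log$ growth in $L$ only costs $\delta\sim\epsilon_0^{2L}$, which is polynomial in $R$. However, balancing $\delta$ against $\gamma$ then needs care.

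To avoid this, I would first try the cleaner route of \cite{Hairer:2011harris}: take $R$ fixed by the drift constants, $R:=4J/(1-r^{-1})\cdot r^{L}$-type, consistent with \eqref{eq:define_L}. Then pick $\gamma$ small last, so that in case (a) the contraction comes from $\gamma$-weighted drift plus the bound
\[
1\le\frac{\gamma(V(X)+V(Y))}{\gamma R},
\]
and in case (b) from the $\delta$ minorization. The resulting $\rho$ is something like $\max(1-\delta/2,\;(2+\gamma R)/(2+2\gamma R))$. All constants are independent of $j$ and $|\Lambda|$, as required.
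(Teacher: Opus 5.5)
Your architecture matches the paper's. You reduce to point masses $(X,Y,0)$, iterate \eqref{eq:Vexpbound} over the $L$ steps, and split on $S:=V(X)+V(Y)$. For large $S$ the contraction comes from the drift, for small $S$ from a coalescence probability built from Proposition \ref{prop:0element-bound}, and $\gamma$ is fixed last. The trouble is that case (a) is misanalysed, and this is what creates the circularity you then struggle with. The constant $1$ does not need to be absorbed into $\gamma S$, because it appears on both sides. Set $J':=J/(1-r^{-1})$. The ratio $(1+2\gamma J'+\gamma r^{-L}S)/(1+\gamma S)$ is decreasing in $S$. At $S=R$ it is below $1$ if and only if $R(1-r^{-L})>2J'$, and $\gamma$ has cancelled from this condition. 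Explicitly,
\[
1-\rho_{\mathrm{far}}=\frac{\gamma\bigl(R(1-r^{-L})-2J'\bigr)}{1+\gamma R}>0\quad\text{for every }\gamma>0 .
\]
So $R\gtrsim 1/\gamma$ is never needed. The device $1\le S/R$ in your last paragraph is the same unnecessary absorption idea. Your worry about balancing $\delta$ against $\gamma$ under the original requirement is justified. With $L\gtrsim\log_r(1/\gamma)$ you get $\delta\sim\epsilon_0^{2L}=\gamma^{2\log(1/\epsilon_0)/\log r}$, which is typically far smaller than the $O(\gamma)$ error terms, so that route would not close.

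Second, your final choice $R\approx\frac{4J}{1-r^{-1}}\,r^{L}$, taken literally, contradicts your own case (b) requirement $L>\log_r R+1$. It admits counters up to $L+\log_r(4J')>L$. A chain may then not return to $n=0$ inside $\lbrace j,\ldots,j+L-1\rbrace$, and you have no coalescence bound for such pairs. The threshold must lie in the window $2J'/(1-r^{-L})<R\le 1+r^{L}$; the upper end keeps both counters below $L$. This window is nonempty precisely because of \eqref{eq:define_L}, which gives $r^{L}-r^{-L}>2J'$. The paper takes $R=1+r^{L}$. Then $S<1+r^{L}$ forces both counters below $L$, and $\rho_{\mathrm{far}}<1$ for every $\gamma$. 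Only afterwards is $\gamma$ shrunk so that $\rho_{\mathrm{near}}=1-\epsilon_0^{L+1}+\gamma(2J'+1+r^{-L})<1$. With the constants ordered $r\to J\to L\to R\to\gamma$, your argument goes through as written.
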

\begin{proof}
    It suffices to consider the case $|\mu\rangle = |X\rangle$ and $|\nu \rangle = |X^\prime\rangle$ and present a suitable coupling $\mathcal{Q}_j$ that obeys \begin{equation}
        \langle W|\mathcal{Q}_j|XX^\prime 0 \rangle \le \rho \langle W|XX^\prime 0\rangle.  \label{eq:WXXprime}
    \end{equation}  By Definition \ref{def:optimalcost}, $W$ is defined as an infimum over all path couplings, so the infimum will also obey \eqref{eq:W_contraction}; by linearity of $\langle W|\pi\rangle$, \eqref{eq:WXXprime} extends to $\langle W|\mathcal{Q}_j|\pi\rangle \le \rho \langle W|\pi\rangle$ for any coupling $\pi$.  In what follows, take $X\ne X^\prime$, because the minimal coupling has $f=*$ when $X=X^\prime$.

    First, we claim that
    \begin{equation}
        \langle W| \mathcal{Q}_j|XX^\prime\rangle \le 1 + \gamma \langle V|Q_j|X\rangle +  \gamma \langle V|Q_j|X^\prime \rangle. \label{eq:W_separated}
    \end{equation} 
    The inequality is saturated if and only if $\mathbb{P}[f=*]=0$ -- the chains have not coupled with certainty.  With this in mind, let us apply repeatedly \eqref{eq:Vexpbound} to bound $\langle V|Q_j|X\rangle$: \begin{align}
        \langle V|Q_j|(x,k)\rangle \le \left\lbrace \begin{array}{ll} r^{k-L} &\ k \ge L \\ (1-r^{-1})^{-1}J &\ k < L \end{array}\right.,
    \end{align}
    or uniformly, \begin{equation}
        \langle V|Q_j|\mu\rangle \le r^{-L}\langle V|\mu\rangle + (1-r^{-1})^{-1}J.
    \end{equation}

    Now, suppose that $\langle W|XX^\prime0\rangle \ge  1+\gamma (1+r^{L})$ with $L$ defined in \eqref{eq:define_L}. Then we find \begin{align}
        \langle W| \mathcal{Q}_j|XX^\prime0\rangle &\le 1 + \gamma r^{-L}\frac{\langle W|XX^\prime\rangle -1}{\gamma} + \frac{2\gamma J}{1-r^{-1}} \notag \\
        &\le \rho_{\text{far}}\langle W|XX^\prime0\rangle \label{eq:rhofar-used}
    \end{align}
    where we have employed \eqref{eq:W_separated} and defined \begin{equation}
        \rho_{\text{far}} := r^{-L} + \dfrac{\displaystyle 1 - r^{-L} + \dfrac{2\gamma J}{1-r^{-1}}}{\displaystyle 1 + \gamma (1+r^L)}.
    \end{equation} Notice that $\rho_{\text{far}}<1$ since, due to \eqref{eq:define_L}, \begin{equation}
        1-\rho_{\text{far}} = \gamma  \dfrac{\displaystyle  r^L-r^{-L}-\dfrac{2 J}{1-r^{-1}} }{\displaystyle 1 + \gamma (1+r^L)} > \gamma  \dfrac{\displaystyle  r^{L-1}-r^{-L}}{\displaystyle 1 + \gamma (1+r^L)} >  0.
    \end{equation} We note that \eqref{eq:rhofar-used} is an inequality because the chain could in principle couple the two chains at a state $\notin \widetilde{\mathcal{I}}$, and we have neglected this possibility in our estimate. 

    Now suppose that $\langle W|XX^\prime0\rangle < 1 + \gamma (1+r^{L})$.  Then we must have (in principle) both chains hit the ideal ($n=0$) region at some point during the interval from $j$ to $j+L-1$: if $\langle W|XX^\prime\rangle = 1 + \gamma(1+r^L)$ it means that one chain is at most $L$ steps away from $n=0$ (if the other already has $n=0$).  Since we see a smaller value than this, we actually know that both chains must individually reach $n=0$ for at least one time in $\lbrace j,\ldots, j+L-1\rbrace$.    Because the chain is suitable, there is then a finite probability that the chains will coalesce during these $L$ steps.  A very crude lower bound on this probability is to evaluate the probability of the following event:  after each chain hits $n=0$, they each follow the rounded solution $(0,0)$ until they coalesce?  If chain $X$ hits $n=0$ at time $i$ and $X^\prime$ hits at time $i^\prime$,  using Proposition \ref{prop:0element-bound} we find: \begin{equation}
        \mathbb{P}[\text{this coalescence event}] \ge \epsilon_0^{|i^\prime-i|+2}\ge \epsilon_0^{L+1}. 
    \end{equation}
    If this event happens, we lose weight $\ge 1$ in $W$ because, by \eqref{eq:flip-flag}, the flag variable becomes $*$.  Hence we find: \begin{align}
        \langle W|\mathcal{Q}_j|XX^\prime 0 \rangle &< -\epsilon_0^{L+1} + 1 +  \frac{2\gamma J}{1-r^{-1}} + r^{-L}\left(\langle W|XX^\prime\rangle -1 \right) \notag \\
        &< 1-\epsilon_0^{L+1} +  \frac{2\gamma J}{1-r^{-1}} + \gamma \left(1+r^{-L} \right) =: \rho_{\text{near}} .
    \end{align}
    Choose $\gamma$ sufficiently small so that $\rho_{\text{near}}<1$ as well, and take \begin{equation}
        \rho = \max(\rho_{\text{near}},\rho_{\text{far}}) < 1.
    \end{equation}
    Since this was not an optimal coupling, the optimal coupling must obey \eqref{eq:W_contraction}.
\end{proof}
\begin{cor}\label{cor:left_boundary}
    For any $j_-<p<q<j_+$, let $\mu^{\lbrace p, q\rbrace}_{\Lambda,\beta, b_-,b_+}$ denote the Gibbs state restricted to the interval $\lbrace p, \ldots, q\rbrace$, with $b_\pm\in \mathcal{I}_{j_\pm}$.   Consider changing the domain of interest to $\Lambda^\prime = \lbrace j_-^\prime ,\ldots ,j_+\rbrace$ with $j_-^\prime < j_-$ and new boundary condition $b_-^\prime \in \mathcal{I}_{j_-^\prime}$.  Then there exists  $\mathcal{A}<\infty$ such that \begin{equation}
        \left\lVert \mu^{\lbrace p, q\rbrace}_{\Lambda,\beta, b_-, b_+} -\mu^{\lbrace p, q\rbrace}_{\Lambda^\prime ,\beta, b^\prime_-, b_+} \right\rVert_{\mathrm{TV}} < \mathcal{A} \rho^{(p-j_-)/L}. \label{eq:TV_right_moving_contraction}
    \end{equation}
\end{cor}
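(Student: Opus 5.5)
The idea is to realize both Gibbs states on $\lbrace p,\ldots,q\rbrace$ through the same right-conditioned auxiliary chain $P_j$ of Definition \ref{def:auxiliary}, and then contract their difference with Lemma \ref{lem:W_contraction}. The starting observation is that the transition probabilities \eqref{eq:k=0Pj} and \eqref{eq:nprime>1} depend only on $b_+$ and $j_+$, not on the left endpoint. So for both $\Lambda$ and $\Lambda^\prime$, the Markov random field property (Proposition \ref{prop:mrf}) says that reading the Gibbs state from left to right gives the same inhomogeneous chain $P_j$. The two cases differ only in their initial law at time $j_-$.

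For $\Lambda$, the initial law is the point mass $|(b_- - \mathcal{R}(\xi_{j_-}),0)\rangle$. For $\Lambda^\prime$, it is the law $\nu$ of the chain state at time $j_-$, obtained by running $P$ from $b_-^\prime$ at $j_-^\prime$. In the doubled chain a point mass has $V=1$. For $\nu$, iterating \eqref{eq:Vexpbound} gives $\langle V|\nu\rangle \le (1-r^{-1})^{-1}J + 1$, uniformly in $j_-^\prime$.

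One subtlety must be handled first. At time $j_-$ the $\Lambda^\prime$ chain may be in the middle of an excursion, i.e.\ in a state $(x,n)$ with $n>0$, and $x_{j_-}$ then lies in $\mathcal{N}_{j_-}$. The state space $\widetilde{\mathcal{I}}\times\mathbb{Z}_{\ge0}$ handles this automatically: the state at time $j_-$ is still a well-defined element of the chain, and $V$ controls its tail. So I will not condition on $x_{j_-}$; I will compare laws on the auxiliary chain throughout.

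Next I apply Lemma \ref{lem:W_contraction} $\lfloor (p-j_-)/L\rfloor$ times, using the blocks $Q_{j_-}, Q_{j_-+L},\ldots$. The hypotheses of finite $V$ propagate by Proposition \ref{prop:bounded_V}. This gives
\begin{equation*}
W(\mu_p,\nu_p)\le \rho^{\lfloor (p-j_-)/L\rfloor} W(\delta,\nu)\le \rho^{-1}\rho^{(p-j_-)/L}\bigl(1+\gamma+\gamma\langle V|\nu\rangle\bigr),
\end{equation*}
where $\mu_p,\nu_p$ are the two chain laws at time $p$. The remaining steps at the end of the interval, when $p-j_-$ is not a multiple of $L$, only shrink the $V$ terms by \eqref{eq:Vexpbound}. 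I then define $\mathcal{A}$ as the prefactor above, which is finite and independent of $j_-^\prime$, $b_\pm$ and $j_+$.

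Finally, I transfer the bound back to configurations on $\lbrace p,\ldots,q\rbrace$. Run the coupling from time $p$ onward. Once the flag is $*$ the chains never separate, by \eqref{eq:coupletogether}. Moreover, on the coupled event the full trajectories through $\lbrace p,\ldots,q\rbrace$ agree, including the interior of excursions. This is because the excursion path given its endpoints has the same conditional law $F_{j+n,j}$ in both systems, so it can be sampled identically. Hence the total variation of the $\lbrace p,\ldots,q\rbrace$ marginals is at most $\pi_p(f=0)\le W(\mu_p,\nu_p)$, by Proposition \ref{prop:TV}.

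I expect the main obstacle to be exactly this reconstruction step. The auxiliary chain records only the return points to the ideal region and the excursion lengths. I have to argue carefully that the map from the auxiliary-chain path, together with the conditionally independent excursion interiors, to the configuration $(x_p,\ldots,x_q)$ is the same for both boundary conditions. I also have to deal with an excursion that straddles $p$, which requires coupling the excursion interior rather than just the state. My first attempt will extend the coupling so that excursion interiors are drawn identically whenever the flag is $*$, and absorb the straddling case into the $f=0$ event.
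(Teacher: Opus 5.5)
Your proposal is essentially the paper's proof. It uses the same right-conditioned chain $P_j$ (which depends only on $b_+$ and $j_+$), and the same pair of initial laws $|b_- - \mathcal{R}(\xi_{j_-})\rangle$ and $P_{j_--1}\cdots P_{j_-^\prime}|b_-^\prime - \mathcal{R}(\xi_{j_-^\prime})\rangle$, with $V$ controlled by \eqref{eq:Vexpbound}. It then applies Lemma~\ref{lem:W_contraction} repeatedly followed by Proposition~\ref{prop:TV}, and lifts back to configurations the same way the paper's $\mu^{\text{repair}}$ step does: excursion interiors are drawn identically only when the chains coalesced before the excursion began.

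To make that last step airtight, do not pass through the infimum $W(\mu_p,\nu_p)$, because its optimal coupling may pair mid-excursion states whose excursions began at different points. Instead, bound the on-site discrepancy by $\pi_p(f=0)\le\langle W|\pi_p\rangle$ for the explicit Markovian coupling run from time $j_-$. Its flag flips only at ideal states $(x,0)$, so any excursion straddling $p$ on the coupled event has a common starting point. Then use the coupling-level contraction $\langle W|\mathcal{Q}_j|\pi\rangle\le\rho\langle W|\pi\rangle$ established inside the proof of Lemma~\ref{lem:W_contraction}.
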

 \begin{proof}
    Consider the distributions $|\nu\rangle := |b_- - \mathcal{R}(\xi_{j_-})\rangle$ and \begin{equation}
            |\nu^\prime \rangle := P_{j_--1} \cdots P_{j_-^\prime} |b_-^\prime  - \mathcal{R}(\xi_{j_-^\prime} )\rangle.
        \end{equation}  In part using Proposition \ref{prop:bounded_V}, notice $\langle V|\nu\rangle = 1$ and \begin{equation}
            \langle V|\nu^\prime\rangle \le \frac{J}{1-r^{-1}}.
        \end{equation} are both bounded. Moreover,  \begin{equation}
            W(|\nu\rangle, |\nu^\prime\rangle) \le 1 + \gamma + \frac{\gamma J}{1-r^{-1}} =: \mathcal{A} \rho^{1-L}. \label{eq:mathcal_A_def}
        \end{equation} Then apply Lemma \ref{lem:W_contraction} to obtain the exponential shrinking of $W$ with further evolution.  Applying Proposition \ref{prop:TV} shows that the ``time" evolution of $|\nu\rangle$ and $|\nu^\prime\rangle$ converge exponentially quickly in total variation.\footnote{We added the unimportant factor $\rho^{1-L}$ to \eqref{eq:mathcal_A_def} because Lemma \ref{lem:W_contraction} requires steps of size $L$, while we do not impose that restriction in \eqref{eq:TV_right_moving_contraction}.}  

        It now remains to ``lift" this convergence theorem to the total variation convergence of the distributions on the on-site state space $\mathbb{Z}^2$, rather than $\widetilde{\mathcal{I}}\times \mathbb{Z}_{\ge 0}$.  Here we simply reverse the construction of Definition \ref{def:auxiliary} as follows: to convert the Markov chain $(\ldots, X_i, \ldots)$ to $(\ldots, x_i, \ldots)$, take each $X_i = (\widetilde{d}_i,0) \rightarrow \widetilde{d}_i+\mathcal{R}(\xi_i)$.  For each sequence $(\ldots, X_{i},\ldots, X_{i+n-1}, \ldots)$ with $X_{i+j} = (\widetilde{d}_{i+n}, n-j)$, sample $(x_i, \ldots, x_{i+n-1})$ from the distribution \begin{align}
            \mu^{\text{repair}}_\beta &(x_i,\ldots, x_{i+n-1} | x_{i+n},x_{i-1}) \notag \\
            &= \mathbb{I}(x_{i+\ell}\in \mathcal{N}_{i+\ell} \text{ for } 0\le \ell<n)  \frac{\langle x_{i+n}|K_\beta |x_{i+n-1}\rangle \cdots \langle x_i |K_\beta |x_{i-1}\rangle }{\langle x_{i+n}|F_{i+n,i-1}|x_{i-1}\rangle }
        \end{align}
        conditioned on the endpoints $X_{i-1},X_{i+n}\in \widetilde{\mathcal{I}}$, where we again used \eqref{eq:rounded_displaced_coordinates}.   For the path coupling problem, the sampling from $\mu^{\text{repair}}$ is independent for each chain until the flag $f=*$ indicates that they coupled \emph{before} the excursion to the non-ideal region.  This then ensures that the boundary conditions of $\mu_\beta^{\text{repair}}$ match for both chains, safely allowing us to use the same draw for sampling.   This establishes \eqref{eq:TV_right_moving_contraction} for the desired  $\mu^{\lbrace p, q\rbrace}_{\Lambda,\beta, b_-, b_+} $.
    \end{proof}

\subsection{Proof of Theorem \ref{thm:exist}}\label{sec:thmproof}
Corollary \ref{cor:left_boundary} is very suggestive that the Gibbs state is independent of details (other than $c$) associated with the left boundary condition.  We'll make that precise shortly.  First, we observe the following corollary, which follows immediately from the arguments which led to Corollary \ref{cor:left_boundary}, but now applied to an analogous inhomogeneous Markov chain defined by reading the Gibbs state from right to left: 
\begin{cor}\label{cor:right_boundary}
        For any $j_-<p<q<j_+$, let $\mu^{\lbrace p, q\rbrace}_{\Lambda,\beta, b_-,b_+}$ denote the Gibbs state restricted to the interval $\lbrace p, \ldots, q\rbrace$, with $b_\pm\in \mathcal{I}_{j_\pm}$.    Consider changing the domain to $\Lambda^\prime = \lbrace j_- ,\ldots ,j_+^\prime\rbrace$ with $j_+^\prime > j_+$ and new boundary condition $b_+^\prime \in \mathcal{I}_{j_+^\prime}$. Then there exists a constant $\mathcal{A}^\prime<\infty$ and $0<\rho^\prime<1$ such that \begin{equation}
        \lVert \mu^{\lbrace p, q\rbrace}_{\Lambda,\beta, b_-, b_+} -\mu^{\lbrace p, q\rbrace}_{\Lambda^\prime ,\beta, b_-, b_+^\prime} \rVert_{\mathrm{TV}} < \mathcal{A}^\prime  (\rho^\prime)^{(j_+-q)/L}. \label{eq:TV_left_moving_contraction}
    \end{equation}
\end{cor}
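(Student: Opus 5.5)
The plan is to mirror the construction behind Corollary \ref{cor:left_boundary}, with the roles of left and right interchanged. The Gibbs state $\mu_{\Lambda,\beta,b_-,b_+}$ is a Markov random field (Proposition \ref{prop:mrf}). So it can be read from right to left as an inhomogeneous Markov chain. Its transition probabilities at ``time'' $j$ (now running downward from $j_+$) are obtained by conditioning on the fixed \emph{left} endpoint $b_-$. Concretely, I would define $\overleftarrow{P}_j$ on $\widetilde{\mathcal{I}}\times\mathbb{Z}_{\ge0}$ by replacing, in Definition \ref{def:auxiliary}, the factors $\langle b_+|K_\beta^{j_+-j-n'}|\cdot\rangle$ with $\langle \cdot|K_\beta^{j-n'-j_-}|b_-\rangle$. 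The kernels $G_{j,j-1}$ and $F_{j,j-n'}$ are then applied from the left, and we normalize by $\langle x+\mathcal{R}(\xi_j)|K_\beta^{j-j_-}|b_-\rangle$. The first-return decomposition of Proposition \ref{prop:transferkernel} applies equally when we classify paths by their last visit to the ideal region before site $j$. This makes $\overleftarrow{P}_j$ stochastic.

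Next I would verify the three inputs to the contraction argument for the reversed chain.

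First, the excursion bound of Lemma \ref{lem:excursion} depends only on the ratio $\max F/\min G$. By Theorem \ref{thm:sound}, that ratio is insensitive to which boundary condition is conditioned on. This gives the same $B,\theta$.

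Second, Proposition \ref{prop:0element-bound} needs a reversed Harnack inequality. Here the variation is in the \emph{left} argument of $\langle y|K_\beta|x\rangle$, i.e.\ $\|y-Ax\|_1$ versus $\|y'-Ax\|_1$. The ratio is bounded by $e^{\beta\|y-y'\|_1}$, with no factor of $A$. The lower bound along the rounded ideal path is identical. This yields some $\epsilon_0'>0$, possibly different from $\epsilon_0$ but still independent of $|\Lambda|$.

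Third, Proposition \ref{prop:bounded_V} transfers verbatim with a new constant $J'$.

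Having these, I would choose $r'$ and $L'$ via \eqref{eq:rthetabound} and \eqref{eq:define_L} with the primed constants. Lemma \ref{lem:W_contraction} then gives a contraction factor $\rho'<1$ for $\overleftarrow{Q}_j$.

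Finally, I would run the coupling exactly as in the proof of Corollary \ref{cor:left_boundary}. Start one chain from $|b_+-\mathcal{R}(\xi_{j_+})\rangle$. Start the other from the pushforward of $|b'_+-\mathcal{R}(\xi_{j'_+})\rangle$ under $\overleftarrow{P}_{j_++1}\cdots\overleftarrow{P}_{j'_+}$. Both have $V$-weight bounded by $J'/(1-r'^{-1})$. Iterating the contraction from time $j_+$ down to $q$ and applying Proposition \ref{prop:TV} gives TV distance at most $\mathcal{A}'(\rho')^{(j_+-q)/L'}$ between the auxiliary chains at sites $\le q$. The ``repair'' sampling lifts this to $\mathbb{Z}^2$: it fills excursions using $\mu^{\text{repair}}_\beta$, conditioned on ideal endpoints, and it is shared once coupled. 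This yields \eqref{eq:TV_left_moving_contraction}.

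The main obstacle is the reversed Harnack and uniform lower bound step, i.e.\ obtaining $\epsilon_0'$. One must check that conditioning on $b_-$ rather than $b_+$ does not introduce hidden dependence on $j_+-j_-$. I would handle it by the last-exit decomposition together with the bound $\max F\le e^{\beta C}\sigma^{n-1}$, exactly as in the forward case.
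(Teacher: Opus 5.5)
Your proposal is correct and follows the paper's own argument: the paper proves this corollary by rerunning the results behind Corollary~\ref{cor:left_boundary} for the auxiliary Markov chain obtained by reading the Gibbs state from right to left, noting only that the constants may change. One small correction: with the paper's convention $\langle y|K_\beta|x\rangle=\mathrm{e}^{-\beta\lVert x-Ay\rVert_1}$, varying the bra argument (which is what conditioning on $b_-$ requires) costs a factor $\mathrm{e}^{\beta\lVert A\rVert_{1\rightarrow 1}\lVert y-y^\prime\rVert_1}$ rather than $\mathrm{e}^{\beta\lVert y-y^\prime\rVert_1}$---this is the paper's remark that $x_n$ is effectively replaced by $Ax_n$---but since $y,y^\prime$ range over bounded ideal regions this only alters $\epsilon_0^\prime$, $\mathcal{A}^\prime$, $\rho^\prime$ (and any mismatch between $L^\prime$ and $L$ in the exponent is absorbed by redefining $\rho^\prime$).
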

\noindent The constants $\mathcal{A}^\prime$ and $\rho^\prime$ may differ as we effectively use the matrix $A^{-1}$ instead of $A$ when reading the chain backwards, and the states $x_n$ are effectively replaced by $Ax_n$ when reading the chain backwards.  In any case, take $\mathcal{A}_* := \max(\mathcal{A},\mathcal{A}^\prime)$ and $\rho_* := \max(\rho,\rho^\prime)$.   Critically we can now combine Corollary \ref{cor:left_boundary} with Corollary \ref{cor:right_boundary} to obtain: \begin{cor} \label{cor:DLR_exists}
    Take any nested sequence of growing regions $\lbrace j_-, \ldots, j_+\rbrace$ with corresponding boundary conditions $b_\pm \in \mathcal{I}_{j_\pm}$. The measure \begin{equation}
        \lim_{j_- \rightarrow -\infty} \lim_{j_+\rightarrow \infty} \mu^{\lbrace p,q\rbrace}_{\lbrace j_-,\ldots, j_+\rbrace, \beta, b_-,b_+} =: \mu^{c, \lbrace p, q\rbrace}_\beta  \label{eq:mu_c_beta_exist}
    \end{equation}exists.  The order of limits does not matter.
\end{cor}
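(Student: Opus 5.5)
The plan is to show that the finite-volume marginals on $\{p,\ldots,q\}$ form a Cauchy net in total variation as $j_-\to-\infty$ and $j_+\to\infty$ independently. Completeness of the space of probability measures on the countable set $(\mathbb{Z}^2)^{\{p,\ldots,q\}}$ under total variation then gives the limit. Fix two admissible data sets $(j_-,j_+,b_-,b_+)$ and $(j_-',j_+',b_-',b_+')$ with $b_\pm\in\mathcal{I}_{j_\pm}$ and $b'_\pm\in\mathcal{I}_{j'_\pm}$, both of whose intervals contain $\{p,\ldots,q\}$. We compare them through the intermediate data $(j_-,j_+',b_-,b_+')$ and the triangle inequality:
\begin{equation*}
\lVert \mu^{\{p,q\}}_{j_-,j_+,b_-,b_+}-\mu^{\{p,q\}}_{j'_-,j'_+,b'_-,b'_+}\rVert_{\mathrm{TV}}\le \lVert \mu^{\{p,q\}}_{j_-,j_+,b_-,b_+}-\mu^{\{p,q\}}_{j_-,j'_+,b_-,b'_+}\rVert_{\mathrm{TV}}+\lVert \mu^{\{p,q\}}_{j_-,j'_+,b_-,b'_+}-\mu^{\{p,q\}}_{j'_-,j'_+,b'_-,b'_+}\rVert_{\mathrm{TV}}.
\end{equation*}

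The first term changes only the right endpoint. By Corollary \ref{cor:right_boundary} (relabelling if $j_+'<j_+$, since the roles are symmetric) it is at most $\mathcal{A}_*\rho_*^{(\min(j_+,j_+')-q)/L}$. The second term changes only the left endpoint. By Corollary \ref{cor:left_boundary} it is at most $\mathcal{A}_*\rho_*^{(p-\max(j_-,j_-'))/L}$.

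The main point to check is that $\mathcal{A},\rho,\mathcal{A}',\rho'$ do not depend on the opposite endpoint. In Corollary \ref{cor:left_boundary} the right endpoint $b_+$ enters the auxiliary chain $P_j$ of Definition \ref{def:auxiliary}. Inspecting the proofs, the constants $\epsilon_0$, $B$, $\theta$, $J$ and hence $L$, $\rho$, $\mathcal{A}$ depend only on $\lambda,\beta,M,\mathcal{R}_{\text{tol}}$. They are uniform in $b_+\in\mathcal{I}_{j_+}$ and in $j_+$, because Lemma \ref{lem:excursion} and Proposition \ref{prop:0element-bound} bound ratios uniformly over the conditioning endpoint. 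The analogous uniformity holds for the backward chain. This uniformity is what allows us to use the intermediate configuration, whose right boundary differs from that of the original, with the same constants.

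Both bounds tend to $0$ as $\min(j_+,j_+')\to\infty$ and $\max(j_-,j_-')\to-\infty$, jointly and in any manner. So the net is Cauchy, and the joint limit exists. The existence of the joint limit implies that the iterated limits exist in either order and equal it, so the order of limits does not matter. The limit is also independent of the particular nested sequence and of the boundary data chosen in the ideal regions. Finally, the limits on different windows $\{p,\ldots,q\}\subset\{p',\ldots,q'\}$ are consistent under marginalization, because TV convergence commutes with taking marginals. The family $\mu^{c,\{p,q\}}_\beta$ is therefore projective, which is what is needed later to build $\mu^c_\beta$ via Kolmogorov extension.
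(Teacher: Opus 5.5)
Your proposal is correct and follows the paper's proof. The paper likewise combines Corollaries \ref{cor:left_boundary} and \ref{cor:right_boundary} into the enlarging-window bound \eqref{eq:enlarging-window}, concludes that the marginals are Cauchy in total variation, and invokes completeness; your intermediate configuration and the uniformity of $\mathcal{A},\rho,\mathcal{A}',\rho'$ in the opposite endpoint are exactly what the paper uses implicitly. One small point: a joint limit alone does not guarantee that the inner limits exist, but your separable bound does (the second term vanishes when the left data coincide), so the order-of-limits claim follows from your estimate rather than from the joint limit alone.
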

\begin{proof}
    Taking $(j_-,j_+)$ to $(j_-^\prime <j_-, j_+^\prime > j_+)$, we see that \begin{equation} \label{eq:enlarging-window}
        \lVert \mu^{\lbrace p,q\rbrace}_{\lbrace j_-,\ldots, j_+\rbrace, \beta, b_-,b_+} - \mu^{\lbrace p,q\rbrace}_{\lbrace j_-^\prime,\ldots, j_+^\prime\rbrace, \beta, b_-^\prime,b_+^\prime} \rVert_{\mathrm{TV}} < \mathcal{A}_* \left[ \rho_*^{(p-j_-)/L} + \rho_*^{(j_+-q)/L} \right].
    \end{equation}
    As we take all four $|j_\pm|$ and $|j_\pm^\prime|$ to $\infty$, the right hand side vanishes; hence the sequence of measures $\mu^{\lbrace p,q\rbrace}_{\lbrace j_-,\ldots,j_+\rbrace, \beta, j_-, j_+}$ is Cauchy in total variation.  Since probability measures on the countable state space $(\mathbb{Z}^2)^{q-p+1}$ are complete in total variation, the limit \eqref{eq:mu_c_beta_exist} must exist.
\end{proof}

\eqref{eq:thm_exist_main} is implied by \eqref{eq:mu_c_beta_exist}.  Moreover, Corollary \ref{cor:DLR_exists} also shows the independence of $\mu^{c, \lbrace p, q\rbrace}_\beta $ on the rounding map, as claimed in Theorem \ref{thm:exist}.  The reason is simple: adjusting the rounding map at one end simply corresponds to differing choices of $b_\pm$ and we have established that memory of the detailed choice of $b_\pm$ is lost exponentially as $|j_\pm| \rightarrow \infty$.  Applying the same argument separately to each side verifies independence of $\mu^{c, \lbrace p, q\rbrace}_\beta$ on the rounding map.   Moreover, since marginalization causes only contraction in total variation, for any $p<p^\prime<q^\prime < q$, marginalizing $\mu^{c, \lbrace p, q\rbrace}_\beta $ to the window $\lbrace p^\prime ,\ldots, q^\prime\rbrace$ approaches the same limit as $\mu^{c, \lbrace p^\prime, q^\prime\rbrace}_\beta$ obtained directly from Corollary \ref{cor:DLR_exists}.  By the Kolmogorov extension theorem, we deduce the existence of the limiting measure $\mu^c_\beta$ on the entire state space $\Omega$.

Next, let us check that $\mu^c_\beta$ is equivariant under the symmetries.  This follows immediately from the limiting process above.  Firstly, the entire finite-size measures obey \begin{equation}
    T_* \mu^{\lbrace p,q\rbrace}_{\lbrace j_-,\ldots, j_+\rbrace, \beta, b_-,b_+} = \mu^{\lbrace p-1,q-1\rbrace}_{\lbrace j_--1,\ldots, j_+-1\rbrace, \beta, b_-,b_+}.
\end{equation}Since $b_\pm$ were unchanged, but all coordinates shifted, this changes the ideal solution: \begin{equation}
    \xi_j = A^{-j }c \rightarrow \xi_{j+1} = A^{-(j+1)}c = A^{-j}(A^{-1}c).
\end{equation}
The rounding map $\mathcal{R}$ keeps these boundary conditions in the same ideal region as if we had shifted $c \rightarrow A^{-1}c$.  Since the limiting distribution is independent of $\mathcal{R}$, we deduce that $T_*\mu^c_\beta = \mu^{A^{-1}c}_\beta$.   Analogously, 
\begin{equation}
    \left((U_a)_* \mu^{\lbrace p,q\rbrace}_{\lbrace j_-,\ldots, j_+\rbrace, \beta, b_-,b_+}\right) = \mu^{\lbrace p,q\rbrace}_{\lbrace j_-,\ldots, j_+\rbrace, \beta, b_- + A^{-j_-}a,b_+ + A^{-j_+}a},
\end{equation}
which is identical, for any $\mathcal{R}$, to shifting the ideal solution \begin{equation}
    \xi_j = A^{-j}c \rightarrow \xi_j = A^{-j}(c+a).
\end{equation}
Crucially, we see that there is a natural group action of $\mathbb{Z}^2\rtimes \mathbb{Z}$ on the $c$ label of the DLR state, and by our construction of the DLR state via the Cauchy sequence, the helical group symmetry acts only on the label $c$ in the limit.  This establishes \eqref{eq:equivariant}, the desired equivariance properties.

Lastly, it is obvious that the DLR equation \eqref{eq:DLR} is obeyed, since \begin{equation}
     \mu^{\lbrace p-1,q+1\rbrace}_{\lbrace j_-,\ldots, j_+\rbrace, \beta , b_-, b_+} (x_p,\ldots, x_q|x_{p-1},x_{q+1}) = \mu_{\lbrace p-1,p,\ldots, q+1\rbrace,\beta,x_{p-1},x_{q+1}}(x_p,\ldots, x_q)
\end{equation}
holds by construction for each $j_-<p\le q<j_+$; hence it also holds in the infinite-volume limit. 

\subsection{Proof of Corollary \ref{cor:continuity}}\label{sec:corproof-2}
The technology used in the proof of Theorem \ref{thm:exist} allows us to prove many of the remaining open corollaries from Section \ref{sec:mucbeta}.  We start with Corollary \ref{cor:continuity}.

Fix any $\epsilon>0$; \eqref{eq:enlarging-window} implies that there exists a large enough $N$ such that for any rounding map $\mathcal{R}$,  \begin{equation}
         \left\lVert \mu^c_\beta (x_{j\in \Lambda_0} ) - \mu^{c,\mathcal{R}}_{\beta, N}(x_{j\in \Lambda_0}) \right\rVert_{\mathrm{TV}}  < \epsilon.
     \end{equation}Here $\mu^{c,\mathcal{R}}_{\beta, N} = \mu_{\Lambda_N, \beta, \mathcal{R}(A^Nc), \mathcal{R}(A^{-N}c)}$ (recall Definition \ref{def:helical-sequence}).  For $c^\prime$ sufficiently close to $c$ at any fixed $N$,  there exists a rounding map such that $\mathcal{R}(A^{\pm N}c) = \mathcal{R}(A^{\pm N}c^\prime)$.\footnote{A weaker condition here is also sufficient: we only need that there exist endpoints $b_\pm$ that lie in the ideal regions $\mathcal{I}_{\pm N}$ defined in Definition \ref{def:ideal} for both $c$ and $c^\prime$.}  As such, $\mu^{c,\mathcal{R}}_{\beta,N} = \mu^{c^\prime,\mathcal{R}}_{\beta,N} $ are identical measures.  Now using \eqref{eq:thm_exist_main} twice and applying the triangle inequality, we know that \begin{align}
         \left\lVert \mu^c_\beta(x_{j\in \Lambda_0}) - \mu^{c^\prime}_\beta(x_{j\in \Lambda_0})\right\rVert_{\mathrm{TV}} &\le \left\lVert \mu^c_\beta(x_{j\in \Lambda_0}) - \mu^{c,\mathcal{R}}_{\beta,N}(x_{j\in \Lambda_0})\right\rVert_{\mathrm{TV}} + \left\lVert \mu^{c^\prime}_\beta(x_{j\in \Lambda_0}) - \mu^{c^\prime,\mathcal{R}}_{\beta,N}(x_{j\in \Lambda_0})\right\rVert_{\mathrm{TV}} \notag \\
         &< 2\epsilon. \label{eq:TV-c-cprime}
     \end{align}
     Thus at any fixed $\epsilon$, all $c^\prime$ sufficiently close to $c$ obey \eqref{eq:TV-c-cprime}.  Continuity follows from the limit $\epsilon \rightarrow 0$.

\subsection{Proof of Corollary \ref{cor:deviation}}\label{sec:corproof}
The idea is, once again, that Theorem \ref{thm:sound} exponentially suppresses large excursions of the ideal path.  The corollary is trivial if $x_j\in\mathcal{I}_j$; we only need to track what happens when $x_j\in \mathcal{N}_j$.     More concretely, it is useful to temporarily work with the finite-volume measure $\mu^c_{\lbrace j_-,\ldots, j_+\rbrace, \beta, b_-, b_+}$, using Theorem \ref{thm:exist} to take the limits $j_\pm \rightarrow \pm \infty$ at the end.  

Suppose we pin the trajectory during its excursion out of the ideal region to hit $x_j\in \mathcal{N}_j$ conditioned on the existence of $p$ and $q$ obeying $j_-\le p<j<q\le j_+$.  Assuming for the moment that we know the values $x_p = c_-$ and $x_q = c_+$, we can easily see that the sum over all anchored paths is \begin{equation}
    \langle c_+| F_{q,p}^{(j)}(x) | c_-\rangle := \sum_{x_i \in \mathcal{N}_i, \; p<i<q, \; i\ne j} \prod_{i=p}^{q-1}\langle x_{i+1}|K_\beta |x_i\rangle \le \mathrm{e}^{\beta C} \sigma^{q-p-2} \mathrm{e}^{-\beta \eta  \lVert x_j - \xi_j\rVert_1}
\end{equation}
as a straightforward generalization of \eqref{eq:max-of-F}.  Analogously to \eqref{eq:P-excursion}, we have \begin{equation}
    \frac{\langle c_+| F_{q,p}^{(j)}(x) | c_-\rangle}{\langle c_+|G_{q,p}|c_-\rangle }\le \frac{\mathrm{e}^{\beta C}}{Q\zeta }\theta^{q-p-2} \mathrm{e}^{-\beta \eta  \lVert x_j - \xi_j\rVert_1}.
\end{equation}
We then have, analogous to \eqref{eq:P-excursion}, \begin{align}
    \mu^c_{\lbrace j_-,\ldots, j_+\rbrace, \beta, b_-, b_+}(x_j) &\le \sum_{p=j_-}^{j-1}\sum_{q=j+1}^{j_+}\sum_{c_-\in \mathcal{I}_p }\sum_{c_+\in \mathcal{I}_q } \frac{\langle b_+|K_\beta^{j_+-q}|c_+\rangle \langle c_+|F^{(j)}_{q,p}(x)|c_-\rangle \langle c_-|K_\beta^{p-j_-}|b_-\rangle }{\langle b_+|K_\beta^{j_+-j_-}|b_-\rangle } \notag \\
    &\le \sum_{p=j_-}^{j-1}\sum_{q=j+1}^{j_+} \max_{c_+\in \mathcal{I}_q, c_-\in \mathcal{I}_p}\frac{\langle c_+| F_{q,p}^{(j)}(x) | c_-\rangle}{\langle c_+|G_{q,p}|c_-\rangle } < \frac{\mathrm{e}^{\beta C}}{Q\zeta(1-\theta)^2} \mathrm{e}^{-\beta \eta  \lVert x_j-\xi_j\rVert_1}.
\end{align}
Using Theorem \ref{thm:exist} to take $j_\pm \rightarrow \pm \infty$, we deduce \eqref{eq:exponential-decay} with $\alpha = \beta \eta$; converting between $\xi_j$ and $\mathcal{R}(A^{-j}c)$ only modifies the constant prefactor by the triangle inequality.

\subsection{Proof of Theorem \ref{thm:extremal}}\label{sec:extremal-proof}
Lastly we can prove Theorem \ref{thm:extremal}.  For any $\Lambda \in \mathcal{S}$, let $f$ be a bounded $\mathcal{F}_\Lambda$-measurable function, i.e. $\lVert f\rVert_\infty < \infty$.   Recall that $M$ represented the tolerance in defining the ideal regions $\mathcal{I}_j$ in Definition \ref{def:ideal}.  Pick a helical sequence (Definition \ref{def:helical-sequence}) with large enough $N$ so that $\Lambda \subset \Lambda_N$ for all $N$ in the sequence.    Theorem \ref{thm:exist} shows that \begin{equation}
        \sup_{x \in G_{N,M}} \left| \mu_{\Lambda_N,\beta ,x_{-N},x_{+N}}(f) - \mu^c_\beta(f)\right| \le \epsilon_{N,M}
    \end{equation}
    with \begin{equation}
        \lim_{N\rightarrow \infty } |\epsilon_{N,M}| =  0.
    \end{equation}

     Define the event \begin{equation}
        G_{N,M} := \lbrace x_{\pm N} \in \mathcal{I}_{\pm N}\rbrace 
    \end{equation}and define the $\sigma$-algebras $\mathcal{F}_N := \sigma(\lbrace x_j : |j| \ge N\rbrace)$.  We know that \begin{align}
       \frac{1}{\lVert f\rVert_\infty} \left\lVert \mu^c_\beta (f | \mathcal{F}_N) - \mu^c_\beta (f) \right\rVert_{\mathrm{L}^1(\mu^c_\beta)} &\le  \mu^c_\beta({G_{N,M}}^{\mathrm{c}}) \notag \\
       &\;\;\;\;\; + \sup_{x \in G_{N,M}} \left| \mu_{\Lambda_N,\beta ,x_{-N},x_{+N}}(f) - \mu^c_\beta(f)\right| \mu^c_\beta(G_{N,M}) \notag \\
        & \le   \epsilon_{N,M} + 2\delta_M
    \end{align}
    where $\delta_M$ is an exponentially decaying constant in $M$ due to Corollary \ref{cor:deviation}.  Therefore, \begin{equation}
       \lim_{M\rightarrow \infty }  \lim_{N\rightarrow \infty} \left\lVert \mu^c_\beta (f | \mathcal{F}_N) - \mu^c_\beta (f) \right\rVert_{\mathrm{L}^1(\mu^c_\beta)} = 0.
    \end{equation}
    This establishes the $\mathrm{L}^1$-convergence of $\mu^c_\beta (f | \mathcal{F}_N) $ to  $\mu^c_\beta(f)$. The backwards martingale convergence theorem then implies that $\mu^c_\beta$-almost surely, \eqref{eq:extremal-property} holds.  By Corollary \ref{cor:extremal-property}, $\mu^c_\beta$ is an extremal DLR state.

\section{Regularizing the model on a finite-dimensional space}\label{sec:finite-size}
In this section, we will develop a regularization of the model in Definition \ref{def:helicalcode} based on a finite-dimensional global state space.  Clearly, then, there is a unique Gibbs measure.  One can still identify the imprint of spontaneous symmetry breaking by looking for measure concentration.  This is useful from a physics perspective, because one can also analyze more familiar models like the 2d Ising model in this way; therefore, we will arrive at an independent justification for our claim that the symmetry breaking in the infinite chain should be understood as spontaneous.  
\subsection{A natural lattice regularization}
We first need to choose a clever finite regularization of the model, and then adapt Theorem \ref{thm:exist} to analyze the unique Gibbs measure.
\begin{defn}
    For any $L\in \mathbb{Z}_+$, and hyperbolic symmetric $A$, define \begin{subequations}
        \begin{align}
            M_L &:= \left(A^L - I\right) \mathbb{Z}^2, \\
            G_L &:= \mathbb{Z}^2/M_L.
        \end{align}
    \end{subequations}
\end{defn}
\noindent  It is also useful to observe that since $A\in \mathrm{SL}(2,\mathbb{Z})$, $A M_L = M_L$.   So $A$ induces an automorphism on the coset space $G_L$, which is easily seen to be of order $L$.
\begin{prop}
    There exists a sequence of integers $r_1,r_2,\ldots$ obeying \begin{equation}
        r_L = \exp(\mathrm{\Theta}(L)) \label{eq:r_L}
    \end{equation}
    such that $M_L \subseteq r_L\mathbb{Z}^2$.
\end{prop}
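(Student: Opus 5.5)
The plan is to produce an explicit integer $r_L$ such that $A^L-I = r_L B_L$ with $B_L$ an integer matrix. Then $M_L=(A^L-I)\mathbb{Z}^2 = r_L B_L\mathbb{Z}^2\subseteq r_L\mathbb{Z}^2$ is immediate. The only real content is finding such a scalar factor of size $\exp(\Theta(L))$. The tool is Cayley--Hamilton: since $\det A=1$ and $A^{-1}\in\mathrm{SL}(2,\mathbb{Z})$, for every $j\ge1$ we have $A^j+A^{-j} = \mathrm{tr}(A^j)\, I$. Here $\tau_j:=\mathrm{tr}(A^j)=\lambda^j+\lambda^{-j}$ is a positive integer.

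\emph{Odd $L=2k+1$.} Write $A^{2k+1}-I = A^k\left(A^{k+1}-A^{-k}\right)$. The telescoping identity
\begin{equation}
(A-I)\sum_{j=-k}^{k}A^j = A^{k+1}-A^{-k}
\end{equation}
gives $A^{k+1}-A^{-k}=(A-I)S_k$, where $S_k = I+\sum_{j=1}^k (A^j+A^{-j}) = \left(1+\sum_{j=1}^k\tau_j\right)I$. So $S_k=s_kI$ is scalar, and we set $r_L:=s_k$ and $B_L:=A^k(A-I)$, which is integral.

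\emph{Even $L=2k$.} Similarly $A^{2k}-I = A^k(A^k-A^{-k})$, and
\begin{equation}
A^k-A^{-k}=(A-A^{-1})\left(A^{k-1}+A^{k-3}+\cdots+A^{-(k-1)}\right).
\end{equation}
The bracket is palindromic, so it pairs into terms $A^i+A^{-i}=\tau_i I$, plus a middle $I$ when $k-1$ is even. It is therefore a scalar integer $s'_k I$. We set $r_L:=s'_k$ and $B_L:=A^k(A-A^{-1})$, again integral since $A^{-1}$ has integer entries.

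It remains to check the growth rate. Since $\tau_j=\lambda^j+\lambda^{-j}$, both $s_k$ and $s'_k$ are sums of geometric terms dominated by the top one. This gives $c\,\lambda^{L/2}\le r_L\le C\lambda^{L/2}$ with constants depending only on $\lambda$, i.e.\ $r_L=\exp(\Theta(L))$ with rate $\tfrac12\log\lambda>0$. For very small $L$ one may have $r_L=1$ (e.g.\ $L=1$), but $\Theta$ is an asymptotic statement, so this is harmless; one can also replace $r_L$ by $\max(r_L,1)$. As a consistency check on the upper side, any $r$ with $M_L\subseteq r\mathbb{Z}^2$ must satisfy $r^2 \mid \det(A^L-I) = 2-\tau_L$. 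Hence $r\le \sqrt{\lambda^L+\lambda^{-L}-2}\approx\lambda^{L/2}$, so the exponent $\tfrac12\log\lambda$ is optimal.

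The main obstacle I anticipated was the odd case. The naive factorization $(A^k-I)(A^k+I)$ only helps for even $L$. A uniform scalar factor for odd $L$ only appears once one notices that $\sum_{j=-k}^kA^j$ is scalar because of the determinant-one identity. After that, everything reduces to bookkeeping of the integer sequences $\tau_j$.
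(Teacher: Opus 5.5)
Your proof is correct and follows essentially the paper's route: the paper uses the same factorizations $A^{2k}-I=A^k(A^k-A^{-k})$ and $A^{2k+1}-I=A^k(A^{k+1}-A^{-k})$ and shows the second factor is an integer multiple of $A-A^{-1}$ or $A-I$. The only difference is that the paper gets this scalar from the Cayley--Hamilton recursion $A^n=p_nA-p_{n-1}I$ rather than from your telescoping sums of traces; your $s'_k$ and $s_k$ are exactly its $p_k$ and $p_k+p_{k+1}$, and your $\lambda$-based growth bound and determinant optimality remark are a slightly sharper version of its estimate $(t-1)^{n-1}<p_n\le t^{n-1}$.
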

\begin{proof}
    Let $t := \mathrm{tr}(A)>2$.  The Cayley-Hamilton Theorem provides the useful recursion relation \begin{equation}
        A^2 = tA - I. \label{eq:cayley-hamiltonian}
    \end{equation}
    Now consider the arithmetic sequence defined by $p_0=0$, $p_1=1$ and for $n>1$, \begin{equation}
        p_{n+1}= tp_n - p_{n-1}. \label{eq:p-arithmetic}
    \end{equation}
    Combining \eqref{eq:cayley-hamiltonian} and \eqref{eq:p-arithmetic} we obtain \begin{equation}
        A^n = p_n A - p_{n-1}I. \label{eq:An-recursion}
    \end{equation}
    $A^{-1}$ also obeys \eqref{eq:cayley-hamiltonian} and \eqref{eq:An-recursion};  using the further identity $A^{-1} = tI-A$ (which is readily checked by explicitly inverting $2\times 2$ matrices!), we deduce \begin{equation}
        A^{-n} = p_{n+1}I - p_n A.\label{eq:An-inv-recursion}
    \end{equation}
    Combining \eqref{eq:An-recursion} and \eqref{eq:An-inv-recursion} we find the elegant identity: \begin{equation}
        (A^{2m}-I)\mathbb{Z}^2 = \left(A^m - A^{-m} \right)A^m\mathbb{Z}^2 = p_m\left(A-A^{-1}\right)\mathbb{Z}^2 \subset p_m \mathbb{Z}^2 \label{eq:ML-even}
    \end{equation}
    for integer $m$; above we used that $A$ is unimodular.  Similarly, \begin{equation}
        (A^{2m+1}-I)\mathbb{Z}^2 = \left(A^{m+1} - A^{-m} \right)A^m\mathbb{Z}^2 = (p_m+p_{m+1})\left(A-I\right)\mathbb{Z}^2 \subset (p_m+p_{m+1}) \mathbb{Z}^2. \label{eq:ML-odd}
    \end{equation}
    \eqref{eq:ML-even} and \eqref{eq:ML-odd} establish part of the claim: $r_{2m} = p_m$ and $r_{2m+1} = p_m + p_{m+1}$.

    It remains to show that $p_m = \exp(\mathrm{\Theta}(m))$. Notice that $p_{n+1}>p_n$, which follows from $t>2$ and can be proved inductively from \eqref{eq:p-arithmetic} given the base case $p_2=t>1=p_1$.  Thus for $n\ge 2$, \begin{equation}
        t^{n-1}\ge p_n>(t-1)^{n-1},
    \end{equation}which completes the proof.
\end{proof}
\eqref{eq:ML-even} and \eqref{eq:ML-odd} reveal that the lattices $M_L$ are not too distorted:  they are simply rescaled ``simple" lattices $(A-A^{-1})\mathbb{Z}^2$ or $(A-I)\mathbb{Z}^2$ by exponentially growing prefactors $r_L$, alternating between two lattice shapes depending on whether $L$ is even or odd.  This has a useful implication:
\begin{prop}\label{prop:size-GL}
    For some O(1) constant $J$ depending only on $A$, \begin{equation}
        r_L^2 \le |G_L| \le J r_L^2. \label{eq:JrL2}
    \end{equation}
\end{prop}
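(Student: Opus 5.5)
The plan is to compute $|G_L|$ exactly as a determinant and compare it with $r_L^2$ through the explicit factorizations \eqref{eq:ML-even} and \eqref{eq:ML-odd}. For any nonsingular integer matrix $N$, the index of the sublattice $N\mathbb{Z}^2$ in $\mathbb{Z}^2$ is $|\det N|$, by Smith normal form or by comparing fundamental-domain areas. Hence $|G_L| = |\det(A^L - I)|$. This determinant is nonzero: the eigenvalues of $A^L - I$ are $\lambda^{\pm L}-1$, neither of which vanishes for $L\ge 1$ since $\lambda>1$. So $G_L$ is finite.

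For the lower bound, the preceding proposition gives $M_L \subseteq r_L\mathbb{Z}^2$. Since $r_L\mathbb{Z}^2$ has index $r_L^2$ in $\mathbb{Z}^2$, index multiplicativity for the chain $M_L\subseteq r_L\mathbb{Z}^2\subseteq\mathbb{Z}^2$ gives
\begin{equation*}
|G_L| = [\mathbb{Z}^2 : r_L\mathbb{Z}^2]\,[r_L\mathbb{Z}^2 : M_L] \ge r_L^2 .
\end{equation*}

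For the upper bound, I would sharpen the inclusions in \eqref{eq:ML-even} and \eqref{eq:ML-odd} to equalities of lattices. In the even case $L=2m$, we have $M_L = p_m (A-A^{-1}) A^m\mathbb{Z}^2 = p_m (A-A^{-1})\mathbb{Z}^2$, because $A^m$ is unimodular. Therefore
\begin{equation*}
|G_L| = p_m^2\,|\det(A-A^{-1})| = r_L^2\,(\lambda-\lambda^{-1})^2 .
\end{equation*}
In the odd case $L=2m+1$, the same reasoning gives $|G_L| = r_L^2\,|\det(A-I)| = r_L^2(\lambda-1)(1-\lambda^{-1})$. Taking $J := \max\bigl((\lambda-\lambda^{-1})^2,\ (\lambda-1)(1-\lambda^{-1})\bigr)$ then gives \eqref{eq:JrL2}. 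Both determinants are nonzero integers, so each is at least $1$, which is consistent with the lower bound already proved.

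The main point to check is the scalar identities behind these factorizations. In the even case one needs $A^m - A^{-m} = p_m(A-A^{-1})$. This follows by subtracting \eqref{eq:An-inv-recursion} from \eqref{eq:An-recursion}:
\begin{equation*}
A^m - A^{-m} = (p_m+p_m)A - (p_{m-1}+p_{m+1})I = 2p_mA - tp_m I = p_m(2A - tI),
\end{equation*}
using $p_{m+1}+p_{m-1}=tp_m$. Then $A^{-1}=tI-A$ gives $2A-tI = A-A^{-1}$.

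The odd case is the analogous identity $A^{m+1}-A^{-m} = (p_m+p_{m+1})(A-I)$, which I would verify with the same substitution. Here one must take care with the conventions ($p_{-1}=-1$, and whether \eqref{eq:An-inv-recursion} holds at small $m$). If any off-by-one issue arises, the fallback is to compute $\det(A^L-I) = 2-\lambda^L-\lambda^{-L}$ directly. This gives $|\det(A^L-I)| = (\lambda^{L/2}-\lambda^{-L/2})^2$, and I would compare it with $r_L^2$ using the closed form $p_n = (\lambda^n-\lambda^{-n})/(\lambda-\lambda^{-1})$. That comparison yields a bounded ratio in both parities, so some $O(1)$ constant $J$ works either way.
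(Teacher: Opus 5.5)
Your proposal is correct and follows essentially the same route as the paper. Both arguments write $M_L = r_L N\mathbb{Z}^2$ with a fixed $N \in \{A-I,\ A-A^{-1}\}$, so that $|G_L| = r_L^2|\det N|$, and both take $J = \max|\det N|$; the paper expresses $|\det N|$ as a product of Smith factors, while you evaluate it explicitly as $(\lambda-\lambda^{-1})^2 = t^2-4$ or $(\lambda-1)(1-\lambda^{-1}) = t-2$, and your checks of $A^m - A^{-m} = p_m(A-A^{-1})$ and $A^{m+1}-A^{-m} = (p_m+p_{m+1})(A-I)$ are right (no off-by-one issue arises even at $m=0$, so the fallback is unnecessary).
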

\begin{proof}
    Suppose we have the following Smith decompositions:  for $U_{1,2}$ and $V_{1,2}$ unimodular: \begin{subequations}\begin{align}
        U_1 (A-I)V_1  &= \left(\begin{array}{cc} d_{1,1} &\ 0 \\ 0 &\ d_{1,2} \end{array}\right), \\
        U_2 (A-A^{-1})V_2  &= \left(\begin{array}{cc} d_{2,1} &\ 0 \\ 0 &\ d_{2,2} \end{array}\right).
    \end{align}\end{subequations}
    Define $J:=\max(d_{1,1}d_{1,2},d_{2,1}d_{2,2})$.  \eqref{eq:JrL2} follows from the identity \begin{equation}
        |G_L| = \left| \det(A^L-I)\right| \propto r_L^2,
    \end{equation} with the proportionality factor given by the appropriate factor of Smith factors above, along with the fact that all Smith factors $d_{1/2,1/2}\ge 1$.
\end{proof}

\subsection{Finite-size model}
\begin{defn}[Lee 1-norm] \label{def:lee}
    Given $x\in G_L$ and its arbitrary\footnote{The key point is that $\lVert x-y\rVert_{\mathrm{L}}$ is a well-defined notion of distance on the coset $G_L$: it is easy to see that the value is independent of the representatives $x,y \in \mathbb{Z}^2/M_L$.} lift $\widetilde{x} \in \mathbb{Z}^2$, define the \textbf{Lee 1-norm} \begin{equation}
        \lVert x \rVert_{\mathrm{L}} := \min_{y\in M_L} \lVert \widetilde{x}-y\rVert_1.
    \end{equation}
\end{defn}
\begin{prop}\label{prop:GL-middle}
    Let \begin{equation}\label{eq:define_gamma}
        \gamma := \frac{1}{2(1+\lVert A\rVert_{1\rightarrow 1})}.
    \end{equation}
    If a lift $\widetilde{x}$ satisfies $\lVert \widetilde{x}\rVert_1 <\gamma r_L$, then it is the (unique) smallest lift in its coset $\widetilde{x}+M_L$.  As such,  $\lVert \widetilde{x}\rVert_1 = \lVert x\rVert_{\mathrm{L}}$.  Similarly,  $\lVert A\widetilde{x}\rVert_1 = \lVert Ax\rVert_{\mathrm{L}}$.
\end{prop}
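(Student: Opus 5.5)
The plan is to show that every nonzero element of $M_L$ has 1-norm at least $r_L$, and then apply the triangle inequality. Since $M_L \subseteq r_L\mathbb{Z}^2$ by the preceding proposition, any nonzero $y\in M_L$ has some nonzero integer component divisible by $r_L$. Hence $\lVert y\rVert_1 \ge r_L$.

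First, suppose $\lVert \widetilde{x}\rVert_1<\gamma r_L$, and take any other lift $\widetilde{x}-y$ with $0\ne y\in M_L$. By the triangle inequality,
\begin{equation*}
\lVert \widetilde{x}-y\rVert_1 \ge \lVert y\rVert_1 - \lVert \widetilde{x}\rVert_1 \ge r_L - \gamma r_L.
\end{equation*}
Since $\gamma\le 1/4<1/2$ (because $\lVert A\rVert_{1\rightarrow1}\ge 1$), we have $r_L-\gamma r_L > \gamma r_L > \lVert \widetilde{x}\rVert_1$. So $\widetilde{x}$ is the strictly smallest lift in its coset. It is unique as a minimizer, and $\lVert \widetilde{x}\rVert_1=\lVert x\rVert_{\mathrm{L}}$ by Definition \ref{def:lee}.

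For the second claim, note that $A M_L=M_L$, so $A\widetilde{x}$ is a legitimate lift of $Ax\in G_L$. Its norm obeys
\begin{equation*}
\lVert A\widetilde{x}\rVert_1\le \lVert A\rVert_{1\rightarrow1}\lVert \widetilde{x}\rVert_1 < \lVert A\rVert_{1\rightarrow1}\,\gamma r_L = \frac{\lVert A\rVert_{1\rightarrow1}}{2(1+\lVert A\rVert_{1\rightarrow1})}\,r_L < \frac{r_L}{2}.
\end{equation*}
We cannot simply reuse the first claim here, since $\lVert A\widetilde{x}\rVert_1$ need not be below $\gamma r_L$. Instead we repeat the triangle-inequality argument with threshold $r_L/2$: any lift of norm below $r_L/2$ is the unique smallest one, because every other lift has norm greater than $r_L - r_L/2 = r_L/2$. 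Hence $\lVert A\widetilde{x}\rVert_1=\lVert Ax\rVert_{\mathrm{L}}$.

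The main point needing care is that the lattice bound $\lVert y\rVert_1\ge r_L$ holds for all nonzero $y\in M_L$. This follows directly from the inclusion $M_L\subseteq r_L\mathbb{Z}^2$, provided $r_L\ge1$, which holds since the $r_L$ are positive integers. The generous factor in $\gamma$ presumably anticipates later uses, for instance comparing $\lVert \widetilde{x}-A\widetilde{y}\rVert_1$ with its Lee norm. The same argument extends to such combinations whenever their 1-norm stays below $r_L/2$.
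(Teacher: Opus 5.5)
Your proof is correct and is the paper's argument: the inclusion $M_L\subseteq r_L\mathbb{Z}^2$ gives $\lVert y\rVert_1\ge r_L$ for every nonzero $y\in M_L$, and the triangle inequality then makes any lift of sufficiently small norm the unique minimizer. The paper states this in one line. You add one useful point it leaves implicit: for the second claim it is enough that $\lVert A\widetilde{x}\rVert_1<r_L/2$, which your operator-norm bound gives, rather than requiring $\lVert A\widetilde{x}\rVert_1<\gamma r_L$.
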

\begin{proof}
       This follows immediately from Definition \ref{def:lee}: since $M_L\subseteq r_L\mathbb{Z}^2$, any non-zero $z\in M_L$ has $\lVert z\rVert_1 \ge r_L$.
\end{proof}
\begin{defn}[Finite-length helical code]
    Consider the state space $\Omega = (G_L)^L$ labeled by configurations $(x_0,\ldots, x_{L-1})$, with $x_L=x_0$ denoting  periodic boundary conditions. Define the finite-length helical code Hamiltonian \begin{equation}
        H(x_0,\ldots, x_{L-1}) := \sum_{j=0}^{L-1}\lVert x_j - Ax_{j+1}\rVert_{\mathrm{L}}.
    \end{equation} 
    This Hamiltonian is invariant under \begin{subequations}
        \begin{align}
            T \cdot (x_0,\ldots, x_{L-1}) &= (x_1,\ldots, x_{L-1},x_0), \\
            U_a \cdot (x_0,\ldots, x_{L-1}) &= (x_0+a, x_1+A^{-1}a, \ldots, x_{L-1}+A^{-(L-1)}a), \label{eq:finite-size-helical}
        \end{align}
    \end{subequations}
    which generate the discrete helical symmetry group $G_L\rtimes \mathbb{Z}_L$, with commutation relation \eqref{eq:helicalcommutation}.
\end{defn}
\noindent We remind the reader that in equations like \eqref{eq:finite-size-helical}, $A^{-i}a$ are interpeted as well-defined elements of the coset space $G_L$.

\subsection{Concentration of the unique Gibbs measure}
We now establish measure concentration of the Gibbs state on this large, but finite-dimensional, state space.  This is a standard way to detect spontaneous symmetry breaking with finite system sizes, and is arguably much simpler to follow than the DLR construction above!  The discussion below is a relatively direct generalization of results for a simpler model that are presented in \cite{leecode1}.

\begin{thm}[Spontaneous symmetry breaking]\label{thm:finite}
    For any $0<\beta<\infty$, there exists a $\beta$-dependent constant $L_0$ such that for all $L\ge L_0$, the following holds.   Consider the unique Gibbs measure \begin{equation}
        \mu_\beta(x) := \frac{1}{Z_\beta}\mathrm{e}^{-\beta H(x)},
    \end{equation}
    with standard normalization factor $Z_\beta$.   There exists a decomposition of the state space \begin{equation}
        \Omega = \Omega_{\text{bn}} \sqcup \bigsqcup_{a\in G_L}\Omega_a,
    \end{equation}
    such that \begin{equation}
        U_b\cdot \Omega_a = \Omega_{a+b} \label{eq:Omega_a_group}
    \end{equation} and, for some $L$-independent/$\beta$-dependent constants $0<C,c<\infty$,  \begin{equation}
        \mu_\beta (\Omega_a) = \frac{1}{|G_L|} \left(1 - C \mathrm{e}^{-\beta c r_L}\right). \label{eq:mu_Omega_a}
    \end{equation}Moreover, the distance between any two of these ``symmetry-broken" clusters is large: for $a\ne a^\prime$, there exists $L$-independent constant $\gamma>0$ such that \begin{equation}
        \min_{x\in \Omega_a, y\in \Omega_{a^\prime}} \sum_{i=0}^{L-1}\lVert x_i-y_i\rVert_{\mathrm{L}} \ge \frac{\gamma r_L}{3}. \label{eq:cluster-distance-big}
    \end{equation}
\end{thm}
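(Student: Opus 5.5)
The plan is to build the clusters $\Omega_a$ out of the exact ground states and to control everything else with a periodic analogue of Theorem \ref{thm:sound}. First I will classify the zero-energy configurations. $H(x)=0$ forces $x_j = Ax_{j+1}$ in $G_L$ for every $j$, hence $x_j = A^{-j}x_0$. Periodicity $x_L=x_0$ holds automatically, since $A^L$ acts trivially on $G_L$. The ground states are therefore exactly the orbit $\{U_a\cdot 0 : a\in G_L\}$, so there are $|G_L|$ of them. I then define
\begin{equation*}
\Omega_a := \Big\lbrace x : \sum_{i=0}^{L-1}\lVert x_i - A^{-i}a\rVert_{\mathrm{L}} < \tfrac{\gamma r_L}{6}\Big\rbrace,
\end{equation*}
with $\gamma$ from \eqref{eq:define_gamma}, and let $\Omega_{\text{bn}}$ be the complement of their union. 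The equivariance \eqref{eq:Omega_a_group} is immediate from the invariance of the Lee norm under translations in $G_L$.

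Next come disjointness and the separation bound \eqref{eq:cluster-distance-big}. For $a\ne a'$, the ground states $U_a\cdot 0$ and $U_{a'}\cdot 0$ differ at every site by $A^{-i}(a-a')\ne 0$. The main geometric input is a lower bound on this difference: I need $\sum_i \lVert A^{-i}(a-a')\rVert_{\mathrm{L}} \ge \gamma r_L/2$. To get it, lift $b=a-a'$ to a short representative and suppose the sum is smaller than $\gamma r_L$. Then every term is below $\gamma r_L$, so by Proposition \ref{prop:GL-middle} the shortest lifts $\widetilde{b}_i$ satisfy $\widetilde{b}_i = A\widetilde{b}_{i+1}$ exactly in $\mathbb{Z}^2$, including around the periodic loop. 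This gives $\widetilde{b}_0 = A^L\widetilde{b}_0$ in $\mathbb{Z}^2$. Since $A^L$ is hyperbolic, $A^L-I$ is invertible, so $\widetilde{b}_0=0$, a contradiction. The triangle inequality then shows that the clusters are disjoint, and that points of distinct clusters are at distance at least $\gamma r_L - 2\gamma r_L/6 \ge \gamma r_L/3$.

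For \eqref{eq:mu_Omega_a}, symmetry already gives $\mu_\beta(\Omega_a)=(1-\mu_\beta(\Omega_{\text{bn}}))/|G_L|$. It therefore suffices to show $\mu_\beta(\Omega_{\text{bn}}) \le C\mathrm{e}^{-\beta c r_L}$, i.e. $Z_\beta(\Omega_{\text{bn}}) \le C\mathrm{e}^{-\beta c r_L}\,|G_L|\,Z_\beta(\Omega_0)$. This is the main obstacle, because periodic boundary conditions give no pinned endpoints for Theorem \ref{thm:sound}. I would split configurations by energy.

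If $H(x)<\gamma r_L$, each bond satisfies $\lVert x_j - Ax_{j+1}\rVert_{\mathrm{L}} < \gamma r_L$. Choosing the lift $y_j$ of the bond difference of minimal norm gives an honest $\mathbb{Z}^2$ chain $\widetilde{x}_j = A\widetilde{x}_{j+1} + y_j$ with $\sum\lVert y_j\rVert_1 = H$, after fixing a lift of $x_0$. Going around the loop, $(I-A^L)\widetilde{x}_0 = \sum_j A^j y_j$ modulo $M_L$. Inverting this determines a ``center'' $a$ with $\widetilde{x}_i - A^{-i}a$ given explicitly by geometric sums in the $u$ and $v$ directions. Repeating the $s_i,t_i$ decomposition from the proof of Theorem \ref{thm:sound} cyclically, now with no boundary terms, yields $\sum_i \lVert x_i - A^{-i}a\rVert_{\mathrm{L}} \le \eta^{-1}H(x)$. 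Thus low-energy configurations within distance $\gamma r_L/6$ of their center lie in $\Omega_a$. The remaining configurations have $H \gtrsim \eta\gamma r_L/6$ or $H\ge\gamma r_L$.

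To finish, I would count configurations with $\sum_j\lVert y_j\rVert_1 = E$. These number at most $|G_L|\cdot (\text{entropy})^{E}$, and their total weight should be compared with $Z_\beta(\Omega_0)\ge 1$. Here the entropy per unit energy must be beaten by $\beta$. At small $\beta$ it is not, and this is precisely where I expect the difficulty. To handle it I would instead reuse the excursion machinery of Lemma \ref{lem:excursion}: the relevant suppression in that lemma is $(\sigma/\zeta)^{n}$ with $M$ chosen large depending on $\beta$, rather than a raw energy bound. Concretely, I would show that the total displacement from the best center $a$ must be of order $r_L$, which forces either a single excursion of length $\Theta(\log r_L)$ or many excursions. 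Either case costs $\theta^{\Omega(\cdot)}$ relative to the ideal tube. I expect that getting the rate $\mathrm{e}^{-\beta c r_L}$, rather than a polynomial in $r_L$, will need the linear-in-displacement bound $\eta D$ inside excursions, summed carefully. $L_0$ is chosen so that $\gamma r_L$ exceeds the constants $C$ and $M$.
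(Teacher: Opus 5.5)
Your ground-state classification, the $\ell^1$ tubes $\Omega_a$, the equivariance, and the separation argument (via $\sum_i\lVert A^{-i}b\rVert_{\mathrm{L}}\ge\gamma r_L$ for $b\neq 0$) are all correct.

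Your configuration-wise lift is also correct, and it works for \emph{every} $x$, not only when $H(x)<\gamma r_L$. Take $y_j$ a fixed minimal lift of $x_j-Ax_{j+1}$. Closing the loop forces $\sum_j A^j y_j\in M_L$. Hence $e_0:=(I-A^L)^{-1}\sum_j A^j y_j\in\mathbb{Z}^2$, and it generates a genuine periodic $\mathbb{Z}^2$ configuration $e_i$ lifting $x_i-A^{-i}a$, with $a:=x_0-e_0$. Periodic soundness then gives $\Omega_{\text{bn}}\subseteq\lbrace H\ge \eta\gamma r_L/6\rbrace$. This is your ``inverting'' step made precise. As you wrote it, $(I-A^L)\widetilde{x}_0\equiv\sum_j A^j y_j$ mod $M_L$ has its left side in $M_L$, so it is a constraint on $y$, not a formula for $\widetilde{x}_0$.

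The genuine gap is the final estimate $\mu_\beta(\Omega_{\text{bn}})\le C\mathrm{e}^{-\beta c r_L}$, which you do not prove. You expect direct counting to fail at small $\beta$, because you take the entropy at energy $E$ to be $(\text{const})^E$. You replace counting with an excursion argument that is only sketched. Lemma \ref{lem:excursion} is proved for pinned chains in $\mathbb{Z}^2$, not for the Lee-norm ring. And, as you note yourself, that argument naturally yields $\theta^{\Theta(\log r_L)}=r_L^{-O(1)}$ rather than $\mathrm{e}^{-\beta c r_L}$.

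The diagnosis is what goes wrong: the entropy is exponential in the number of bonds $L$, not in $E$. The map $x\mapsto(x_0,y_0,\ldots,y_{L-1})$ is injective and $H(x)=\sum_j\lVert y_j\rVert_1$, so a Chernoff bound gives
\begin{equation*}
\sum_{x:\,H(x)\ge E_0}\mathrm{e}^{-\beta H(x)}\le |G_L|\,\mathrm{e}^{-\beta E_0/2}\sum_{y\in(\mathbb{Z}^2)^L}\mathrm{e}^{-\frac{\beta}{2}\sum_j\lVert y_j\rVert_1}=|G_L|\,\kappa_{\beta/2}^{L}\,\mathrm{e}^{-\beta E_0/2}.
\end{equation*}
The ground states give $Z_\beta\ge|G_L|$. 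Take $E_0=\eta\gamma r_L/6$ and use $r_L=\mathrm{e}^{\Theta(L)}$. Then the factor $\kappa_{\beta/2}^L=\mathrm{e}^{O(L)}$ is swamped by $\mathrm{e}^{-\beta\eta\gamma r_L/12}$ for any fixed $\beta>0$ once $L\ge L_0(\beta)$. This is exactly where the $\beta$-dependence of $L_0$ enters.

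The paper uses the same mechanism. After the lift $\mathrm{tr}\,\widehat{\mathcal{K}}_\beta^L=|G_L|\,\mathrm{tr}\,K_\beta^L$ and Lemma \ref{lem:Kbeta}, periodic soundness bounds the weight of $\lbrace \max_i\lVert x_i\rVert_1\ge R\rbrace$ by $L\kappa_{\beta\eta}^{L-1}\,4R\mathrm{e}^{-\beta\eta R}/(1-\mathrm{e}^{-\beta\eta})^2$ with $R=\gamma r_L/3$, and the double exponential wins.

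With this one estimate inserted, your route becomes a complete proof. It differs from the paper in using an $\ell^1$ tube and a configuration-by-configuration lift, where the paper uses a max-tube and a partition-function-level lift. Your version is somewhat more direct and dispenses with Lemma \ref{lem:Kbeta} entirely.
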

\begin{proof}
Define the following two transfer matrices: for $x,y\in G_L$,
            \begin{subequations}\begin{align}
                \langle y|\mathcal{K}_\beta|x\rangle &:= \mathrm{e}^{-\beta \lVert x-Ay\rVert_{\mathrm{L}}}, 
                 \\
                \langle y|\widehat{\mathcal{K}}_\beta|x\rangle &:= \sum_{z\in M_L} \mathrm{e}^{-\beta \lVert \widetilde{x}-A\widetilde{y} + z\rVert_{1}}.
            \end{align}\end{subequations}
In the second equation, we used any valid lifts $\widetilde{x},\widetilde{y}\in \mathbb{Z}^2$. 
Notice that $\langle x|\widehat{\mathcal{K}}_\beta|y\rangle > \langle x|\mathcal{K}_\beta|y\rangle$.  On the ring with periodic boundary conditions, we have partition function \begin{equation}
    Z_\beta := \mathrm{tr}\left(\mathcal{K}_\beta^L\right).
\end{equation}

The next two lemmas show us that we may approximately replace $\mathcal{K}_\beta$ with $\widehat{\mathcal{K}}_\beta$ when calculating $Z_\beta$, and that the resulting calculation has a very elegant interpretation as a problem on the original state space $\mathbb{Z}^2$.
\begin{lem} \label{lem:Kbeta}
        There exists $L$-independent constant $0<C_0<\infty$ such that: \begin{equation}
        0< \frac{\mathrm{tr}\left(\widehat{\mathcal{K}}_\beta^L\right)}{\mathrm{tr}\left(\mathcal{K}_\beta^L\right)} -  1  < C_0 \mathrm{e}^{-\beta r_L/4}. \label{eq:good-approx}
        \end{equation}
    \end{lem}
    \begin{proof}
        Notice that if \begin{equation}
            |1_{G_L}\rangle := \sum_{x\in G_L} |x\rangle,
        \end{equation}
        then analogous to \eqref{eq:kappa_beta}, we have $\widehat{\mathcal{K}}_\beta |1_{G_L}\rangle = \kappa_\beta |1_{G_L}\rangle $.  Next, we observe that because $M_L \subseteq r_L \mathbb{Z}^2$, for any two distinct points $z_{1,2}\in M_L$, \begin{equation}
            \max\left(\lVert x- z_1\rVert_1, \lVert x- z_2\rVert_1\right) \ge \frac{r_L}{2}\label{eq:rL_big_distance}
        \end{equation}since $\lVert z_1-z_2\rVert_1 \ge r_L$.  Combining \eqref{eq:rL_big_distance} with \eqref{eq:sigma_inequality}, we obtain  \begin{equation}
            \langle x| \left( \widehat{\mathcal{K}}_\beta -  \mathcal{K}_\beta \right)|y\rangle < \frac{4r_L \mathrm{e}^{-\beta (r_L-2)/2}}{(1-\mathrm{e}^{-\beta})^2} < \epsilon_* \mathrm{e}^{-\beta r_L/3}
        \end{equation}
        for some fixed $L$-independent $\epsilon_*$.  Then using the positivity of every matrix element of $\widehat{\mathcal{K}}_\beta$ and $\mathcal{K}_\beta$, we obtain \begin{align}
            0<\mathrm{tr}\left(\widehat{\mathcal{K}}_\beta^L\right) - \mathrm{tr}\left(\mathcal{K}_\beta^L\right) &< L \mathrm{tr} \left( \left(\widehat{\mathcal{K}}_\beta - \mathcal{K}_\beta\right)\widehat{\mathcal{K}}_\beta^{L-1} \right) < L \epsilon_* \mathrm{e}^{-\beta r_L/3} \langle 1_{G_L}| \widehat{\mathcal{K}}_\beta^{L-1}|1_{G_L}\rangle  \notag \\
            &< L |G_L| \epsilon_* \mathrm{e}^{-\beta r_L/3} \kappa_\beta^{L-1 } = \exp\left(\mathrm{\Theta}(L) - \frac{\beta r_L}{3}\right). \label{eq:good-approx-2}
        \end{align}
        Lastly, observe \begin{equation}
         \mathrm{tr}\left(\mathcal{K}_\beta^L\right) > 1 \label{eq:KLbeta1}
    \end{equation}
    because we may consider the sole configuration $x_i=0$ which contributes 1 to the partition function. 
        \eqref{eq:good-approx} follows from combining \eqref{eq:good-approx-2} and \eqref{eq:KLbeta1} with \eqref{eq:r_L},  along with Proposition \ref{prop:size-GL} and elementary bounds on products of exponential functions.
    \end{proof}

    \begin{lem}\label{lem:lift-to-Z2}
        Given transfer matrix $K_\beta$ (Definition \ref{def:transfer-matrix}) for the on-site state space $\mathbb{Z}^2$, \begin{equation}
            \mathrm{tr}\left(\widehat{\mathcal{K}}_\beta^L\right) = |G_L| \cdot  \mathrm{tr}\left(K_\beta^L\right). \label{eq:lift-to-Z2}
        \end{equation}
    \end{lem}
    \begin{proof}
        Write \begin{align}
            \mathrm{tr}\left(\widehat{\mathcal{K}}_\beta^L\right) &= \sum_{y_0,\ldots,y_{L-1}\in G_L}\sum_{z_0,\ldots,z_{L-1}\in M_L} \prod_{i=0}^{L-1} \mathrm{e}^{-\beta \lVert y_i - Ay_{i+1}+z_i\rVert_1} \notag \\
            &= \sum_{y_0,\ldots,y_{L-1}\in G_L}\sum_{z_0,\ldots,z_{L-1}\in M_L} \mathrm{e}^{-\beta \lVert y_0(I-A^L)+(z_0+Az_1+\cdots + A^{L-1}z_{L-1}) - (Ad_1 +\cdots + A^{L-1}d_{L-1}) \rVert_1} \prod_{i=1}^{L-1} \mathrm{e}^{-\beta \lVert d_i\rVert_1} \notag \\
            &= |G_L| \sum_{z_0^\prime \in M_L}\sum_{d_1,\ldots, d_{L-1}\in \mathbb{Z}^2} \mathrm{e}^{-\beta \lVert z_0^\prime  - (Ad_1 +\cdots + A^{L-1}d_{L-1}) \rVert_1} \prod_{i=1}^{L-1} \mathrm{e}^{-\beta \lVert d_i\rVert_1} \label{eq:compare1}
        \end{align}
        with $y_0=y_L$ and $z_0=z_L$ enforcing periodic boundary conditions.  In the second line, we defined $d_i = y_i + z_i - Ay_{i+1}$, which can take arbitrary values in $\mathbb{Z}^2$.   In the third line, we notice that $y_0 (I-A^L)\in M_L$, so we absorb all such factors into a sum over $z_0^\prime \in M_L$ by shifting $z_0$, leaving a free sum over $y_0$ giving overall factor $|G_L|$.   Now,    \begin{align}
            \mathrm{tr}\left(K_\beta^L\right) &= \sum_{x_0 \in \mathbb{Z}^2}\sum_{d_1,\ldots, d_{L-1}\in \mathbb{Z}^2} \mathrm{e}^{-\beta \lVert x_0 - (Ad_1 + \cdots +A^{L-1}d_{L-1}+A^Lx_0)\rVert_1}\prod_{i=1}^{L-1}\mathrm{e}^{-\beta \lVert d_i\rVert_1} \notag \\
            &= \sum_{z_0^{\prime\prime} \in  M_L}\sum_{d_1,\ldots, d_{L-1}\in \mathbb{Z}^2} \mathrm{e}^{-\beta \lVert z_0^{\prime\prime} - (Ad_1 + \cdots +A^{L-1}d_{L-1})\rVert_1}\prod_{i=1}^{L-1}\mathrm{e}^{-\beta \lVert d_i\rVert_1} \label{eq:compare2}
        \end{align}
        using an analogous change of variables as above: $d_i=x_i - Ax_{i+1} \in \mathbb{Z}^2$.  Comparing \eqref{eq:compare1} and \eqref{eq:compare2} we obtain \eqref{eq:lift-to-Z2}.
    \end{proof}

    Lemma \ref{lem:lift-to-Z2} is very suggestive:  intuitively,  $\mathrm{tr}(K_\beta^L)$ should look like the $c=0$ DLR state from Theorem \ref{thm:exist}, because that is the only state that avoids exponentially growing $x_j$ near the boundaries.  And indeed, using a straightforward generalization of the proof of Theorem \ref{thm:sound}, we find that \begin{equation}
        H_{\mathbb{Z}^2, \text{pbc}}(x_0,\ldots, x_{L-1}) \ge \eta \sum_{i=0}^{L-1}\lVert x_i\rVert_1 \label{eq:pbc-sound}
    \end{equation}
    where $H_{\mathbb{Z}^2, \text{pbc}}$ denotes the Hamiltonian lifted to all of $\mathbb{Z}^2$, but maintaining periodic boundary conditions $x_L=x_0$.
    Note there is no constant offset term here because of the periodic boundary conditions.  
    
     Concretely, we now show that large excursions in $\mathrm{tr}(K_\beta^L)$ to points where $\lVert x_i\rVert_1 \sim r_L$ contribute negligibly to the overall sum, which will eventually lead us to the proof of measure concentration.  
    Notice that, analogously to \eqref{eq:sigma_inequality}, \begin{align}
        \sum_{x\in \Omega \; : \;  \max  \lVert x_i\rVert_1\ge R} \mathrm{e}^{-\beta H_{\mathbb{Z}^2,\text{pbc}}(x)} &< \sum_{x\in \Omega \; : \;  \max  \lVert x_i\rVert_1\ge R}  \prod_{i=0}^{L-1}\mathrm{e}^{-\beta \eta \lVert x_i\rVert_1}  \notag \\
        &<L\kappa_{\beta \eta}^{L-1}\sum_{x_0 : \lVert x_0\rVert_1\ge R}\mathrm{e}^{-\beta \eta \lVert x_0\rVert_1} 
        < L\kappa_{\beta \eta}^{L-1}\frac{4R\mathrm{e}^{-\beta \eta R}}{(1-\mathrm{e}^{-\beta \eta})^2}.
    \end{align}
    For $\gamma$ defined in Proposition \ref{prop:GL-middle}, take \begin{equation}\label{eq:define_R}
        R = \frac{1}{3}\gamma r_L = \mathrm{e}^{\mathrm{\Theta}(L)}.
    \end{equation}  For large enough $L$, the double exponential scaling in $L$ will overwhelm everything else, and we conclude that the probability of a large excursion in the lifted problem on $\mathbb{Z}^2$ is negligibly small: for some $L$-independent constant $0<C_1<\infty$, \begin{equation}
        \mathbb{P}_{\mathbb{Z}^2,\text{pbc}}\left[ \max  \lVert x_i\rVert_1\ge R \right] = \frac{1}{\mathrm{tr}(K_\beta^L)}\sum_{x\in \Omega:  \sum \lVert x_i\rVert_1\ge R} \mathrm{e}^{-\beta H_{\mathbb{Z}^2,\text{pbc}}(x)}  < C_1\mathrm{e}^{-\beta \eta R/2}. \label{eq:PZ2pbc}
    \end{equation}
    We used a crude estimate, as in \eqref{eq:KLbeta1}, to show that the overall partition function cannot be too small.

  Consider the subspace of state space \begin{equation}
        \Omega_0 := \left\lbrace x\in (G_L)^L :  \max_i \lVert x_i\rVert_{\mathrm{L}} < R\right\rbrace .
    \end{equation}
    Define $\Omega_a = U_a \cdot \Omega_0$.  Proposition \ref{prop:GL-middle} shows that $x\in \Omega_0$ is its own unique smallest 1-norm lift in $\mathbb{Z}^2$, and that for any two $x_i, x_{i+1}\in \Omega_0$, \begin{equation}
        \lVert x_i - Ax_{i+1}\rVert_{\mathrm{L}} \le \left(1+\lVert A\rVert_{1\rightarrow 1}\right) R = \frac{r_L}{6}.
    \end{equation}
    In contrast, the lifted orbits cannot be so small:
    \begin{lem}
        If $a\ne a^\prime$, $\Omega_a \cap \Omega_{a^\prime}=\emptyset$.  
    \end{lem}
    \begin{proof}
   It suffices to prove this for the case $a^\prime=0$, since if $\Omega_{a-a^\prime} \cap \Omega_0=\emptyset$ (with $a\ne a^\prime$), then $\Omega_a \cap \Omega_{a^\prime} = \emptyset$ too, since $U_a$ is an invertible group action.

 \begin{lem}
     For every $a\in G_L\setminus\lbrace 0\rbrace$,  for at least one $j < L$, \begin{equation}
      \left\lVert A^{-j} a \right\rVert_{\mathrm{L}} \ge \gamma r_L, \label{eq:Aja_big}
  \end{equation}
 \end{lem}
 \begin{proof}
  We prove by contradiction.  Let $v_j \in A^{-j}a+M_L$ be the smallest Lee weight representative of each member of the orbit.  If \eqref{eq:Aja_big} is false, then \begin{equation}
      \lVert v_j- A v_{j+1} \rVert_1 \le  \left(1+\lVert A\rVert_{1\rightarrow 1}\right)\gamma r_L = \frac{r_L}{2}. \label{eq:rL2}
  \end{equation}
Since $\lVert v_j - Av_{j+1}\rVert_{\mathrm{L}}=0$, namely $v_j - Av_{j+1}\in M_L$, combining \eqref{eq:rL2} with Proposition \ref{prop:GL-middle} implies $v_j = Av_{j+1}$.  Hence $v_0 = A^Lv_L = A^Lv_0$ which forces $v_0=0$.  But no $a\ne 0$ lies in the same coset as 0, which is the desired contradiction.
 \end{proof}

  Fix $a$, and let $j$ be one index where \eqref{eq:Aja_big} holds.  Take any $x\in \Omega_0$ and $y\in \Omega_a$.  Observe that \begin{align} \label{eq:yj-different}
      \lVert y_j \rVert_{\mathrm{L}} &= \lVert (y_j- A^{-j}a) +  A^{-j}a \rVert_{\mathrm{L}} \ge \lVert A^{-j}a\rVert_{\mathrm{L}} - \lVert y_j - A^{-j}a\rVert_{\mathrm{L}} > \gamma r_L \left(1 - \frac{1}{3}\right) = 2R. 
  \end{align}
  This ensures that $\Omega_0\cap \Omega_a=\emptyset$.
    \end{proof}

A similar argument to \eqref{eq:yj-different} shows that \begin{align}
      \lVert x_j - y_j \rVert_{\mathrm{L}} &= \lVert x_j - (y_j- A^{-j}a) -  A^{-j}a \rVert_{\mathrm{L}} \ge \lVert A^{-j}a\rVert_{\mathrm{L}} - \lVert x_j\rVert_{\mathrm{L}} - \lVert y_j - A^{-j}a\rVert_{\mathrm{L}} \notag \\
      &> \gamma r_L \left(1 - \frac{1}{3}-\frac{1}{3}\right) = R. 
  \end{align}
  This establishes \eqref{eq:cluster-distance-big}.

  To complete the proof we derive \eqref{eq:mu_Omega_a}. Notice that $\langle x|\mathcal{K}_\beta|y\rangle = \langle x|K_\beta|y\rangle$ if $x,y\in \Omega_0$.  Combining Lemmas \ref{lem:Kbeta} and \ref{lem:lift-to-Z2} with \eqref{eq:PZ2pbc}, \begin{align}
        \mu_\beta(\Omega_0) &= \frac{1}{Z_\beta}\sum_{x_0,\ldots, x_{L-1}\in\Omega_0} \mathrm{e}^{-\beta H} \notag \\
        &\ge  \frac{1}{|G_L| \mathrm{tr}(K_\beta^L) (1+C_0 \mathrm{e}^{-\beta r_L/4})}\sum_{x_0,\ldots, x_{L-1}\in\Omega_0} \mathrm{e}^{-\beta H}   \notag \\
        &\ge  \frac{1}{|G_L|} \frac{1 - C_1 \mathrm{e}^{-\beta \eta \gamma r_L/3}}{1+C_0 \mathrm{e}^{-\beta r_L/4}}
    \end{align}
    which establishes \eqref{eq:mu_Omega_a} for all $\Omega_a$ since $H$, after taking $C = \max(C_0,C_1)$ and $c=\max(\frac{1}{4},\frac{1}{3}\eta \gamma)$.  Hence the unique Gibbs measure, is invariant under the symmetry action of $U_a$.   We can then define the ``bottleneck" set \begin{equation}
        \Omega_{\text{bn}} := \Omega \setminus \bigcup_{a\in G_L}\Omega_a,
    \end{equation}
    and observe that $\mu(\Omega_{\text{bn}}) = \exp(-\beta \exp[\mathrm{\Theta}(L)])$.
\end{proof}

Analogous arguments to the proof of Corollary \ref{cor:deviation} imply that the Gibbs measure $\mu_\beta$ conditioned on $\Omega_a$ is very simple:  each coordinate $x_j \approx A^{-j}a \text{ (mod $M_L$)}$, with large excursions exponentially suppressed; we omit them for brevity.  

The crucial feature of Theorem \ref{thm:finite} is that it provides an independent justification for the claim that the DLR states $\mu^c_\beta$ should be called \textbf{spontaneous symmetry breaking} states, as opposed to explicit symmetry breaking.  The measure concentration phenomenon in the theorem above into disjoint and far-separated clusters related by symmetry is exactly what happens in conventional lattice models, such as the two-dimensional Ising model \cite{Thomas1989IsingMassGap}, in any finite but large system size.  

A dynamical interpretation of the clustering phenomenon \eqref{eq:mu_Omega_a} is as follows: dynamics which changes states by a finite Lee-weight move in each step has an exponentially slow mixing time \cite{leecode1}; this result follows immediately from standard Markov chain mixing time bounds \cite{Levin:2017mixing}.  

\section{Remarks on a continuous helical symmetry}\label{sec:continuous}
In this section, we will briefly comment on the following extension of our model to a continuous state space.   We will again keep the discussion non-rigorous; we expect that similar techniques to the ones we developed above can also be used to analyze this model rigorously, but with mathematical complications due to the continuous state space.  The model is also more analogous to earlier models, e.g. \cite{Niedermaier:2003cq}.

Consider state space $\Omega = (\mathbb{R}^2)^{\mathbb{Z}}$, and pick any hyperbolic symmetric matrix $A\in \mathrm{SL}(2,\mathbb{R})$.   Define the Hamiltonian \begin{equation}
    H = \sum_{n\in \mathbb{Z}} \lVert x_n - Ax_{n+1}\rVert_1,
\end{equation}
with analogous caveats to before: formally $H$ is not-defined, and one must truncate it to finite-sized regions, as usual.  This model now has an enhanced helical symmetry $\mathbb{R}^2\rtimes \mathbb{Z}$ analogous to \eqref{eq:helical-group-action}.

It is easy to generalize Theorem \ref{thm:sound} for this model, so we expect that an analogue to Theorem \ref{thm:exist} is possible, along with an analogue to Corollary \ref{cor:deviation}; however, one needs to work with a more general theory of Markov chains on continuous spaces.\footnote{This may complicate the proof of Theorem \ref{thm:exist} in particular due to the critical role that path coupling on a discrete case played in that proof.}  One interesting physical remark is that, unlike in the case of the discrete helical symmetry $\mathbb{Z}^2\rtimes \mathbb{Z}$, in the helical case, all DLR states that can be constructed analogous to the ones in this paper would be related to each other by symmetry. 

There is a particularly nice physical interpretation of the generalization of Theorem \ref{thm:exist} in this context.  Recently, \cite{lian3} argued that in a model with an ``exponential translationally covariant modulated symmetry" (a.k.a. our helical symmetry), the Goldstone modes for the resulting symmetry are massive.   The ``mass" of the Goldstone is bounded by a soundness guarantee from the natural generalization of Theorem \ref{thm:exist}.   Since the proof of the standard Mermin-Wagner Theorem forbidding spontaneous symmetry breaking in $d\le 2$ spatial dimensions is crucially tied to the massless nature of the Goldstone bosons \cite{Mermin:1966absence,Hohenberg:1967existence}, this provides a more intuitive physical understanding for why symmetry breaking is possible.  Although the breakdown of the Mermin-Wagner Theorem in a setting where the broken symmetry mixes with translation is not too unexpected \cite{Low:2002spacetime,Halperin:2019limitations}, we should stress that multipolar symmetries can mix with translation but also have Mermin-Wagner-type bounds \cite{Stahl:2021sgi,Kapustin:2022fzp}, although there are major differences between lattice and continuum models \cite{Glorioso:2023chm,Jain:2023nbf} with regards to spontaneously broken multipolar symmetries.

\section{Outlook}
We have constructed an uncountable infinity of DLR states for a one-dimensional model with on-site $\mathbb{Z}^2$ state space.   Any one such DLR state is related to countably infinitely many others by a free action of a $\mathbb{Z}^2$ helical flavor symmetry group.  As such, we interpret this model as exhibiting spontaneous symmetry breaking at any temperature.   This is a counterexample to the standard lore that such symmetry breaking is not possible in one dimension, although the infinite state space exploits an obvious loophole in standard theorems \cite{barrysimon}.    In contrast to earlier constructions such as \cite{Kittel:1969zipper,benfatto,Niedermaier:2003cq,Spitzer:1975phase}, interpreting the existence of multiple infinite-volume Gibbs states as ordinary symmetry breaking is more transparent.

It would be fascinating if it is possible to use infinite state spaces to spontaneously break flavor symmetries, that are not helical (linked with translational symmetry), in dimensions in violation of a Peierls argument or Mermin-Wagner Theorem.  At first glance, it seems impossible to do this, especially if -- as in the models in the present paper, and in \cite{Han:2025eiw} -- the Gibbs state marginalized on a single site is normalizable even in the thermodynamic limit.   But we are also not aware of any formal theorem that enforces a particular helical symmetry structure whenever a continuous symmetry is spontaneously broken below three dimensions;  establishing one could be an interesting goal for future work.

It seems straightforward to extend this model to a many-body quantum system. One way to do this is to take the regularized model from Section \ref{sec:finite-size}, and to write \begin{equation}
    H_{\text{quantum}} = \sum_{x\in \Omega} H(x)|x\rangle\langle x| - \Gamma \sum_{i\in \mathbb{Z}_L}\sum_{x\in \Omega} \sum_{y:\lVert y\rVert_1 = 1}|x_{-i}, x_i+y\rangle\langle x_{-i},x_i |. \label{eq:H_quantum}
\end{equation}
In other words, the transverse quantum fluctuations can only flip a single ``spin" at a time, and only by $\pm 1$ in one of the two directions on the on-site coset space $G_L$.   One may also be able to study such a Hamiltonian on the original infinite-dimensional line, using the same language of $C^*$ algebras that can be applied to quantum spin chains on the infinite line \cite{Nachtergaele:2009harmonic,Nachtergaele:2014unbounded}.  

Now return to the model on a finite-dimensional Hilbert space, where we use the regularized model on a chain of length $L$ with periodic boundary conditions.  We expect that, following \cite{Yin:2024jad}, it is straightforward to show that all eigenstates whose energy is low compared to the free-energetic bottleneck separating symmetry sectors are (many-body) Anderson localized, so long as $H_{\text{quantum}}$ is perturbed by some randomness that explicitly breaks the helical symmetry.  The strength of this randomness needed for localization will likely be extremely small. This would be a pathological construction of many-body localization in a local one-dimensional quantum system with an enormous on-site Hilbert space.  It may still be interesting because establishing many-body localization at finite energy density in a one-dimensional spin chain of interacting qubits is, in contrast, a notoriously difficult problem in mathematical physics \cite{Imbrie:2016mbl,deroeck2025absencenormalheatconduction,elgart2026manybodylocalizationrandomxxz}.

 \section*{Acknowledgments}
\addcontentsline{toc}{section}{Acknowledgements}
The conceptual development of this project and the calculations were collaboratively performed with ChatGPT 5.6 Pro.  I wrote the final manuscript without AI assistance and take responsibility for the content.

\begin{appendix}

\makeatletter


\let\@sectioncntformat\@seccntformat

\let\@hangfrom@section\@hang@from


\renewcommand\section{\@ifstar{\AndySectionStar}{\@startsection{section}{1}{\z@}%

  {-3.5ex \@plus -1ex \@minus -.2ex}%

  {2.3ex \@plus .2ex}%

  {\normalfont\Large\bfseries}}}

\renewcommand\subsection{\@ifstar{\AndySubsectionStar}{\@startsection{subsection}{2}{\z@}%

  {-3.25ex\@plus -1ex \@minus -.2ex}%

  {1.5ex \@plus .2ex}%

  {\normalfont\large\bfseries}}}

\renewcommand\subsubsection{\@ifstar{\AndySubsubsectionStar}{\@startsection{subsubsection}{3}{\z@}%

  {-3.25ex\@plus -1ex \@minus -.2ex}%

  {1.5ex \@plus .2ex}%

  {\normalfont\normalsize\bfseries}}}

\makeatother

\renewcommand{\thesubsection}{\thesection.\arabic{subsection}}
\renewcommand{\theequation}{\thesection.\arabic{equation}}
\renewcommand{\thesubsubsection}{\thesubsection.\arabic{subsubsection}}
\makeatletter
\renewcommand\section{\@ifstar{\AndySectionStar}{\@startsection {section}{1}{-\parindent}%
  {-3.5ex \@plus -1ex \@minus -.2ex}%
  {2.3ex \@plus .2ex}%
  {\normalfont\Large\bfseries}}}
\renewcommand\subsection{\@ifstar{\AndySubsectionStar}{\@startsection{subsection}{2}{-\parindent}%
  {-3.25ex\@plus -1ex \@minus -.2ex}%
  {1.5ex \@plus .2ex}%
  {\normalfont\large\bfseries}}}
\renewcommand\subsubsection{\@ifstar{\AndySubsubsectionStar}{\@startsection{subsubsection}{3}{-\parindent}%
  {-3.25ex\@plus -1ex \@minus -.2ex}%
  {1.5ex \@plus .2ex}%
  {\normalfont\normalsize\bfseries}}}
\makeatother

\end{appendix}

\phantomsection
\section*{References}
\addcontentsline{toc}{section}{References}
\makeatletter
\let\bibsection\relax
\makeatother
\bibliography{thebib}
\end{document}